\documentclass{article}
\usepackage{amsmath,amsfonts,amsthm}
\newtheorem{lemma}{Lemma}
\newtheorem{proposition}{Proposition}
\newtheorem{assumption}{Assumption}
\newtheorem{corollary}{Corollary}
\newtheorem{theorem}{Theorem}
\usepackage{color}
\usepackage{graphicx}
\def\alert#1{{\color{red}#1}}
\def\indic{1\hspace{-2.5pt}\mathrm{l}}

\def\esssup_#1{\underset{#1}{\mathrm{ess\,sup\, }}}
\def\essinf_#1{\underset{#1}{\mathrm{ess\,inf\, }}}
\title{Hedging under multiple risk constraints\footnote{This work was
    partially funded by Electricit\'e de France. We thank Marie
    Bernhart for valuable comments on a premiminary version of this paper.}}
\author{Ying Jiao \\ ISFA, Universit\'e Claude Bernard -- Lyon I \\ \texttt{ying.jiao@univ-lyon1.fr}\and
  Olivier Klopfenstein \\ EDF R\&D\\
  \texttt{olivier.klopfenstein@edf.fr} \and Peter Tankov\\LPMA,
  Universit\'e Paris Diderot -- Paris 7\\ \texttt{tankov@math.univ-paris-diderot.fr}}
\date{}
\begin{document}
\maketitle

\begin{abstract}
Motivated by the asset-liability management of a nuclear power plant
operator, we consider the problem of finding the least expensive portfolio, which
outperforms a given set of stochastic benchmarks. For a specified loss function, the expected shortfall with
respect to each of the benchmarks weighted by this loss function must remain bounded by a given threshold. We
consider different alternative formulations of this problem in a complete
market setting, establish
the relationship between these formulations, present a general resolution
methodology via dynamic programming in a non-Markovian context and give explicit solutions in special
cases. 
\end{abstract}

\textit{Key words:} Multiple risk constraints, expected loss, asset liability
management, Snell envelope, dynamic programming.

\section{Introduction}
In various economic contexts, institutions hold assets
to cover future random liabilities. Banks and insurance companies are
required by the authorities to hold regulatory capital to cover the
risks they take. Pension funds face random future liabilities due to
longevity risk and the structure of the pension plans which may
involve variable annuity-type features. The problem of managing a
portfolio of assets under the condition of covering future liabilities
or benchmarks, in particular in the context of pension plans, is
commonly known as asset-liability management (ALM)
\cite{van2007optimal,detemple2008dynamic,martellini2012dynamic}. 

The present study is mainly motivated by the ALM problem of a nuclear
power plant operator. In several countries, energy companies operating nuclear power plants are
required by law to hold decomissioning funds, to cover the future
costs of decontaminating and dismantling the plants, as well as the
treatment and long-term storage of the radioactive waste. In
  France, this obligation was introduced by the Law n$^\circ$ 2006-739 of
  June 28, 2006 relative to sustainable management of radioactive
  substances and waste. According to this law, the three major players of the civil nuclear
industry in France (EDF, AREVA and CEA) must hold portfolios of assets
dedicated to future nuclear decomissioning charges \cite{rcdc.12}. The
value of these dedicated assets must be sufficient to cover the
discounted value of future liabilities. The discount rate is
determined by the operator, but the law stipulates that ``it may not
exceed the rate of return, as anticipated with a high degree of
confidence, of the hedging portfolio, managed with sufficient security
and liquidity''. Computing the highest possible discount rate is therefore
equivalent to finding the cheapest portfolio which covers
the future liabilities with a high degree of confidence. The law does
not define the notions such as ``high degree of confidence'' in a
precise manner, but it is clear that since the future liabilities are
subject to considerable uncertainty, some kind of probabilistic risk measure criterion such as shortfall probability or Value at Risk should be
used.

In other contexts, the regulator stipulates that a specific probabilistic
criterion should be used to measure the potential losses arising from not
being able to cover the liabilities. Basel II framework uses the Value at Risk to determine regulatory
capital for banks. Under the European Solvency II directive, insurance companies are
required to evaluate the amount of capital necessary to cover their
liabilities for a time horizon of one year with a probability of
$99.5\%$. Under such frameworks, companies must therefore hold enough assets
to limit the losses with respect to a random future benchmark in the
sense of a probabilistic loss measure. 

Motivated by these issues, in this paper, we consider the problem of an economic agent, who has
random liabilities payable at a finite set of future dates, with
expected shortfall constraints imposed at each payment date. We consider
two related questions: how to find the least expensive portfolio which allows to
satisfy the constraints (hedge the liabilities) at each date, and
how to determine the relationship between the probabilistic
constraints at different dates. For the latter issue, we propose three different formulations: with the European style constraints, the bound is
imposed on the expected shortfall at each date, computed at $t=0$; in the
time-consistent constraint, the bound is imposed at each payment date on the
next period's shortfall; finally, in the lookback-style constraint, the
bound is imposed on the expectation of the maximal shortfall over all dates.

In the literature, the problem of hedging a single random liability
under a probabilistic constraint has been introduced and studied mainly in the complete
market continuous-time setting in \cite{follmer.leukert.99,follmerleukert}, where
explicit solutions were obtained in the Black-Scholes model. This was
generalized to Markovian contexts in
\cite{bouchard2009stochastic} using stochastic control and viscosity
solutions and later extended to other classes of stochastic target
problems in \cite{bouchard2010optimal,bouchard2012stochastic}. A
related strain of literature deals with portfolio management under
additional constraint of outperforming a stochastic benchmark at a given date
\cite{gundel.weber.07,boyle.tian.07} or
a deterministic benchmark at all future dates
\cite{elkaroui.al.05}. The problem of hedging multiple stochastic
benchmarks at a finite set of future dates, has, however, received
little attention in the literature.

In our paper, motivated by practical concerns that the liability  constraints must be verified at pre-specified given dates, we place ourselves in the discrete-time setting. To find
the least expensive hedging portfolio, we adopt the classical complete market framework, which allows us to reformulate the
problem as the one of finding the smallest discrete-time
supermartingale satisfying a set of stochastic constraints. This can
be seen as an extension of the classical notion of Snell envelope,
which arises in the problem of superhedging an American contingent
claim --- see \cite[Section 6.5]{follmer}. The
market completeness assumption implies that in our context, the randomness of the
future liabilities is mainly determined by ``hedgeable'' risk factors,
such as the market risk, the interest rate risk, the inflation etc.,
rather than by unpredictable random events like changes in the regulatory
framework. 

Under the market completeness assumption, it is easy to find a
strategy for hedging all future liabilities almost surely. The
main difficulty of our problem is  due to the fact that the
constraints are given by expected loss functions and therefore
probabilistic in nature. Furthermore, we are concerned with general discrete-time processes where the Markovian property holds no longer. Our main contribution is to investigate the
interplay between the probabilistic constraints at different dates,
and to characterize the solution via
dynamic programming in a general context. We consider three types of different risk constraints which concern respectively the expected loss at all liability payment dates, the time-consistent conditional loss in a dynamic manner and the maximum of all loss scenarios. For
each of the three constraint styles that we use, we obtain a recursive formula for the least expensive hedging portfolio. Explicit examples are then developed for specific
loss functions.

Our paper is structured as follows. In Section
\ref{problem.sec} we introduce three different formulations of the
multi-objective hedging problem according to the three different
constraint styles. Section \ref{solution.sec} presents a non-Markovian
dynamic programming approach to the solution. Section
\ref{examples2d.sec} provides explicit examples in the case of
two payment dates, and Section \ref{examplesnd.sec} contains some
explicit results for the $n$-dimensional case. Finally, technical
lemmas are proven in the appendix. 


\section{Alternative problem formulations in the complete market
  setting}
\label{problem.sec}
We start with a filtered probability space $(\Omega,\mathcal F,\mathbb
P, \mathbb G := (\mathcal G_t)_{0\leq t \leq T})$, where $\mathcal G_0$ is a trivial
$\sigma$-field and $\mathcal G_T = \mathcal F$. On this space, we consider a financial market model
with a risk-free asset $(X^0_t)_{0\leq t \leq T}$ and $d$ risky assets
$(X^i_t)^{i=1,\dots,d}_{0\leq t \leq T}$. The risky assets are assumed to
be adapted to the filtration $\mathbb G$. Without
loss of generality, we take the risk-free asset to be a constant
process: $X^0_t \equiv 1$. 

We assume that there exists a class of admissible portfolio strategies
which does not need to be made precise at this point. To guarantee absence of arbitrage and
completeness, we make the following assumption: there exists a 
probability $\mathbb Q$ equivalent to $\mathbb P$ such that all admissible
self-financing portfolios are $\mathbb Q$-supermartingales, and 
for any $\mathbb Q$-supermartingale $(M_t)_{0\leq t \leq T}$, there
exists an admissible portfolio $(V_t)_{0\leq t \leq
  T}$, which satisfies $V_t = M_t$ for all $t\in [0,T]$. 

Given a finite sequence of deterministic moments $0=t_0<t_1<\dots <
t_n\leq T$,  we study the problem of an economic agent, who is liable
to make a series of payments $P_1,\dots,P_n$ at
dates $t_1,\dots,t_n$, where for each $i$, $P_i$ is $\mathcal
G_{t_i}$-measurable and satisfies $\mathbb E^{\mathbb Q}[|P_i|]<\infty$. This may model for example the cashflows
associated to a variable annuity insurance contract, or to a long-term
investment project. The portfolio of the agent, whose value is denoted
by $\widetilde V_t$ is effectively used to
make the payments, and therefore satisfies $\widetilde V_{t_i} = \widetilde
V_{t_i-} - P_i$. We assume that the agent has some tolerance for loss,
which means that the negative part of $\widetilde
V_{t_i-} - P_i$ must be bounded in a certain probabilistic sense (to be
made precise later in this section). 

To work with self-financing portfolios we introduce the portfolio
augmented with cumulated cash flows:
$$
V_t = \widetilde V_t + \sum_{i\geq 1: t_i \leq t } P_i
$$
as well as the benchmark process 
\begin{align}
S_t = \sum_{i\geq 1: t_i \leq t } P_i.\label{benchmark.eq}
\end{align}
The agent is therefore interested in finding the cheapest portfolio
process $(V_t)_{0\leq t\leq T}$ which outperforms, in a certain
probabilistic sense, at dates $t_1,\dots,t_n$, the benchmark process
$(S_t)_{0\leq t\leq T}$. Due to our market completeness assumption,
this is equivalent to finding the $\mathbb Q$-supermartigale
$(M_t)_{0\leq t\leq T}$ with the smallest initial value which
dominates the benchmark in a probabilistic sense at dates
$t_1,\dots,t_n$. 

Introduce a discrete filtration $\mathbb F = (\mathcal F_k)_{k=0,1,\dots,n}$
defined by $\mathcal F_k = \mathcal G_{t_k}$. For a $\mathbb G$-supermartingale
$(M_t)_{0\leq t\leq T}$, the discrete-time process
$(N_k)_{k=0,1,\dots, n}$ defined by $N_k = M_{t_k}$ is a $\mathbb
F$-supermartingale, and conversely, from a  $\mathbb
F$-supermartingale $N$ one can easily construct a $\mathbb
G$-supermartingale $M$ which coincides with $N$ at dates
$t_0,\dots,t_n$. Therefore, our problem can be reformulated in the
discrete time setting as the problem of finding a discrete-time
$\mathbb Q$-supermartingale $M$ with respect to the filtration
$\mathbb F$ with the smallest initial value, such that for
$k=1,\dots,n$, $M_k$ dominates in a probabilistic sense the
discrete-time benchmark $S_k$ (we use the same letter for discrete and
continuous-time benchmark). 
 
There are many natural ways of introducing risk tolerance into our
multiperiod hedging problem. In this paper, following
\cite{follmerleukert}, we consider that a bound is imposed on the
expected shortfall, weighted by a loss function $l: \mathbf
R\to \mathbf R$.  Throughout this paper, we will always assume that
the following condition on $l$ is satisfied. 

\begin{assumption}\label{assumption 1}
The function $l: \mathbf
R\to \mathbf R$ is convex, decreasing and bounded from
below. 
\end{assumption}

The above assumptions are natural to describe a loss function. A typical example will be a ``call'' function. In particular, when $l(x)=(-x)^+$,
we take the positive part of the loss and the situations when the liability is hedged will not be penalized. In some cases, we need stronger assumptions to get explicit results. The following Assumption \ref{assumption 2}  allows us to include widely-used utility functions. For example, when $l(x)=e^{-px}-1$ with $p>0$, then Assumption \ref{assumption 2} is satisfied.

\begin{assumption}\label{assumption 2}
The function $l: \mathbf R \to \mathbf R$ is strictly convex,
strictly decreasing, bounded from below and of class $C^1$. In addition, the
derivative $l'(x)$ satisfies the Inada's condition $\lim_{x\to -\infty} l'(x) = -\infty$ and $\lim_{x\rightarrow +\infty}l'(x)=0$.
\end{assumption}

We next describe the three different constraints for our problem as follows.

\paragraph{European-style constraint}
\begin{itemize}
\item[$(EU)$] Find the minimal value of $M_0$ such that there exits a $\mathbb Q$-supermartingale $(M_k)_{k=0}^n$ with 
\begin{align}
\mathbb E^\mathbb P[l(M_k - S_k) ] \leq \alpha_k\ \text{for}\ k=1,\dots,n.\label{consteu}
\end{align}
We denote the set of all $\mathbb Q$-supermartingales satisfying
\eqref{consteu} by $\mathcal M_{EU}$.
\end{itemize}
\paragraph{Time-consistent constraint}
\begin{itemize}
\item[$(TC)$] Find the minimal value of $M_0$ such that there exits a $\mathbb Q$-supermartingale $(M_k)_{k=0}^n$ with 
\begin{align}
\mathbb E^\mathbb P[l(M_k - S_k)|\mathcal F_{k-1} ] \leq \alpha_k,\,\, a.s. \,\ \text{for}\ k=1,\dots,n.\label{consttc}
\end{align}
We denote the set of all $\mathbb Q$-supermartingales satisfying
\eqref{consttc} by $\mathcal M_{TC}$.
\end{itemize}
The time-consistent constraint has an interesting interpretation as an
``American-style'' guarantee. 
\begin{proposition}\label{prop american}
Let $M$ be an $\mathbb F$-adapted process. Then the condition
\begin{equation}\label{constraint dyn}
\mathbb E^\mathbb P[l(M_k-S_k)|\mathcal F_{k-1}]\leq \alpha_k, \, a.s. \quad k=1,\cdots,n
\end{equation} is equivalent to: for all  $\mathbb F$-stopping times
$\tau$ and $\sigma$ taking values in $\{0,1,\dots,n\}$ such that
$\tau\leq \sigma$, 
\begin{equation}\label{constraint America}
\mathbb E^\mathbb P\left[\sum_{i=\tau+1}^\sigma \{l(M_{i}-S_{i})-\alpha_{i}\}\right]\leq 0.
\end{equation}
 \end{proposition}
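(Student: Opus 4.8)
The plan is to prove the two implications separately, writing throughout $Y_i := l(M_i-S_i)-\alpha_i$. Since $l$ is bounded from below by Assumption~\ref{assumption 1}, say $l\ge -C$, each $Y_i$ is bounded below by $-C-\alpha_i$, so every expectation and conditional expectation appearing below is well defined with values in $(-\infty,+\infty]$, and the inequalities we impose will automatically force the relevant quantities to be finite; in particular, taking expectations in \eqref{constraint dyn} shows that each $Y_k$ is integrable.

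For \eqref{constraint dyn} $\Rightarrow$ \eqref{constraint America}: given $\mathbb F$-stopping times $\tau\le\sigma$ valued in $\{0,\dots,n\}$, I would rewrite the random sum as $\sum_{i=\tau+1}^\sigma Y_i=\sum_{i=1}^n Y_i\,\indic_{\{\tau< i\le\sigma\}}$ and observe that $\{\tau< i\le\sigma\}=\{\tau\le i-1\}\cap\{\sigma\le i-1\}^c$ is $\mathcal F_{i-1}$-measurable because $\tau$ and $\sigma$ are $\mathbb F$-stopping times. Conditioning each summand on $\mathcal F_{i-1}$, using $\indic_{\{\tau<i\le\sigma\}}\ge 0$ together with \eqref{constraint dyn}, gives $\mathbb E^\mathbb P[Y_i\indic_{\{\tau<i\le\sigma\}}]=\mathbb E^\mathbb P[\indic_{\{\tau<i\le\sigma\}}\,\mathbb E^\mathbb P[Y_i\mid\mathcal F_{i-1}]]\le 0$, and summing over the finitely many $i$ yields \eqref{constraint America}.

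For the converse, I would fix $k\in\{1,\dots,n\}$ and an arbitrary set $A\in\mathcal F_{k-1}$ and apply \eqref{constraint America} to the one-step stopping times $\tau:=(k-1)\indic_A+n\indic_{A^c}$ and $\sigma:=k\indic_A+n\indic_{A^c}$, checking that these are indeed $\mathbb F$-stopping times with $\tau\le\sigma$ (here one uses $A\in\mathcal F_{k-1}\subseteq\mathcal F_k$) and that $\sum_{i=\tau+1}^\sigma Y_i=Y_k\indic_A$, the sum being empty on $A^c$. Hence \eqref{constraint America} reads $\mathbb E^\mathbb P[Y_k\indic_A]\le 0$ for every $A\in\mathcal F_{k-1}$, and taking $A=\{\mathbb E^\mathbb P[Y_k\mid\mathcal F_{k-1}]>0\}$ forces $\mathbb P(A)=0$, which is exactly \eqref{constraint dyn}.

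There is no deep obstacle here. The only points requiring care are the integrability and measurability bookkeeping, which is handled uniformly by the lower bound on $l$, and, in the converse direction, the choice of the one-step stopping times localised on a set of $\mathcal F_{k-1}$ — this is the device that turns the family of stopping-time inequalities back into a pointwise conditional statement. Everything else is routine manipulation of conditional expectations and finite sums.
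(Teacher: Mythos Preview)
Your proof is correct and follows essentially the same route as the paper's: the paper packages the forward implication as an appeal to Doob's optional sampling for the supermartingale $X_k=\sum_{i\le k}Y_i$, which you simply unwind by decomposing the random sum over the $\mathcal F_{i-1}$-measurable indicators $\indic_{\{\tau<i\le\sigma\}}$, and for the converse the paper uses the one-step stopping times $\sigma=k+1$, $\tau=k\indic_A+(k+1)\indic_{A^c}$ in place of your $\tau=(k-1)\indic_A+n\indic_{A^c}$, $\sigma=k\indic_A+n\indic_{A^c}$, which is the same localisation device. Your explicit bookkeeping of integrability via the lower bound on $l$ is a detail the paper leaves implicit.
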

\begin{proof}Let $X_k=\sum_{i=1}^kl(M_i-S_i)-\alpha_i$,
  $k=0,1,\dots,n$.  The condition
  \eqref{constraint dyn} is equivalent to the fact that the process
  $X$ is a $\mathbb P$-supermartingale. Equation \eqref{constraint
    America} then follows from Doob's theorem. Conversely, assume that
  \eqref{constraint America} holds and let $A\in \mathcal F_k$ for
  $k\in \{0,1,\dots,n-1\}$. Taking $\sigma = k+1$ and $\tau = k
  \mathbf 1_A + (k+1)\mathbf 1_{A^c}$, we get from \eqref{constraint
    America}, 
$$
\mathbb E^{\mathbb P}[(X_{k+1}-X_k)|\mathbf 1_A]\leq 0.
$$
Since $A$ and $k$ are arbitrary, this proves the supermartingale
property of $X$. 
\end{proof}

\paragraph{Lookback-style constraint}
\begin{itemize}
\item[$(LB)$] Find the minimal value of $M_0$ such that there exits a $\mathbb Q$-supermartingale $(M_k)_{k=0}^n$ with 
\begin{align}
\mathbb E^\mathbb P[\max_{k=1,\dots,n}\{l(M_k - S_k)-\alpha_k\} ] \leq 0.\label{constlb}
\end{align}
We denote the set of all $\mathbb Q$-supermartingales satisfying
\eqref{constlb} by $\mathcal M_{LB}$.
\end{itemize}

\paragraph{Relationship between different types of constraints}

For a given family of bounds $(\alpha_1,\cdots,\alpha_n)$, we denote by $V^{EU}_0(\alpha_1,\cdots,\alpha_n)$, $V^{TC}_0(\alpha_1,\cdots,\alpha_n)$ and $V^{LB}_0(\alpha_1,\cdots,\alpha_n)$ (or $V^{EU}_0$,$V^{TC}_0$ and $V^{LB}_0$ when there is no ambiguity) the  infimum value of $M_0$ where
$(M_k)_{k=0}^n$ belongs to, respectively, $\mathcal M_{EU}$, $\mathcal
M_{TC}$ and $\mathcal M_{LB}$. The following proposition shows that 
$$
V^{EU}_0\leq
V^{TC}_0\leq V^{LB}_0.
$$
\begin{proposition}\label{pro:inclusion}
The following inclusion holds true: {$\mathcal M_{LB}\subset \mathcal
M_{TC}\subset \mathcal M_{EU}$.} 
\end{proposition}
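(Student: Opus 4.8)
The plan is to prove the two inclusions separately, working directly from the definitions of $\mathcal M_{EU}$, $\mathcal M_{TC}$ and $\mathcal M_{LB}$.

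The inclusion $\mathcal M_{TC}\subset\mathcal M_{EU}$ is the routine one: if $M$ is a $\mathbb Q$-supermartingale satisfying \eqref{consttc}, then taking $\mathbb P$-expectations and using the tower property gives, for every $k$, $\mathbb E^{\mathbb P}[l(M_k-S_k)]=\mathbb E^{\mathbb P}[\mathbb E^{\mathbb P}[l(M_k-S_k)\mid\mathcal F_{k-1}]]\le\alpha_k$, which is \eqref{consteu}. The only point to note is that, by Assumption \ref{assumption 1}, $l$ is bounded from below, so $l(M_k-S_k)$ is bounded below and all the (conditional) expectations above are well defined, the value $+\infty$ being automatically ruled out on $\mathcal M_{TC}$ by the constraint itself.

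For $\mathcal M_{LB}\subset\mathcal M_{TC}$, I would start from the elementary pointwise bound, valid for each fixed $k\in\{1,\dots,n\}$,
\[
l(M_k-S_k)-\alpha_k\ \le\ \max_{k\le j\le n}\{l(M_j-S_j)-\alpha_j\}\ \le\ \max_{1\le j\le n}\{l(M_j-S_j)-\alpha_j\}\quad\text{a.s.},
\]
and then condition on $\mathcal F_{k-1}$, so that \eqref{consttc} follows as soon as the right-hand side is shown to have non-positive $\mathcal F_{k-1}$-conditional expectation. For $k=1$ this is immediate, since $\mathcal F_0$ is trivial and \eqref{constlb} reads exactly $\mathbb E^{\mathbb P}[\max_{1\le j\le n}\{l(M_j-S_j)-\alpha_j\}]\le0$. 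For $k\ge2$ what one actually needs is $\mathbb E^{\mathbb P}[\max_{k\le j\le n}\{l(M_j-S_j)-\alpha_j\}\mid\mathcal F_{k-1}]\le0$, i.e.\ the ``from date $k$ on'', $\mathcal F_{k-1}$-conditional version of the lookback bound.

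I expect this last point to be the main obstacle: a single time-$0$ bound on the expected running maximum does not by itself control the conditional expectation of the future running maximum, so the intermediate constraints of \eqref{consttc} do not drop out of \eqref{constlb} in its crudest form by the naive route alone. The way I would get around it is to use the lookback constraint in the dynamic (conditional) form just described --- which is in any case the form in which it enters the recursion of Section \ref{solution.sec} --- whereupon the pointwise bound above yields \eqref{consttc} directly; equivalently, one can invoke Proposition \ref{prop american} to rephrase \eqref{consttc} as the $\mathbb P$-supermartingale property of $X_k=\sum_{i=1}^k\{l(M_i-S_i)-\alpha_i\}$ and argue by a backward induction on $k$, the remaining steps (non-negativity of the leftover terms, integrability from Assumption \ref{assumption 1}) being routine.
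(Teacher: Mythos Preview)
Your treatment of $\mathcal M_{TC}\subset\mathcal M_{EU}$ is correct and matches the paper's one-line observation.

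For $\mathcal M_{LB}\subset\mathcal M_{TC}$, the obstacle you flag is not merely technical: the inclusion is \emph{false} as stated, so neither of your proposed workarounds can succeed. Take $n=2$, $\mathcal F_0$ trivial, $\mathcal F_1=\mathcal F_2=\sigma(A)$ with $\mathbb P(A)=\tfrac12$, $\mathbb Q=\mathbb P$, $l(x)=(-x)^+$, $\alpha_1=\alpha_2=10$, $S_1=S_2=0$, $M_0=M_1=0$, and $M_2=-11\,\mathbf 1_A$. Then $(M_k)_{k=0}^2$ is a $\mathbb Q$-supermartingale, and
\[
\max_{k=1,2}\{l(M_k-S_k)-\alpha_k\}=\mathbf 1_A-10\,\mathbf 1_{A^c},
\]
whose expectation is $-\tfrac{9}{2}\le 0$, so $M\in\mathcal M_{LB}$. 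Yet $\mathbb E^{\mathbb P}[l(M_2-S_2)\mid\mathcal F_1]=11$ on $A$, which violates \eqref{consttc} since $\alpha_2=10$; hence $M\notin\mathcal M_{TC}$.

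Your first workaround---replacing \eqref{constlb} by a \emph{conditional} lookback bound---changes the definition of $\mathcal M_{LB}$; the constraint \eqref{constlb} is a single time-$0$ expectation, and nothing in Section~\ref{solution.sec} upgrades it to a conditional inequality. Your second workaround---Proposition~\ref{prop american} plus a backward induction---cannot close the gap either: in the example above, \eqref{constlb} holds but \eqref{constraint America} fails for the stopping times $\tau=\mathbf 1_A+2\,\mathbf 1_{A^c}$ and $\sigma=2$, since
\[
\mathbb E^{\mathbb P}\Big[\sum_{i=\tau+1}^{\sigma}\{l(M_i-S_i)-\alpha_i\}\Big]=\mathbb E^{\mathbb P}[\mathbf 1_A]=\tfrac12>0.
\]
This also shows that the paper's own one-line proof, which simply invokes representation \eqref{constraint America}, is defective: \eqref{constlb} does not imply \eqref{constraint America}. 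What \eqref{constlb} does yield, via the pointwise bound $l(M_k-S_k)-\alpha_k\le\max_j\{l(M_j-S_j)-\alpha_j\}$ and a single unconditional expectation, is only $\mathcal M_{LB}\subset\mathcal M_{EU}$.
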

\begin{proof}
Clearly, $\mathcal M_{TC}\subset \mathcal M_{EU}$. The inclusion {$\mathcal M_{LB}\subset \mathcal
M_{TC}$} follows from the alternative representation \eqref{constraint
  America} for the time-consistent constraint. 
\end{proof}

When $n=1$, the three constraint types coincide. In addition,  the value function
has an explicit form in some particular cases. 
\begin{proposition} \label{n1.prop}Let $n=1$ and assume that $\alpha >
\lim_{x\to +\infty} l(x)$. 

\begin{enumerate}
\item 
Let Assumption \ref{assumption 2} hold true and assume that there exists $y<0$ with $\mathbb
E^{\mathbb P}[l(I(yZ))]<\infty$, where $Z=\frac{d\mathbb Q}{d\mathbb
    P}$.
Then
$$
V^{EU}_0 (\alpha) = V^{TC}_0 (\alpha)= V^{LB}_0 (\alpha)= \mathbb
E^{\mathbb Q}[S_1] + \mathbb E^{\mathbb Q}[I(\lambda^* Z)],
$$
where $I$ is the inverse function of $l'$ and $\lambda^*$ is the unique
solution of $\mathbb E^{\mathbb P}[l(I(\lambda^* Z))] = \alpha$. 
\item
Assume that $\mathbb P = \mathbb Q$. Then,
$$V^{EU}_0(\alpha)  = V^{TC}_0(\alpha)= V^{LB}_0(\alpha) = \mathbb E[S_1] + l^{-1}(\alpha)$$ where $l^{-1}(\alpha) := \inf\{x\in \mathbf R|l(x)\leq \alpha\}$. 
\end{enumerate}
\end{proposition}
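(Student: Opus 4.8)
The plan is to collapse the question to a one-period static optimization and then solve it by Lagrangian duality for part~(1) and by Jensen's inequality for part~(2). When $n=1$ the $\sigma$-field $\mathcal F_0$ is trivial and the maximum in \eqref{constlb} runs over the single index $k=1$, so the three constraints \eqref{consteu}, \eqref{consttc} and \eqref{constlb} all reduce to $\mathbb E^{\mathbb P}[l(M_1-S_1)]\le\alpha$; hence $\mathcal M_{EU}=\mathcal M_{TC}=\mathcal M_{LB}$ and it suffices to compute $V^{EU}_0(\alpha)$. A $\mathbb Q$-supermartingale $(M_0,M_1)$ satisfies $M_0\ge\mathbb E^{\mathbb Q}[M_1]$, and as the constraint bears only on $M_1$ the cheapest admissible $M_0$ attached to a feasible $M_1$ is $M_0=\mathbb E^{\mathbb Q}[M_1]$. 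Writing $H:=M_1-S_1$ and $Z:=\frac{d\mathbb Q}{d\mathbb P}$, and using $\mathbb E^{\mathbb Q}[|S_1|]<\infty$, the problem becomes
$$V^{EU}_0(\alpha)=\mathbb E^{\mathbb Q}[S_1]+\inf\big\{\,\mathbb E^{\mathbb Q}[H]\ :\ H\ \mathcal F_1\text{-measurable and }\mathbb Q\text{-integrable},\ \mathbb E^{\mathbb P}[l(H)]\le\alpha\,\big\},$$
the feasible set being non-empty since $\alpha>\lim_{x\to+\infty}l(x)$ (take $H$ a large enough constant).

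\emph{Part (1).} Under Assumption~\ref{assumption 2}, $l'$ is a continuous strictly increasing bijection from $\mathbf R$ onto $(-\infty,0)$, hence $I=(l')^{-1}$ is a continuous strictly increasing bijection from $(-\infty,0)$ onto $\mathbf R$. The first step is weak duality: for every $\mu>0$ and every feasible $H$,
$$\mathbb E^{\mathbb Q}[H]=\mathbb E^{\mathbb P}[ZH]\ \ge\ \mathbb E^{\mathbb P}[ZH+\mu\,l(H)]-\mu\alpha\ \ge\ \mathbb E^{\mathbb P}\Big[\inf_{h\in\mathbf R}\big(Zh+\mu\,l(h)\big)\Big]-\mu\alpha,$$
the first inequality because $\mu(\mathbb E^{\mathbb P}[l(H)]-\alpha)\le0$, the second being the trivial pointwise bound. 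For fixed $z>0$ the map $h\mapsto zh+\mu\,l(h)$ is strictly convex and coercive — it tends to $+\infty$ at $+\infty$ because $l$ is bounded below, and at $-\infty$ because the Inada condition makes $l$ grow superlinearly there — so it has the unique minimizer $h=I(-z/\mu)$, a measurable function of $z$. Setting $\lambda^*:=-1/\mu<0$ (so that $\lambda^*Z$ lies in the domain of $I$), the right-hand side equals $\mathbb E^{\mathbb Q}[I(\lambda^*Z)]+\mu\big(\mathbb E^{\mathbb P}[l(I(\lambda^*Z))]-\alpha\big)$; choosing $\lambda^*$ with $\mathbb E^{\mathbb P}[l(I(\lambda^*Z))]=\alpha$ turns the lower bound into $\mathbb E^{\mathbb Q}[I(\lambda^*Z)]$. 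For the matching upper bound, $H^*:=I(\lambda^*Z)$ is feasible with the constraint binding: if $\mathbb E^{\mathbb Q}[I(\lambda^*Z)]>-\infty$ it is $\mathbb Q$-integrable and the infimum is attained at $M^*_1=H^*+S_1$, $M^*_0=\mathbb E^{\mathbb Q}[M^*_1]$; if $\mathbb E^{\mathbb Q}[I(\lambda^*Z)]=-\infty$, the truncations $\max(H^*,-m)$ stay feasible (since $l$ is decreasing) with costs tending to $-\infty$. Either way the infimum equals $\mathbb E^{\mathbb Q}[I(\lambda^*Z)]$, which is the claimed formula.

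This rests on a lemma with two parts. First, the existence and uniqueness of $\lambda^*$: the map $\varphi(\lambda):=\mathbb E^{\mathbb P}[l(I(\lambda Z))]$ is finite exactly on an interval $(\lambda_-,0)$, non-empty precisely by the hypothesis $\mathbb E^{\mathbb P}[l(I(yZ))]<\infty$ for some $y<0$, and on it $\varphi$ is continuous and strictly decreasing (strict convexity of $l$, $Z>0$ a.s.), with $\varphi(\lambda)\to\lim_{x\to+\infty}l(x)<\alpha$ as $\lambda\uparrow0$ and $\varphi(\lambda)\to+\infty$ as $\lambda\downarrow\lambda_-$ (the Inada condition at $-\infty$ forces $l(-\infty)=+\infty$, and monotone/dominated convergence handle the passages to the limit); hence $\varphi$ hits the level $\alpha$ at exactly one $\lambda^*$. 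Second, the dichotomy used above: the subgradient inequality $l'(x)x\ge l(x)-l(0)$, applied at $x=I(\lambda^*Z)$ where $l'(x)=\lambda^*Z$, yields $ZI(\lambda^*Z)\le\frac1{\lambda^*}\big(l(I(\lambda^*Z))-l(0)\big)$ and thus $\mathbb E^{\mathbb Q}[(I(\lambda^*Z))^+]<\infty$, so $\mathbb E^{\mathbb Q}[I(\lambda^*Z)]$ is well defined in $[-\infty,+\infty)$ and the two cases above are exhaustive.

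\emph{Part (2).} When $\mathbb P=\mathbb Q$ we have $Z\equiv1$, so one must compute $\inf\{\mathbb E[H]:\mathbb E[l(H)]\le\alpha\}$. By Jensen's inequality and convexity of $l$, any feasible $H$ satisfies $l(\mathbb E[H])\le\mathbb E[l(H)]\le\alpha$, hence $\mathbb E[H]\in\{x\in\mathbf R:l(x)\le\alpha\}$ and $\mathbb E[H]\ge\inf\{x:l(x)\le\alpha\}=l^{-1}(\alpha)$ — the lower bound. For the matching upper bound take the deterministic $H\equiv l^{-1}(\alpha)$, finite because $\alpha>\lim_{x\to+\infty}l(x)$ (in the degenerate case $l^{-1}(\alpha)=-\infty$ one uses constants tending to $-\infty$): $l$ being continuous (convex and finite on $\mathbf R$) and decreasing, $l(l^{-1}(\alpha))=\alpha$, so this $H$ is feasible and attains $\mathbb E[H]=l^{-1}(\alpha)$. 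Therefore the infimum is $l^{-1}(\alpha)$ and $V^{EU}_0(\alpha)=\mathbb E[S_1]+l^{-1}(\alpha)$; this is consistent with part~(1) when both hypotheses hold, since then $\mathbb E^{\mathbb P}[l(I(\lambda^*))]=\alpha$ forces $I(\lambda^*)=l^{-1}(\alpha)$. The only real obstacle I foresee is the lemma of part~(1) — the existence and uniqueness of $\lambda^*$ and the attendant limit-interchange bookkeeping; the weak-duality inequality, the pointwise minimization and the whole of part~(2) are routine.
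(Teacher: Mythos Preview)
Your proof is correct and follows essentially the same route as the paper: Lagrangian/Legendre duality for part~(1) and Jensen's inequality for part~(2). The only difference is presentational---the paper defers part~(1) to the later Proposition~\ref{prop:value function time consistent} (whose Step~2 is the Legendre-transform version of your weak-duality argument), whereas you give a self-contained direct proof, including the existence and uniqueness of $\lambda^*$ which the paper handles in Step~1 of that proposition.
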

\begin{proof}
The first part is a particular case of Proposition \label{prop:value
  function time consistent} which will be shown below. This result is
also very
similar to the solution of the classical concave utility maximization problem for
which we refer the reader for
example to \cite[Theorem 2.0]{kramkov1999asymptotic}. The second part is obtained by using the Jensen's inequality.
\end{proof}

The following proposition is a natural generalization of the second case of  Proposition \ref{n1.prop} and describes another situation when the three
value functions coincide. However, we need an extra hypothesis \eqref{assterm.eq} for $n\geq 2$.
\begin{proposition}
Assume  that $\mathbb P=\mathbb Q$, $\alpha_k >
\lim_{x\to +\infty} l(x)$ for $k\in \{1,\dots,n\}$  and
\begin{align}
\mathbb E[S_n + l^{-1} (\alpha_n)|\mathcal F_k] \geq S_k + l^{-1}(\alpha_k)\label{assterm.eq}
\end{align}
 for any $k\in\{1,\ldots,n\}$, where $l^{-1}(\alpha) := \inf\{x\in \mathbf R|l(x)\leq \alpha\}$.
Then,
\begin{equation*}
V^{EU}_0(\alpha_1,\dots,\alpha_n)=V^{TC}_0(\alpha_1,\dots,\alpha_n)=V^{LB}_0(\alpha_1,\dots,\alpha_n)=\mathbb E[S_n]+l^{-1}(\alpha_n).
\end{equation*}
\end{proposition}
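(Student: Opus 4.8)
The plan is to prove the chain of equalities by bracketing: since Proposition \ref{pro:inclusion} already gives $V^{EU}_0 \le V^{TC}_0 \le V^{LB}_0$, it suffices to exhibit a single $\mathbb Q$-supermartingale lying in $\mathcal M_{LB}$ with initial value $\mathbb E[S_n]+l^{-1}(\alpha_n)$ (providing the upper bound $V^{LB}_0 \le \mathbb E[S_n]+l^{-1}(\alpha_n)$), and separately to prove the matching lower bound $V^{EU}_0 \ge \mathbb E[S_n]+l^{-1}(\alpha_n)$ for the weakest constraint. Throughout, $\mathbb P = \mathbb Q$, so supermartingale means $\mathbb P$-supermartingale, and I will write $c := l^{-1}(\alpha_n)$ and $c_k := l^{-1}(\alpha_k)$; note $l(c_k) \le \alpha_k$ and, since $\alpha_k > \lim_{x\to+\infty} l(x)$, $c_k$ is finite.

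For the upper bound, the natural candidate is the martingale $M_k := \mathbb E[S_n + c \mid \mathcal F_k]$, which has $M_0 = \mathbb E[S_n] + c$. It is trivially a $\mathbb Q$-supermartingale (indeed a martingale). To check $M \in \mathcal M_{LB}$, I must verify $\mathbb E[\max_{k=1,\dots,n}\{l(M_k - S_k) - \alpha_k\}] \le 0$. The key observation is hypothesis \eqref{assterm.eq}: $M_k - S_k = \mathbb E[S_n + c \mid \mathcal F_k] - S_k \ge c_k$ a.s. for each $k$. Since $l$ is decreasing, $l(M_k - S_k) \le l(c_k) \le \alpha_k$, hence $l(M_k - S_k) - \alpha_k \le 0$ pointwise for every $k$, so the maximum over $k$ is $\le 0$ a.s., and its expectation is $\le 0$. (One should note $M_n - S_n = c$ exactly, so at $k=n$ we have $l(M_n-S_n) - \alpha_n = l(c) - \alpha_n \le 0$; this term is why the bound is tight.) This shows $V^{LB}_0 \le \mathbb E[S_n] + c$.

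For the lower bound, let $M$ be any $\mathbb Q$-supermartingale in $\mathcal M_{EU}$; I want $M_0 \ge \mathbb E[S_n] + c$. From the European constraint at $k=n$, $\mathbb E[l(M_n - S_n)] \le \alpha_n$. By Jensen's inequality applied to the convex decreasing function $l$, $l(\mathbb E[M_n - S_n]) \le \mathbb E[l(M_n-S_n)] \le \alpha_n$, so $\mathbb E[M_n] - \mathbb E[S_n] = \mathbb E[M_n - S_n] \ge \inf\{x : l(x) \le \alpha_n\} = c$ (using that $l$ is decreasing so $\{l \le \alpha_n\}$ is a right half-line). Finally, the supermartingale property of $M$ gives $M_0 = \mathbb E[M_0] \ge \mathbb E[M_n]$, whence $M_0 \ge \mathbb E[M_n] \ge \mathbb E[S_n] + c$. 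Taking the infimum over $M \in \mathcal M_{EU}$ yields $V^{EU}_0 \ge \mathbb E[S_n] + c$, and combined with the previous paragraph and Proposition \ref{pro:inclusion} all three value functions equal $\mathbb E[S_n] + l^{-1}(\alpha_n)$.

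The main obstacle, such as it is, is purely a matter of care rather than depth: one must confirm that $c_k = l^{-1}(\alpha_k)$ is finite (guaranteed by $\alpha_k > \lim_{x\to+\infty} l(x)$ together with $l$ bounded below and decreasing), that $l(l^{-1}(\alpha_k)) \le \alpha_k$ (which holds by continuity of the convex function $l$, or directly from the infimum definition and monotonicity), and that the Jensen step is licit, i.e. $M_n - S_n$ is integrable — this follows since $M$ is a $\mathbb Q$-supermartingale with finite $M_0$ (hence $M_n \in L^1$) and $\mathbb E^{\mathbb Q}[|P_i|] < \infty$ gives $S_n \in L^1$. No Markovian or regularity structure is needed; the argument is a clean sandwiching using only Jensen, monotonicity of $l$, and the supermartingale/martingale properties.
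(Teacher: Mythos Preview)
Your proof is correct and follows essentially the same approach as the paper: the same martingale $M_k=\mathbb E[S_n+l^{-1}(\alpha_n)\mid\mathcal F_k]$ is used to obtain the upper bound $V^{LB}_0\le\mathbb E[S_n]+l^{-1}(\alpha_n)$ via assumption \eqref{assterm.eq}, and the lower bound on $V^{EU}_0$ comes from the terminal constraint and Jensen's inequality, after which Proposition~\ref{pro:inclusion} closes the sandwich. The only difference is that the paper invokes Proposition~\ref{n1.prop} for the lower bound, whereas you spell out the underlying Jensen argument directly.
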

Assumption \eqref{assterm.eq} is implied for example by the
following assumptions.
\begin{itemize}
\item The process $(S_k+l^{-1}(\alpha_k))_{k=0}^n$ is a submartingale.
\item The process $S$ is a submartingale and $\alpha_k \geq \alpha_n$
  for $0\leq k \leq n-1$. 
\end{itemize}
\begin{proof}
From Proposition \ref{n1.prop}, by removing all constraints except the
terminal one, we get
$$
V^{EU}_0(\alpha_1,\dots,\alpha_n)\geq\mathbb E[S_n]+l^{-1}(\alpha_n).
$$
To show the reverse inequality, let 
$$
M_k = \mathbb E[S_n + l^{-1}(\alpha_n)|\mathcal F_k].
$$
By assumption \eqref{assterm.eq}, 
$M_k \geq S_k + l^{-1}(\alpha_k)$, which means that $M\in \mathcal
M_{LB}$. Therefore,
$$
V^{LB}_0(\alpha_1,\dots,\alpha_n)\leq M_0 = \mathbb E[S_n]+l^{-1}(\alpha_n).
$$
The proof is completed by applying Proposition \ref{pro:inclusion}. 
\end{proof}


\section{Solution via dynamic programming}
\label{solution.sec}
In this section, we solve our optimization problem for the three types
of constraints. The main method consists of using the dynamic
programming principle  where the constraints are to be verified at
each payment date.  

\subsection{European-style constraint}

We begin by the European-style constraint which consists of a family of expectation constraints.
Denote by $V_0(\alpha_1,\ldots,\alpha_n)$ the  infimum value of $M_0$ where $(M_k)_{k=0}^n\in\mathcal M_{EU}$. The following result characterizes $V_0(\alpha_1,\ldots,\alpha_n)$ by using a family of $\mathbb P$-supermartingales, each of which corresponds to one expectation constraint.

\begin{proposition}\label{pro: first condition V0}
$V_0(\alpha_1,\ldots,\alpha_n)$ equals the infimum value of $M_0$ such that there exist a $\mathbb Q$-supermatingale $(M_t)_{t=0}^n$ and a family of $\mathbb P$-supermartingales $(N^k_t)_{t=0}^k$, $k\in\{1,\ldots,n\}$, satisfying $N_0^k=\alpha_k$ and $N_k^k\geq l(M_k-S_k)$.
\end{proposition}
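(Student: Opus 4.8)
The plan is to show that the set of initial values $M_0$ attainable by some $(M_t)_{t=0}^n \in \mathcal{M}_{EU}$ coincides with the set of $M_0$ for which an auxiliary family $(N^k)_{k=1}^n$ as in the statement exists; equality of the two infima then follows immediately. First I would treat the easy inclusion: if $(M_t)$ is a $\mathbb{Q}$-supermartingale and, for each $k$, $(N^k_t)_{t=0}^k$ is a $\mathbb{P}$-supermartingale with $N_0^k = \alpha_k$ and $N_k^k \geq l(M_k - S_k)$, then taking $\mathbb{P}$-expectations and using the supermartingale property of $N^k$ gives
\[
\mathbb{E}^{\mathbb{P}}[l(M_k - S_k)] \leq \mathbb{E}^{\mathbb{P}}[N_k^k] \leq N_0^k = \alpha_k, \qquad k = 1,\dots,n,
\]
so that $(M_t) \in \mathcal{M}_{EU}$; hence the auxiliary infimum is at least $V_0(\alpha_1,\dots,\alpha_n)$.

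For the converse I would start from an arbitrary $(M_t) \in \mathcal{M}_{EU}$. Assumption \ref{assumption 1} gives a lower bound $l \geq c$ for some constant $c$, so $l(M_k - S_k) - c \geq 0$, and combined with $\mathbb{E}^{\mathbb{P}}[l(M_k - S_k)] \leq \alpha_k < \infty$ this yields $l(M_k - S_k) \in L^1(\mathbb{P})$. I would then set, for $t = 0,\dots,k$,
\[
N^k_t := \mathbb{E}^{\mathbb{P}}\bigl[\, l(M_k - S_k) \mid \mathcal{F}_t \,\bigr] + \bigl(\alpha_k - \mathbb{E}^{\mathbb{P}}[l(M_k - S_k)]\bigr),
\]
which is an $\mathbb{F}$-adapted, $\mathbb{P}$-integrable process, a $\mathbb{P}$-martingale (a conditional expectation of an integrable variable shifted by a constant) and hence a $\mathbb{P}$-supermartingale, with $N^k_0 = \alpha_k$; moreover, since $(M_t) \in \mathcal{M}_{EU}$ the added constant is nonnegative, so $N^k_k = l(M_k - S_k) + (\alpha_k - \mathbb{E}^{\mathbb{P}}[l(M_k - S_k)]) \geq l(M_k - S_k)$. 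Thus $(M_t)$ together with $(N^k)_{k=1}^n$ is admissible for the auxiliary problem with the same initial value $M_0$, giving the reverse inequality and finishing the proof.

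I do not expect a genuine obstacle here: the statement is essentially a rephrasing of the constraint $\mathbb{E}^{\mathbb{P}}[l(M_k - S_k)] \leq \alpha_k$ via the tower property and the definition of a supermartingale. The two points that will need a word of care are the $\mathbb{P}$-integrability of $l(M_k - S_k)$, which uses the lower bound on $l$ from Assumption \ref{assumption 1} together with the constraint, and the necessity of adding the nonnegative constant $\alpha_k - \mathbb{E}^{\mathbb{P}}[l(M_k - S_k)]$ so that $N^k_0$ equals $\alpha_k$ exactly rather than merely dominating $\mathbb{E}^{\mathbb{P}}[l(M_k - S_k)]$. The interest of the reformulation, which the remainder of the section relies on, is that it decouples the $n$ global expectation constraints into $n$ independent supermartingale conditions, a form that can be handled by a backward dynamic programming recursion.
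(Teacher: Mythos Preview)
Your proof is correct and follows essentially the same approach as the paper: both directions are handled by the same supermartingale/tower-property argument, and the converse is obtained by building each $N^k$ from the conditional expectations $\mathbb{E}^{\mathbb{P}}[l(M_k-S_k)\mid\mathcal F_t]$. The only cosmetic difference is that the paper sets $N^k_0=\alpha_k$ and $N^k_t=\mathbb{E}^{\mathbb{P}}[l(M_k-S_k)\mid\mathcal F_t]$ for $t\geq 1$ (a supermartingale with a single drop at the first step), whereas you add the nonnegative constant $\alpha_k-\mathbb{E}^{\mathbb{P}}[l(M_k-S_k)]$ to the whole martingale; both constructions are equally valid.
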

\begin{proof}
Denote by $\widetilde{\mathcal M}_{EU}$ the set of all
$((M_t)_{t=0}^n,(N^k_t)_{t=0}^k,\;k\in\{1,\ldots,n\})$ which  satisfy
the conditions in the proposition and by $\widetilde V_0(\alpha_1,\ldots,\alpha_n)$ the infimum value of the related $M_0$. If $((M_t)_{t=0}^n,(N^k_t)_{t=0,\cdots, k}^{k=1,\cdots,n})\in\widetilde{\mathcal M}_{EU}$, then  
\[\alpha_k=N_0^k\geq\mathbb E^{\mathbb P}[N_k^k]\geq\mathbb E^{\mathbb P}[l(M_k-S_k)].\]
Therefore $(M_k)_{k=0}^n\in\mathcal M_{EU}$, which implies $\widetilde
V_0(\alpha_1,\ldots,\alpha_n)\geq
V_0(\alpha_1,\ldots,\alpha_n)$. Conversely, given a $\mathbb
Q$-supermartingale $(M_k)_{k=0}^n\in\mathcal M_{EU}$, we construct
processes $(N^k_t)_{t=0}^k$ such that $N_0^k=\alpha_k$ and
$N_t^k=\mathbb E[l(M_k-S_k)|\mathcal F_t]$ for
$t\in\{1,\ldots,k\}$. The condition $\mathbb E^{\mathbb
  P}[l(M_k-S_k)]\leq\alpha_k$ implies that $(N^k_t)_{t=0}^k$ is a
$\mathbb P$-supermartingale. It follows that the inverse inequality
$\widetilde V_0(\alpha_1,\ldots,\alpha_n)\leq
V_0(\alpha_1,\ldots,\alpha_n)$ holds true as well.
\end{proof}

For the dynamic solution, we will introduce the value-function process
$V_k$ for $k\in\{1,\ldots,n\}$. Note that, instead of a family of
real-valued bounds $(\alpha_1,\cdots,\alpha_n)$ for $V_0$, the value
function $V_k$ will take as parameters a family of random variables
which describe the evolution of the successive bounds on
the expected shortfall. 

For $\mathcal F_k$-measurable random variables $N^{k+1},\ldots,N^n$, let $V_k(N^{k+1},\ldots,N^n)$ be the essential infimum of all  $M_k\in \mathcal F_k$ such that
there exists a $\mathbb Q$-supermartingale $(M_t)_{t=k}^n$ with $\mathbb E^{\mathbb P}[l(M_t-S_t)|\mathcal F_k]\leq N^t$ for any $t\in\{k+1,\ldots,n\}$. We denote by $\mathcal M_{EU,k}(N^{k+1},\ldots,N^n)$ the set of all such $\mathbb Q$-supermartingales which start from $k$.
By convention, let $V_n=-\infty$ and $\mathcal M_{EU,n}$ coincide with the set of all $\mathcal F_n$-measurable random variables.  We have in particular
\begin{align*}
V_{n-1}(N^n) &= \text{essinf}_{M_n\in\mathcal F_n}\{\mathbb E^{\mathbb Q}[M_n|\mathcal F_{n-1}]: \mathbb E^{\mathbb P}[l(M_n-S_n)|\mathcal F_{n-1}]\leq N^n\}\\
& = \mathbb E^{\mathbb Q}[S_n|\mathcal F_{n-1}] + \text{essinf}_{M\in\mathcal
  F_n}\{\mathbb E^{\mathbb Q}[M|\mathcal F_{n-1}]: \mathbb E^{\mathbb P}[l(M)|\mathcal
F_{n-1}]\leq N^n\}.
\end{align*}
In fact, if $(M_{n-1},M_n)$ is a $\mathbb Q$-supermartingale such that $\mathbb E^{\mathbb P}[l(M_n-S_n)|\mathcal F_{n-1}]\leq N^n$, then $(\mathbb E^{\mathbb Q}[M_n|\mathcal F_{n-1}],M_n)$ is also a $\mathbb Q$-martingale satisfying this condition and one has $M_{n-1}\geq\mathbb E^{\mathbb Q}[M_n|\mathcal F_{n-1}]$. 

Similarly to Proposition \ref{pro: first condition V0}, we can prove that $V_k(N^{k+1},\ldots,N^n)$ equals the essential infimum of $M_k\in \mathcal F_k$ such that there exists a $\mathbb Q$-supermartingale $(M_t)_{t=k}^n$ together with a family of $\mathbb P$-supermartingales $(N^{k+1}_t)_{t=k}^{k+1},\cdots, (N^{n}_t)_{t=k}^{n} $ such that $N_k^t=N^t$ and that $N_t^t\geq l(M_t-S_t)$ for any $t\in\{k+1,\ldots,n\}$. 

\begin{lemma}\label{Lem:filtrant}
Let $(M_t)_{t=k}^n,(M'_t)_{t=k}^n\in\mathcal M_{EU,k}(N^{k+1},\ldots,N^n)$, then there exists a supermartingale $(M_t'')_{t=k}^n\in\mathcal M_{EU,k}(N^{k+1},\ldots,N^n)$ such that $M_k''=M_k\wedge M_k'$.
\end{lemma}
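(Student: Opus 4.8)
The plan is to use the classical \emph{pasting} (or \emph{switching}) construction. Let $A:=\{M_k\le M_k'\}$, which belongs to $\mathcal F_k$ because $M_k$ and $M_k'$ are $\mathcal F_k$-measurable, and set
\[
M_t'':=\mathbf 1_A\,M_t+\mathbf 1_{A^c}\,M_t',\qquad t=k,\ldots,n.
\]
Then $M_k''=M_k\wedge M_k'$ by construction, the process $M''$ is $\mathbb F$-adapted (the indicators are $\mathcal F_k$-measurable, hence $\mathcal F_t$-measurable for every $t\ge k$), and $\mathbb E^{\mathbb Q}[|M_t''|]<\infty$ since $|M_t''|\le|M_t|+|M_t'|$. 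So the only things left to check are the $\mathbb Q$-supermartingale property and the conditional shortfall constraints.

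For the supermartingale property, fix $k\le t<n$. Since $\mathbf 1_A$ is $\mathcal F_t$-measurable,
\[
\mathbb E^{\mathbb Q}[M_{t+1}''\mid\mathcal F_t]=\mathbf 1_A\,\mathbb E^{\mathbb Q}[M_{t+1}\mid\mathcal F_t]+\mathbf 1_{A^c}\,\mathbb E^{\mathbb Q}[M_{t+1}'\mid\mathcal F_t]\le\mathbf 1_A M_t+\mathbf 1_{A^c}M_t'=M_t'',
\]
using that $M$ and $M'$ are $\mathbb Q$-supermartingales. For the constraints, fix $t\in\{k+1,\ldots,n\}$. Because $A\in\mathcal F_k$ one has the pathwise identity $M_t''-S_t=\mathbf 1_A(M_t-S_t)+\mathbf 1_{A^c}(M_t'-S_t)$, and since on each $\omega$ exactly one of the two branches is active, $l(M_t''-S_t)=\mathbf 1_A\,l(M_t-S_t)+\mathbf 1_{A^c}\,l(M_t'-S_t)$. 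Taking $\mathbb E^{\mathbb P}[\,\cdot\mid\mathcal F_k]$ and pulling out the $\mathcal F_k$-measurable indicators,
\[
\mathbb E^{\mathbb P}[l(M_t''-S_t)\mid\mathcal F_k]=\mathbf 1_A\,\mathbb E^{\mathbb P}[l(M_t-S_t)\mid\mathcal F_k]+\mathbf 1_{A^c}\,\mathbb E^{\mathbb P}[l(M_t'-S_t)\mid\mathcal F_k]\le\mathbf 1_A N^t+\mathbf 1_{A^c}N^t=N^t.
\]
Hence $(M_t'')_{t=k}^n\in\mathcal M_{EU,k}(N^{k+1},\ldots,N^n)$, which is the assertion; integrability of $l(M_t''-S_t)$ under $\mathbb P$ is inherited from the same identity together with $l$ being bounded below.

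There is no real obstacle here: the argument is entirely elementary, and the only points demanding attention are bookkeeping ones, namely that the switching set $A$ must be taken in $\mathcal F_k$ (so that its indicator can be extracted from every later conditional expectation, under both $\mathbb P$ and $\mathbb Q$ indifferently), and that $l$ distributes over the indicator decomposition precisely because on each $\omega$ only one branch contributes. I would close by noting that this lemma says exactly that $\mathcal M_{EU,k}(N^{k+1},\ldots,N^n)$ is directed downward, which is what makes the essential infimum defining $V_k$ attainable along a decreasing sequence and is the property needed for the dynamic-programming recursion that follows.
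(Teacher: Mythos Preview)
Your proof is correct and follows essentially the same pasting argument as the paper: define the $\mathcal F_k$-measurable switching set, glue the two processes accordingly, and check the supermartingale property and the conditional constraints by pulling the indicator out of the respective conditional expectations. The only difference is cosmetic (the paper takes $A=\{M_k\ge M_k'\}$ and swaps the roles of $M$ and $M'$), and your version is in fact slightly more explicit about integrability and adaptedness.
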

\begin{proof}
By definition, $\mathbb E^{\mathbb P}[l(M_t-S_t)|\mathcal F_k]\leq N^t$ and $\mathbb E^{\mathbb P}[l(M_t'-S_t)|\mathcal F_k]\leq N^t$ for any $t\in\{k+1,\ldots,n\}$. Let $A$ be the set $\{M_k\geq M_k'\}$, which is $\mathcal F_k$-measurable. For any $t\in\{k,\ldots,n\}$, let $M_t''=\indic_AM_t'+\indic_{A^c}M_t$. Since $A\in \mathcal F_k$,  $(M_t'')_{t=k}^n$ is a $\mathbb Q$-supermartingale. Moreover, $M_k''=\min(M_k,M_k')$ and
\[\mathbb E^{\mathbb P}[l(M_t''-S_t)|\mathcal F_k]=
\indic_A\mathbb E^{\mathbb P}[l(M_t'-S_t)|\mathcal F_k]+\indic_{A^c}\mathbb E^{\mathbb P}[l(M_t-S_t)|\mathcal F_k]\leq N^t,\]
therefore $(M_t'')_{t=k}^n\in\mathcal M_{EU,k}(N^{k+1},\ldots,N^n)$.
\end{proof}

The following result characterizes the value-function process in a backward and recursive form.

\begin{theorem}\label{EU main result}
$V_k(N^{k+1},\ldots,N^n)$ equals the essential infimum of all $M_k\in\mathcal F_k$ such that there exist  $M_{k+1}\in\mathcal F_{k+1}$ and a family of $\mathbb P$-supermartingales $(N^{k+1}_t)_{t=k}^{k+1}$, $\cdots$, $(N^{n}_t)_{t=k}^{n} $ which satisfy the following conditions~: 
\begin{equation}\label{Equ:stoprog}\begin{cases}
\mathbb E^{\mathbb Q}[M_{k+1}|\mathcal F_k]=M_k,\\ 
N_k^i=N^i \text{ for  $i\in\{k+1,\ldots,n\}$},\\ 
N_{k+1}^{k+1}\geq l(M_{k+1}-S_{k+1}),\\
M_{k+1}\geq V_{k+1}(N_{k+1}^{k+2},\ldots,N_{k+1}^n).
\end{cases}\end{equation}
\end{theorem}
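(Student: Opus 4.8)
The plan is to establish the two inequalities between $V_k(N^{k+1},\ldots,N^n)$ and the essential infimum on the right-hand side, which I write $\widehat V_k(N^{k+1},\ldots,N^n)$, using throughout the equivalent description of $V_k$ via auxiliary $\mathbb P$-supermartingales stated right before Lemma~\ref{Lem:filtrant} (a $\mathbb Q$-supermartingale $(M_t)_{t=k}^n$ together with $\mathbb P$-supermartingales $(N^i_t)_{t=k}^i$, $i\in\{k+1,\ldots,n\}$, with $N^i_k=N^i$ and $N^i_i\ge l(M_i-S_i)$). In each direction I show that an $M_k$ admissible for one problem produces an $M_k'\le M_k$ admissible for the other and then pass to essential infima. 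For $\widehat V_k\le V_k$: given $(M_t)_{t=k}^n$ and $(N^i_t)$ witnessing $M_k$, put $M_k':=\mathbb E^{\mathbb Q}[M_{k+1}\mid\mathcal F_k]\le M_k$; the first three conditions in \eqref{Equ:stoprog} hold by construction, and the last one holds because the restrictions $(M_t)_{t=k+1}^n$ and $(N^i_t)_{t=k+1}^i$ ($i\ge k+2$) witness $M_{k+1}$ as an admissible starting value for $V_{k+1}(N^{k+2}_{k+1},\ldots,N^n_{k+1})$, so $M_{k+1}\ge V_{k+1}(N^{k+2}_{k+1},\ldots,N^n_{k+1})$.

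For the converse $V_k\le\widehat V_k$ I start from $M_k$, $M_{k+1}$ and $\mathbb P$-supermartingales $(N^i_t)_{t=k}^i$ satisfying \eqref{Equ:stoprog}, and the core task is to convert $M_{k+1}\ge V_{k+1}(N^{k+2}_{k+1},\ldots,N^n_{k+1})$ into an explicit admissible continuation from time $k+1$: a $\mathbb Q$-supermartingale $(\widetilde M_t)_{t=k+1}^n$ with $\widetilde M_{k+1}\le M_{k+1}$, together with $\mathbb P$-supermartingales $(\widetilde N^i_t)_{t=k+1}^i$, $i\ge k+2$, satisfying $\widetilde N^i_{k+1}=N^i_{k+1}$ and $\widetilde N^i_i\ge l(\widetilde M_i-S_i)$. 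Granting this, I raise the starting value from $\widetilde M_{k+1}$ to $M_{k+1}$ while keeping $\widetilde M_t$ for $t\ge k+2$ — this stays a $\mathbb Q$-supermartingale since only the value at $k+1$ is increased, and the constraints at times $\ge k+2$ are untouched — prepend $M_k$ using $\mathbb E^{\mathbb Q}[M_{k+1}\mid\mathcal F_k]=M_k$, take $(N^{k+1}_t)_{t=k}^{k+1}$ as the $\mathbb P$-supermartingale for $i=k+1$ (it already satisfies $N^{k+1}_{k+1}\ge l(M_{k+1}-S_{k+1})$), and for $i\ge k+2$ splice $(N^i_t)$ at time $k$ onto $(\widetilde N^i_t)$ at times $\ge k+1$ (they agree at $k+1$, and the supermartingale property across the join is inherited from $(N^i_t)$). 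The assembled $(M_t)_{t=k}^n$ with these $\mathbb P$-supermartingales is an admissible witness for $M_k$ in the description of $V_k$, whence $V_k\le M_k$, and taking essential infima over all data satisfying \eqref{Equ:stoprog} gives $V_k\le\widehat V_k$.

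The main obstacle is producing the continuation $(\widetilde M_t)$ with $\widetilde M_{k+1}\le M_{k+1}$, because $V_{k+1}(N^{k+2}_{k+1},\ldots,N^n_{k+1})$ is only available as an essential infimum. Here the lattice/pasting structure of Lemma~\ref{Lem:filtrant} is essential: the set of admissible starting values at time $k+1$ is downward directed and stable under pasting along $\mathcal F_{k+1}$-measurable sets, so there is an admissible sequence $M^{(j)}_{k+1}\downarrow V_{k+1}(N^{k+2}_{k+1},\ldots,N^n_{k+1})$; on $\{M_{k+1}>V_{k+1}(\cdots)\}$ one picks an $\mathcal F_{k+1}$-measurable index $j(\cdot)$ with $M^{(j(\cdot))}_{k+1}\le M_{k+1}$ and pastes the associated witnesses (pasting along $\mathcal F_{k+1}$-events preserves the $\mathbb Q$- and $\mathbb P$-supermartingale properties, exactly as in Lemma~\ref{Lem:filtrant}). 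On the boundary set $\{M_{k+1}=V_{k+1}(\cdots)\}$ one really needs the essential infimum to be attained; I would secure this by running the whole proof as a backward induction on $k$, starting at $k=n$ where $V_n=-\infty$ trivializes everything, and carrying ``the essential infimum defining $V_{k+1}$ is attained by some admissible process'' as part of the inductive hypothesis (the attainment step itself again reducing to the lattice property together with a stability-under-decreasing-limits argument for the auxiliary $\mathbb P$-supermartingales). Everything else — the conditional-expectation identities and the supermartingale property of the spliced processes — is routine bookkeeping.
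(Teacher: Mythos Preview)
Your direction $\widehat V_k\le V_k$ matches the paper's argument. The gap is in the converse $V_k\le\widehat V_k$, specifically in how you handle the set $\{M_{k+1}=V_{k+1}(N^{k+2}_{k+1},\ldots,N^n_{k+1})\}$. You propose to carry ``the essential infimum defining $V_{k+1}$ is attained by some admissible process'' as part of a backward induction, secured by ``a stability-under-decreasing-limits argument for the auxiliary $\mathbb P$-supermartingales''. This attainment claim is not justified and is not expected to hold under Assumption~\ref{assumption 1} alone: the decreasing limit of a sequence of $\mathbb Q$-supermartingales need not be integrable, hence need not be a $\mathbb Q$-supermartingale, and the same obstruction applies to the auxiliary $\mathbb P$-supermartingales. (Under Assumption~\ref{assumption 2} one does get explicit optimizers, cf.\ Proposition~\ref{prop:value function time consistent}, but Theorem~\ref{EU main result} is stated in the general setting.) So your construction of a \emph{single} continuation $(\widetilde M_t)_{t=k+1}^n$ with $\widetilde M_{k+1}\le M_{k+1}$ breaks down precisely on the boundary set.

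The paper sidesteps this entirely by never building one continuation below $M_{k+1}$. From $M_{k+1}\ge V_{k+1}(N^{k+2}_{k+1},\ldots,N^n_{k+1})$ and Lemma~\ref{Lem:filtrant} it extracts a \emph{decreasing} sequence $(M_{k+1}^{(m)})_m$ of admissible starting values with $M_{k+1}\ge\inf_m M_{k+1}^{(m)}$. Each $M_{k+1}^{(m)}$ carries its own $\mathbb Q$-supermartingale and $\mathbb P$-supermartingales; prepending $N^i_k=N^i$ and $M_k^{(m)}=\mathbb E^{\mathbb Q}[M_{k+1}^{(m)}\mid\mathcal F_k]$ gives, for every $m$, an admissible witness for $V_k$, whence $V_k\le\mathbb E^{\mathbb Q}[M_{k+1}^{(m)}\mid\mathcal F_k]$. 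One then passes to the limit in this \emph{inequality} (monotone convergence on the decreasing sequence) to obtain $V_k\le\mathbb E^{\mathbb Q}[M_{k+1}\mid\mathcal F_k]=M_k$. No splitting into $\{M_{k+1}>V_{k+1}\}$ versus $\{M_{k+1}=V_{k+1}\}$, no measurable index selection, and crucially no attainment hypothesis are needed.
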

\begin{proof} By  definition,  the theorem is true for $k=n-1$. In the following, we treat the case $k\in\{0,\ldots,n-2\}$ by induction.
Denote by $\widetilde V_k(N^{k+1},\ldots,N^n)$ the essential infimum defined in the theorem. Let $M_{k+1}$ be an $\mathcal F_{k+1}$-measurable random variable and $(N^{k+1}_t)_{t=k}^{k+1}$, $\cdots$, $(N^{n}_t)_{t=k}^{n} $ be $\mathbb P$-supermartingales which verify the conditions in \eqref{Equ:stoprog}. Note that the fourth condition, together with the essential infimum property, imply (e.g. \cite[V.18]{doob}) the existence of a sequence of random variables $(M_{k+1}^{(m)})_{m\in\mathbb N}$ in $\mathcal M_{EU,k+1}(N_{k+1}^{k+2},\ldots,N_{k+1}^n)$ such that $M_{k+1}\geq\inf_{m\in\mathbb N}M_{k+1}^{(m)}$. Moreover, the previous lemma shows that the sequence $(M_{k+1}^{(m)})_{m\in\mathbb N}$ can be supposed decreasing without loss of generality.
By  a similar argument as in Proposition \ref{pro: first condition V0}, there exist
 $\mathbb Q$-supermartingales $(M_t^{(m)})_{t=k+1}^n$  and $\mathbb P$-supermartingales $(N^{k+2,(m)}_t)_{t=k+1}^{k+2}$, $\cdots$, $(N^{n, (m)}_t)_{t=k+1}^{n} $ such that \[N_{k+1}^{i,(m)}=N_{k+1}^i \text{ and }N_{i}^{i,(m)}\geq l(M_i^{(m)}-S_i)\text{ for }i\in\{k+1,\ldots,n\}.\]
We extend each $(N^{i,(m)}_t)_{t=k+1}^i$ to a $\mathbb P$-supermartingale on $t=k,\ldots, i$ by taking $N_k^{i,(m)}=N^i$, $i\in\{ k+2,\ldots,n\}$. Let $M_k^{(m)}=\mathbb E^{\mathbb Q}[M_{k+1}^{(m)}|\mathcal F_k]$ for any $m\in\mathbb N$. Note that the $\mathbb Q$-supermartingale $(M_t^{(m)})_{t=k}^n$ and the $\mathbb P$-supermartingales $(N^{k+1}_t)_{t=k}^{k+1}$, $(N^{k+2,(m)}_t)_{t=k}^{k+2},\ldots,(N^{n,(m)}_t)_{t=k}^n$ verify the conditions which characterize $V_k(N^{k+1},\ldots,N^n)$. Therefore $V_k(N^{k+1},\ldots,N^n)\leq M_{k}^{(m)}$ for any $m\in\mathbb N$, which implies that
 \[V_{k}(N^{k+1},\ldots,N^n)\leq\inf_{m\in\mathbb N}\mathbb E^{\mathbb Q}[M_{k+1}^{(m)}|\mathcal F_k]\leq
 \mathbb E^{\mathbb Q}[M_{k+1}|\mathcal F_t],\]
where the second inequality comes from the relation $M_{k+1}\geq\inf_{m\in\mathbb N}M_{k+1}^{(m)}$ and the fact that the sequence $(M_{k+1}^{(m)})_{m\in\mathbb N}$ is decreasing. Hence we obtain 
$V_k(N^{k+1},\ldots,N^n)\leq \widetilde V_{k}(y_{k+1},\ldots,y_n)$.

Conversely, let $(M_t)_{t=k}^n$  be a $\mathbb Q$-supermartingale and $(N^{k+1}_t)_{t=k}^{k+1},\ldots,(N^n_t)_{t=k}^n$ be $\mathbb P$-supermartingales such that $N_k^i=N^i$ and that $N_i^i\geq l(M_i-S_i)$ for any $i\in\{k+1,\ldots,n\}$. Without loss of generality, we assume  $M_k=\mathbb E^{\mathbb Q}[M_{k+1}|\mathcal F_k]$. The conditions in \eqref{Equ:stoprog} are easily verified for $M_{k+1}$, $(N^{k+1}_t)_{t=k}^{k+1},\ldots,(N^n_t)_{t=k}^n$ except the last one. Note that for any $i\in\{k+1,\ldots,n\}$, the condition $N_i^i\geq l(M_i-S_i)$ implies \[N_{k+1}^i\geq\mathbb E^{\mathbb P}[N_i^i|\mathcal F_{k+1}]\geq\mathbb E^{\mathbb P}[l(M_i-S_i)|\mathcal F_{k+1}],\]
where the first inequality comes from the fact that $(N_t^i)_{t=k}^i$ is a $\mathbb P$-supermartingale.
Therefore, $M_{k+1}\geq V_{k+1}(N_{k+1}^{k+2},\ldots,N_{k+1}^n)$, which leads to the inverse inequality.
\end{proof} 

The supermartingale condition in the previous theorem can be made simpler where only $\mathcal F_{k+1}$-measurable random variables are concerned. In particular, we consider the  case of real-valued  bounds.

\begin{corollary}\label{Cor:dynamic}
$V_k(\alpha_{k+1},\ldots,\alpha_n)$ equals the essential infimum of all conditional expectations $\mathbb E^{\mathbb Q}[M|\mathcal F_k]$, $M\in\mathcal F_{k+1}$ such that there exist $\mathcal F_{k+1}$-measurable random variables $N^{k+2},\cdots, N^n$ satisfying the following conditions:
\begin{equation}\label{Equ:europfinale}
\begin{cases}
\mathbb E^{\mathbb P}[N^i|\mathcal F_k]=\alpha_i \text{ for }i\in\{k+2,\ldots,n\},\\
\mathbb E^{\mathbb P}[l(M-S_{k+1})|\mathcal F_k]\leq \alpha_{k+1},\\
M\geq V_{k+1}(N^{k+2},\ldots,N^n).\\
\end{cases}
\end{equation} 
\end{corollary}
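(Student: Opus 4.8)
The plan is to read the corollary off Theorem~\ref{EU main result} applied with the deterministic bounds $N^i=\alpha_i$, $i\in\{k+1,\dots,n\}$. The key observation is that in the system \eqref{Equ:stoprog} each auxiliary $\mathbb P$-supermartingale $(N^i_t)_{t=k}^i$ influences the three remaining conditions only through its value at the single date $k+1$: its value at $k$ is pinned to $\alpha_i$, and its values at dates $t\ge k+2$ are irrelevant. Hence the whole system can be rephrased in terms of the $\mathcal F_{k+1}$-measurable data $M:=M_{k+1}$ and $N^i:=N^i_{k+1}$, $i\in\{k+2,\dots,n\}$. On the two-point grid $\{k,k+1\}$ the supermartingale property of $(N^i_t)$ is simply $\mathbb E^{\mathbb P}[N^i_{k+1}\,|\,\mathcal F_k]\le\alpha_i$; for $i=k+1$ this, together with $N^{k+1}_{k+1}\ge l(M-S_{k+1})$, is equivalent to $\mathbb E^{\mathbb P}[l(M-S_{k+1})\,|\,\mathcal F_k]\le\alpha_{k+1}$, while for $i\ge k+2$ the condition $M\ge V_{k+1}(N^{k+2}_{k+1},\dots,N^n_{k+1})$ is carried over unchanged.

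Before the two-way translation I would record the monotonicity of the value function, immediate from its definition: if $\widehat N^i\ge N^i$ a.s.\ for every $i$, then $\mathcal M_{EU,k+1}(N^{k+2},\dots,N^n)\subseteq\mathcal M_{EU,k+1}(\widehat N^{k+2},\dots,\widehat N^n)$ and hence $V_{k+1}(\widehat N^{k+2},\dots,\widehat N^n)\le V_{k+1}(N^{k+2},\dots,N^n)$. Then I show that the family of attainable values $M_k=\mathbb E^{\mathbb Q}[M_{k+1}\,|\,\mathcal F_k]$ in \eqref{Equ:stoprog} coincides with the family of values $\mathbb E^{\mathbb Q}[M\,|\,\mathcal F_k]$ in \eqref{Equ:europfinale}. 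For the inclusion ``$\subseteq$'', given $(M_{k+1},(N^i_t))$ as in \eqref{Equ:stoprog} put $M:=M_{k+1}$; the supermartingale property yields $\mathbb E^{\mathbb P}[l(M-S_{k+1})\,|\,\mathcal F_k]\le\alpha_{k+1}$, and replacing each $N^i_{k+1}$ ($i\ge k+2$) by the larger $\mathcal F_{k+1}$-measurable variable $N^i:=N^i_{k+1}+\big(\alpha_i-\mathbb E^{\mathbb P}[N^i_{k+1}\,|\,\mathcal F_k]\big)$ makes $\mathbb E^{\mathbb P}[N^i\,|\,\mathcal F_k]=\alpha_i$ while, by monotonicity, only decreasing $V_{k+1}$, so $M\ge V_{k+1}(N^{k+2},\dots,N^n)$ still holds; thus \eqref{Equ:europfinale} holds with the same $M_k$. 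For ``$\supseteq$'', given $(M,N^{k+2},\dots,N^n)$ satisfying \eqref{Equ:europfinale}, set $M_{k+1}:=M$, $M_k:=\mathbb E^{\mathbb Q}[M\,|\,\mathcal F_k]$, extend each $N^i$ ($i\ge k+2$) to the $\mathbb P$-martingale equal to $\alpha_i$ at date $k$ and to $N^i$ at dates $\ge k+1$, and set $N^{k+1}_k:=\alpha_{k+1}$, $N^{k+1}_{k+1}:=l(M-S_{k+1})+\big(\alpha_{k+1}-\mathbb E^{\mathbb P}[l(M-S_{k+1})\,|\,\mathcal F_k]\big)$; the latter is $\ge l(M-S_{k+1})$ and integrable because $l$ is bounded from below and $\mathbb E^{\mathbb P}[l(M-S_{k+1})\,|\,\mathcal F_k]\le\alpha_{k+1}$, and the four conditions of \eqref{Equ:stoprog} are then immediate.

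The two inclusions give equality of the two families, hence equality of their essential infima, which is the statement. The only places calling for a line of justification are the integrability of the constructed $N^{k+1}_{k+1}$ and the passage, via monotonicity of $V_{k+1}$ in its arguments, from the inequality $\mathbb E^{\mathbb P}[N^i\,|\,\mathcal F_k]\le\alpha_i$ delivered by Theorem~\ref{EU main result} to the equality $\mathbb E^{\mathbb P}[N^i\,|\,\mathcal F_k]=\alpha_i$ required in \eqref{Equ:europfinale}; everything else is the routine translation of a two-step supermartingale condition into conditional-expectation inequalities, and I expect no real obstacle beyond this bookkeeping.
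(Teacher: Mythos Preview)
Your argument is correct and follows essentially the same route as the paper's proof: both directions are obtained by translating the two-step supermartingale conditions of Theorem~\ref{EU main result} into conditions on $\mathcal F_{k+1}$-measurable data, with the monotonicity of $V_{k+1}$ used to pass between the inequality $\mathbb E^{\mathbb P}[N^i\,|\,\mathcal F_k]\le\alpha_i$ and the equality $\mathbb E^{\mathbb P}[N^i\,|\,\mathcal F_k]=\alpha_i$. The only cosmetic difference is that the paper relaxes the equality to an inequality and then sets $N^i=N^i_{k+1}$ and $N^{k+1}_{k+1}=l(M-S_{k+1})$ directly, whereas you add the nonnegative $\mathcal F_k$-measurable correction to force equality---both are equivalent and equally valid.
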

\begin{proof}
Denote by $\widetilde V_k(\alpha_{k+1},\ldots,\alpha_n)$ the essential infimum defined in the corollary. We begin with $\mathcal F_{k+1}$-measurable random variables $M$, $N^{k+2},\ldots,N^n$ satisfying the conditions in \eqref{Equ:europfinale}. Let $M_{k+1}=M$ and $M_k=\mathbb E^{\mathbb Q}[M|\mathcal F_k]$. We then construct $\mathbb P$-supermartingales $(N^{k+1}_t)_{t=k}^{k+1},\ldots,(N^n_t)_{t=k}^n$ with $N_k^i=\alpha_i$ for $i\in\{k+1,\ldots,n\}$, $N_{k+1}^{k+1}=l(M-S_{k+1})$ and $N_{k+1}^i=N^i$ for $i\in\{k+2,\ldots,n\}$. Therefore, $V_k(\alpha_{k+1},\ldots,\alpha_n)\leq\widetilde V_k(\alpha_{k+1},\ldots,\alpha_n)$.

Note that the essential infimum $\widetilde V_k(\alpha_{k+1},\ldots,\alpha_n)$ remains unchanged if we replace the first condition of \eqref{Equ:europfinale} by ``$\mathbb E^{\mathbb P}[N^i|\mathcal F_k]\leq \alpha_i$ \emph{for any} $i\in\{k+2,\ldots,n\}$'' since the function $V_{k+1}$ is descreasing with respect to each coordinate. Let $M_{k+1}\in\mathcal F_{k+1}$ and $(N^{k+1}_t)_{t=k}^{k+1},\ldots,(N^n_t)_{t=k}^n$ satisfy the conditions in \eqref{Equ:stoprog}. We choose $M=M_{k+1}$ and $N^i=N_{k+1}^i$ for $i\in\{k+2,\ldots,n\}$. By \eqref{Equ:stoprog}, $M\geq V_{k+1}(N^{k+2},\ldots,N^n)$ and $\mathbb E^{\mathbb P}[N^i|\mathcal F_k]\leq N_k^i=\alpha_i$. Moreover, since $N_{k+1}^{k+1}\geq l(M_{k+1}-S_{k+1})$, one has 
\[\alpha_{k+1}=N_{k}^{k+1}\geq \mathbb E^{\mathbb P}[N_{k+1}^{k+1}|\mathcal F_k]\geq\mathbb E^{\mathbb P}[l(M-S_{k+1})|\mathcal F_k].\]
Hence $\widetilde V_k(\alpha_{k+1},\ldots,\alpha_n)\leq V_k(\alpha_{k+1},\ldots,\alpha_n)$.

\end{proof}

\subsection{Time-consistent constraint}

For the time-consistent constraint, we consider the loss function at each payment date given the market information at the previous time step. The constraint is written by using conditional expectations. The dynamic programming structure in this setting  is relatively simple since it only involves two successive dates.  

Recall that $\mathcal M_{TC}$ is the set of all $\mathbb Q$-supermartingales $(M_k)_{k=0}^n$ with 
\begin{align*}
\mathbb E^\mathbb P[l(M_k - S_k)|\mathcal F_{k-1} ] \leq \alpha_k\ \text{for}\ k=1,\dots,n.
\end{align*}
In a dynamic manner,
denote by $\mathcal M_{TC,k}$ the set of all  $\mathbb Q$-supermartingales $(M_t)_{t=k}^n$ verifying the condition
\[\mathbb E^{\mathbb P}[l(M_t-S_t)|\mathcal F_{t-1}]\leq\alpha_t\text{ for }t=k+1,\ldots,n.\]
Note that $\mathcal M_{TC,n}$ is the set of all integrable $\mathcal F_n$-measurable random variables and $\mathcal M_{TC,0}$ coincides with $\mathcal M_{TC}$.

Define the value-function process $(V_k)_{0\leq k \leq n}$ in a backward manner as follows~:  $V_n=-\infty$, and for $k\in\{1,\ldots,n\}$, let
\begin{equation}\label{V k-1 americaine}
V_{k-1} = \essinf_{M \in \mathcal F_k}\{ \mathbb E^\mathbb Q[M|\mathcal F_{k-1}]: M\geq V_k,\text{ and } \mathbb E^\mathbb P[l(M-S_k)|\mathcal
  F_{k-1}] \leq \alpha_k\} 
\end{equation}

\begin{proposition}\label{Pro:TCprogdy} For any $k\in\{0,\ldots,n\}$, 
$$
V_k = \essinf_{M_k\in \mathcal M_{TC,k}} M_k. 
$$
\end{proposition}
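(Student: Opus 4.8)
The plan is to prove the identity $V_k = \essinf_{M_k \in \mathcal M_{TC,k}} M_k$ by backward induction on $k$, running from $k=n$ down to $k=0$. Write $W_k := \essinf_{M_k \in \mathcal M_{TC,k}} M_k$ for the right-hand side; we must show $V_k = W_k$. The base case $k=n$ is immediate from the conventions: $\mathcal M_{TC,n}$ consists of all integrable $\mathcal F_n$-measurable random variables, so its essential infimum is $-\infty = V_n$.

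For the inductive step, assume $V_k = W_k$ and prove $V_{k-1} = W_{k-1}$. First I would show $V_{k-1} \geq W_{k-1}$ (i.e.\ that any candidate attaining the defining infimum of $V_{k-1}$ actually comes from some element of $\mathcal M_{TC,k-1}$). Take $M \in \mathcal F_k$ with $M \geq V_k = W_k$ and $\mathbb E^{\mathbb P}[l(M - S_k)|\mathcal F_{k-1}] \leq \alpha_k$. Using that $M \geq W_k = \essinf_{\mathcal M_{TC,k}}$ and the lattice/downward-filtered structure of $\mathcal M_{TC,k}$ (analogous to Lemma \ref{Lem:filtrant}: if $(M_t), (M'_t) \in \mathcal M_{TC,k}$ then so is $\indic_A M'_t + \indic_{A^c}M_t$ for $A = \{M_k \geq M'_k\} \in \mathcal F_k$), there is a decreasing sequence $(M_t^{(m)})_{t=k}^n \in \mathcal M_{TC,k}$ with $M \geq \inf_m M_k^{(m)}$ (invoking \cite[V.18]{doob} as in the proof of Theorem \ref{EU main result}). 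Now glue: set $\widetilde M_t^{(m)} = M_t^{(m)}$ for $t \geq k$ and $\widetilde M_{k-1}^{(m)} = \mathbb E^{\mathbb Q}[M|\mathcal F_{k-1}]$. Since $M \in \mathcal F_k$ and $M \geq M_k^{(m)}$... one should instead take $\widetilde M_{k-1}^{(m)} = \mathbb E^{\mathbb Q}[M_k^{(m)}|\mathcal F_{k-1}]$ and note $\mathbb E^{\mathbb Q}[M|\mathcal F_{k-1}] \geq \lim_m \mathbb E^{\mathbb Q}[M_k^{(m)}|\mathcal F_{k-1}]$ by monotone convergence (the sequence decreases and $M \geq \inf_m M_k^{(m)}$). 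The constraint at date $k$ for $\widetilde M^{(m)}$ may fail since $M_k^{(m)} \le M$; to fix this, replace $M_k^{(m)}$ by $M_k^{(m)} \vee M$ on... — cleaner: since $M$ itself satisfies the date-$k$ constraint and $M \geq V_k$, and $\mathcal M_{TC,k}$ is downward filtered with essinf $W_k \le M$, one can find a single $\widehat M \in \mathcal M_{TC,k}$ with $\widehat M_k \le M$ and $\mathbb E^{\mathbb P}[l(\widehat M_k - S_k)|\mathcal F_{k-1}] \le \alpha_k$ automatically because $l$ is decreasing so $l(\widehat M_k - S_k) \ge l(M-S_k)$ — wait, that goes the wrong way. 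The correct move: do not lower $M$; instead observe that $(M, M_{k+1}^{(m)}, \dots, M_n^{(m)})$ for suitable choices is a $\mathbb Q$-supermartingale in $\mathcal M_{TC,k}$ once we replace each $M_t^{(m)}$ on the event where it exceeds the supermartingale bound — but since $M \ge V_k = W_k$, the essinf is attained in the limit and we may assume $M_k^{(m)} \le M$; then prepend a step from $M$ down to $\mathbb E^{\mathbb Q}[M|\mathcal F_{k-1}]$ is a supermartingale step only if $M \ge \mathbb E^{\mathbb Q}[M|\mathcal F_{k-1}]$, which holds. So $(\mathbb E^{\mathbb Q}[M|\mathcal F_{k-1}], M, M_{k+1}^{(m)}, \dots)$ works provided $M \ge M_{k+1}^{(m)}$ in conditional-$\mathbb Q$-mean, i.e.\ $M_k^{(m)} \le M$; passing to the limit gives $\mathbb E^{\mathbb Q}[M|\mathcal F_{k-1}] \ge W_{k-1}$, hence $V_{k-1} \ge W_{k-1}$.

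For the reverse inequality $V_{k-1} \leq W_{k-1}$, take any $(M_t)_{t=k-1}^n \in \mathcal M_{TC,k-1}$; without loss of generality $M_{k-1} = \mathbb E^{\mathbb Q}[M_k|\mathcal F_{k-1}]$. Then $(M_t)_{t=k}^n \in \mathcal M_{TC,k}$, so by the induction hypothesis $M_k \geq W_k = V_k$ a.s., and the date-$k$ constraint $\mathbb E^{\mathbb P}[l(M_k - S_k)|\mathcal F_{k-1}] \leq \alpha_k$ holds by definition of $\mathcal M_{TC,k-1}$. Hence $M_k$ is admissible in the essinf defining $V_{k-1}$ in \eqref{V k-1 americaine}, so $M_{k-1} = \mathbb E^{\mathbb Q}[M_k|\mathcal F_{k-1}] \geq V_{k-1}$. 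Taking the essential infimum over $(M_t) \in \mathcal M_{TC,k-1}$ yields $W_{k-1} \geq V_{k-1}$, completing the induction.

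The main obstacle is the direction $V_{k-1} \geq W_{k-1}$: a random variable $M$ attaining (in the essinf sense) the definition \eqref{V k-1 americaine} of $V_{k-1}$ need not itself be the time-$k$ value of a genuine element of $\mathcal M_{TC,k}$ — it only dominates $V_k = \essinf \mathcal M_{TC,k}$, so one must extract a decreasing approximating sequence inside $\mathcal M_{TC,k}$ (using the $\mathcal F_k$-lattice property of $\mathcal M_{TC,k}$, proved exactly as Lemma \ref{Lem:filtrant}, and \cite[V.18]{doob}), splice each approximant with the single backward step $M_k \mapsto \mathbb E^{\mathbb Q}[M_k|\mathcal F_{k-1}]$ while preserving the $\mathbb Q$-supermartingale property and the conditional loss constraint, and then pass to the limit via conditional monotone convergence for $\mathbb E^{\mathbb Q}[\cdot|\mathcal F_{k-1}]$. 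One should also record the elementary monotonicity fact that $V_k$ (equivalently $W_k$) is the essential infimum of a downward-filtered family, which legitimises both the approximation and the interchange of limit and conditional expectation.
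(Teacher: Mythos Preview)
Your overall architecture --- backward induction, trivial base case, and the split into two inequalities --- matches the paper exactly, and your argument for the easy direction $W_{k-1}\ge V_{k-1}$ (take $(M_t)_{t=k-1}^n\in\mathcal M_{TC,k-1}$, restrict to $t\ge k$, use the induction hypothesis) is correct and is precisely what the paper does.

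The hard direction $V_{k-1}\ge W_{k-1}$ has a genuine gap. After extracting a decreasing sequence $(M_t^{(m)})_{t=k}^n\in\mathcal M_{TC,k}$ with $M\ge\inf_m M_k^{(m)}$, you eventually settle on the spliced process $(\mathbb E^{\mathbb Q}[M|\mathcal F_{k-1}],\,M,\,M_{k+1}^{(m)},\dots,M_n^{(m)})$, asserting that this is admissible ``provided $M_k^{(m)}\le M$''. But you never have $M_k^{(m)}\le M$ for any fixed $m$: you only know $M\ge\inf_m M_k^{(m)}$, which is strictly weaker. Without that inequality the $\mathbb Q$-supermartingale step from time $k$ to $k+1$ can fail, since $\mathbb E^{\mathbb Q}[M_{k+1}^{(m)}|\mathcal F_k]\le M_k^{(m)}$ need not be $\le M$.

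The fix is exactly the move you briefly considered and then abandoned: set $\widetilde M_k^{(m)}:=M_k^{(m)}\vee M$, keep $\widetilde M_t^{(m)}=M_t^{(m)}$ for $t\ge k+1$, and put $\widetilde M_{k-1}^{(m)}:=\mathbb E^{\mathbb Q}[\widetilde M_k^{(m)}|\mathcal F_{k-1}]$. This is a $\mathbb Q$-supermartingale because $\mathbb E^{\mathbb Q}[M_{k+1}^{(m)}|\mathcal F_k]\le M_k^{(m)}\le M_k^{(m)}\vee M$. The date-$k$ constraint holds because $\widetilde M_k^{(m)}\ge M$ and $l$ is decreasing, so $l(\widetilde M_k^{(m)}-S_k)\le l(M-S_k)$ and hence $\mathbb E^{\mathbb P}[l(\widetilde M_k^{(m)}-S_k)|\mathcal F_{k-1}]\le\alpha_k$; the later constraints are unchanged. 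Thus $\widetilde M_{k-1}^{(m)}\in\mathcal M_{TC,k-1}$, giving $W_{k-1}\le\mathbb E^{\mathbb Q}[M_k^{(m)}\vee M|\mathcal F_{k-1}]$. Since $(M_k^{(m)})_m$ decreases and $M\ge\inf_m M_k^{(m)}$, the sequence $M_k^{(m)}\vee M$ decreases to $M$, and conditional monotone convergence yields $W_{k-1}\le\mathbb E^{\mathbb Q}[M|\mathcal F_{k-1}]$. Taking the essinf over admissible $M$ gives $W_{k-1}\le V_{k-1}$. This is exactly the paper's argument; your only error was discarding the $\vee M$ idea before checking that it actually works.
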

\begin{proof} The case where $k=n$ is trivial. In the following, we assume that the equality $V_k = \essinf_{M_k\in \mathcal M_{TC,k}} M_k$ has been proved.

Let $M_{k-1}$ be an element in $\mathcal M_{TC,k-1}$. There then exists a $\mathbb Q$-supermartingale $(M_t)_{t=k-1}^n$ such that 
\[\mathbb E^{\mathbb P}[l(M_t-S_t)|\mathcal F_{t-1}]\leq\alpha_t\text{ for }t=k,\ldots,n,\]
which implies that $M_{k}\in\mathcal M_{TC,k}$. Therefore  $M_k\geq V_k$ by the induction hypothesis. Combining this with the condition $\mathbb E^{\mathbb P}[l(M_k-S_k)|\mathcal F_{k-1}]$, we obtain by definition \eqref{V k-1 americaine} that $V_{k-1}\leq\mathbb E^{\mathbb Q}[M_k|\mathcal F_{k-1}]\leq M_{k-1}$. Hence $V_{k-1}\leq\essinf_{} \mathcal M_{TC,k-1}$.

For the converse inequality, let $M\in\mathcal F_k$ such that $M\geq V_k$ and $\mathbb E^{\mathbb P}[l(M-S_k)|\mathcal F_{k-1}]\leq\alpha_k$. By the induction hypothesis and an argument similar to Lemma \ref{Lem:filtrant}, 
there exists a family $(M_t^{(m)})_{t=k}^n$ of $\mathbb Q$-supermartingales indexed by $m\in\mathbb N$ such that 
\[\mathbb E^{\mathbb P}[l(M_t^{(m)}-S_t)|\mathcal F_{t-1}]\leq\alpha_t\text{ for }t=k+1,\ldots,n,\]
and  $\mathcal F_k$-measurable sequence $(M_k^{(m)})_{m\in\mathbb N}$  is decreasing and  $M\geq\inf_{m\in\mathbb N}M_k^{(m)}$. For each $m\in\mathbb N$, let $\widetilde M_{k}^{(m)}=\max(M_k^{(m)},M)$, $\widetilde M_{k-1}^{(m)}=\mathbb E^{\mathbb Q}[\widetilde M_k^{(m)}|\mathcal F_{k-1}]$ and $\widetilde M_t^{(m)}=M_t^{(m)}$ for $t\in\{k+1,\ldots,n\}$. Then $(\widetilde M_t^{(m)})_{t=k-1}^n$ is a $\mathbb Q$-supermartingale. Moreover, \[\mathbb E^{\mathbb P}[l(\widetilde M_t^{(m)}-S_t)|\mathcal F_{t-1}]\leq\alpha_t\text{ for }t=k,\ldots,n,\]
which implies $\widetilde M_{k-1}^{(m)}\in\mathcal M_{TC,k-1}$. Therefore 
\[\essinf_{}\mathcal M_{TC,k-1}\leq\widetilde M_{k-1}^{(m)}=\mathbb E^{\mathbb Q}[M_k^{(m)}\vee M|\mathcal F_{k-1}]\]
for any $m\in\mathbb N$. By taking the limit when $m$ goes to the infinity, one obtains $\essinf_{}\mathcal M_{TC,k-1}\leq\mathbb E^{\mathbb Q}[M|\mathcal F_{k-1}]$ by  $M\geq\inf_{m\in\mathbb N}M_k^{(m)}$. Since $M$ is arbitrary, $\essinf_{}\mathcal M_{TC,k-1}\leq V_{k-1}$. The result is thus proved.
\end{proof}

By the previous proposition, we can calculate the smallest initial
capital $V_0$ by using the recursive formula for the value function
\eqref{V k-1 americaine}. With extra regularity condition on the loss
function $l$, we can obtain a more explicit result as follows.    

\begin{proposition}\label{prop:value function time consistent}
Let $l$ be a loss function satisfying  Assumption \ref{assumption 2} and let $I:(-\infty,0)\rightarrow \mathbf R$ be the inverse function of $l'$. Suppose in addition that for any $k\in\{1,\ldots,n\}$,
$\alpha_k > \lim_{x\to +\infty} l(x)$ and  there exists a strictly negative random variable $Y_{k-1}\in\mathcal F_{k-1}$ such that
\begin{equation}\label{Equ:dominante}\mathbb E^{\mathbb P}[l(I(Y_{k-1}Z_k/Z_{k-1}))]<+\infty\,\text{ where } Z_k = \mathbb E^\mathbb P[\frac{d\mathbb Q}{d\mathbb P}\,|\, \mathcal F_k].\end{equation} 
Then the value
function $V_t$ satisfies $V_t = \widehat V_t$ a.s. for $t=0,\dots,n-1$
where $\widehat V_{t}$ is given by
\begin{align*}
\widehat V_{n-1} &= \mathbb E^\mathbb Q[S_n + I(\lambda_{n-1} Z_n/Z_{n-1})|\mathcal
F_{n-1}]
\end{align*}
where $\lambda_{n-1} \in \mathcal F_{n-1}$ is the solution of
\begin{align}
\mathbb E^\mathbb P[l(I(\lambda_{n-1} Z_n/Z_{n-1}))|\mathcal F_{n-1}] = \alpha_n,\label{lambdastep1}
\end{align}
and for $k<n$,
\begin{multline*}
\widehat V_{k-1}= \mathbb E^\mathbb Q[\widehat V_k|\mathcal F_{k-1}] \\+ \mathbb
E^\mathbb Q\left[\left\{S_k - \widehat V_k+
    I(\lambda_{k-1} Z_k/Z_{k-1})\right\}^+\Big| \mathcal F_{k-1}\right]\mathbf
1_{\mathbb E^\mathbb P[l(\widehat V_k-S_k)|\mathcal F_{k-1}]> \alpha_k},
\end{multline*}
where $\lambda_{k-1}\in \mathcal F_{k-1}$ is the solution of 
\begin{align}\label{Equ:lambdak-1}
\mathbb E^\mathbb P\left[l\left(I(\lambda_{k-1} Z_k/Z_{k-1})\vee (\widehat V_k -
    S_k)\right)\Big| \mathcal F_{k-1}\right] = \alpha_k.
\end{align}

\end{proposition}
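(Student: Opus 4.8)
The plan is to argue by backward induction on $k$, solving at each step the one‑step recursion \eqref{V k-1 americaine} (which, by Proposition~\ref{Pro:TCprogdy}, is what determines the least initial capital). Suppose $V_k=\widehat V_k$ has been established, with the convention $V_n=-\infty$ for the base case. Then \eqref{V k-1 americaine} reads
\begin{equation*}
V_{k-1}=\operatorname*{ess\,inf}_{M\in\mathcal F_k}\bigl\{\mathbb E^{\mathbb Q}[M|\mathcal F_{k-1}]:\ M\geq\widehat V_k,\ \mathbb E^{\mathbb P}[l(M-S_k)|\mathcal F_{k-1}]\leq\alpha_k\bigr\}.
\end{equation*}
Writing $M=S_k+W$ and $G_k:=\widehat V_k-S_k$, and using $\mathbb E^{\mathbb Q}[M|\mathcal F_{k-1}]=\mathbb E^{\mathbb Q}[S_k|\mathcal F_{k-1}]+\mathbb E^{\mathbb P}[W\,Z_k/Z_{k-1}\,|\,\mathcal F_{k-1}]$, this is the problem of minimizing $\mathbb E^{\mathbb P}[W\,Z_k/Z_{k-1}\,|\,\mathcal F_{k-1}]$ over $\mathcal F_k$‑measurable $W\geq G_k$ subject to $\mathbb E^{\mathbb P}[l(W)|\mathcal F_{k-1}]\leq\alpha_k$.

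I would solve this by a conditional Lagrangian / pointwise argument. For $\mu>0$ and $z:=Z_k/Z_{k-1}>0$, the strictly convex map $w\mapsto wz+\mu\,l(w)$ has unconstrained minimizer $I(-z/\mu)$, hence its minimizer over the half‑line $\{w\geq g\}$ equals $g\vee I(-z/\mu)$. Putting $\lambda_{k-1}:=-1/\mu<0$, the candidate is $W^{*}=G_k\vee I(\lambda_{k-1}Z_k/Z_{k-1})$, with $\lambda_{k-1}$ an $\mathcal F_{k-1}$‑measurable multiplier chosen so that the loss constraint is active. Introduce $A_{k-1}:=\{\mathbb E^{\mathbb P}[l(\widehat V_k-S_k)|\mathcal F_{k-1}]>\alpha_k\}\in\mathcal F_{k-1}$: on $A_{k-1}$ I take $\lambda_{k-1}$ solving \eqref{Equ:lambdak-1}, while on $A_{k-1}^c$ I take $W^{*}=G_k$, which is already feasible. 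Existence, uniqueness and $\mathcal F_{k-1}$‑measurability of $\lambda_{k-1}$ on $A_{k-1}$ follow from the fact that $\lambda\mapsto\mathbb E^{\mathbb P}[l(G_k\vee I(\lambda Z_k/Z_{k-1}))|\mathcal F_{k-1}]$ is a.s. continuous and non‑increasing in $\lambda$, with limit $\mathbb E^{\mathbb P}[l(G_k)|\mathcal F_{k-1}]>\alpha_k$ as $\lambda\downarrow-\infty$ and limit $\lim_{x\to+\infty}l(x)<\alpha_k$ as $\lambda\uparrow0$, and strictly decreasing on the range where the value lies strictly below $\mathbb E^{\mathbb P}[l(G_k)|\mathcal F_{k-1}]$ (this is where the Inada conditions and the hypothesis $\alpha_k>\lim_{x\to+\infty}l(x)$ are used), together with a measurable‑selection argument patched across $\{A_{k-1},A_{k-1}^c\}$. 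The base case $k=n$ is the same computation with $V_n=-\infty$, so the constraint $M\geq\widehat V_n$ is vacuous, $A_{n-1}=\Omega$, and $W^{*}=I(\lambda_{n-1}Z_n/Z_{n-1})$ with $\lambda_{n-1}$ solving \eqref{lambdastep1}, yielding $\widehat V_{n-1}$.

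Then I would verify optimality of $M^{*}:=S_k+W^{*}$, which may be written $M^{*}=\widehat V_k+\{S_k-\widehat V_k+I(\lambda_{k-1}Z_k/Z_{k-1})\}^+\mathbf{1}_{A_{k-1}}$. First, $M^{*}$ is admissible: $M^{*}\geq\widehat V_k$ is immediate, and $\mathbb E^{\mathbb P}[l(M^{*}-S_k)|\mathcal F_{k-1}]\leq\alpha_k$, with equality on $A_{k-1}$. Second, for optimality: on $A_{k-1}^c$ any admissible $M$ satisfies $M\geq\widehat V_k=M^{*}$, hence $\mathbb E^{\mathbb Q}[M|\mathcal F_{k-1}]\geq\mathbb E^{\mathbb Q}[M^{*}|\mathcal F_{k-1}]$; on $A_{k-1}$, for any admissible $M$ with $W:=M-S_k\geq G_k$, the pointwise inequality $W^{*}z+\mu\,l(W^{*})\leq Wz+\mu\,l(W)$ (valid because $W^{*}$ minimizes $w\mapsto wz+\mu\,l(w)$ over $\{w\geq G_k\}$), after taking $\mathbb E^{\mathbb P}[\cdot|\mathcal F_{k-1}]$ and using $\mathbb E^{\mathbb P}[l(W^{*})|\mathcal F_{k-1}]=\alpha_k\geq\mathbb E^{\mathbb P}[l(W)|\mathcal F_{k-1}]$ and $\mu>0$, gives $\mathbb E^{\mathbb P}[W^{*}z|\mathcal F_{k-1}]\leq\mathbb E^{\mathbb P}[Wz|\mathcal F_{k-1}]$, i.e. $\mathbb E^{\mathbb Q}[M^{*}|\mathcal F_{k-1}]\leq\mathbb E^{\mathbb Q}[M|\mathcal F_{k-1}]$. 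Combining the two cases, $V_{k-1}=\mathbb E^{\mathbb Q}[M^{*}|\mathcal F_{k-1}]$, which is exactly the stated formula for $\widehat V_{k-1}$, closing the induction.

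The main obstacle is not the conceptual part — the one‑step convex optimization above is routine — but two technical points, which I would isolate in an appendix lemma following the paper's convention. The first is the measurable selection of the multiplier $\lambda_{k-1}$ as solution of the conditional equation \eqref{Equ:lambdak-1}, which must be done with care since conditional expectations are defined only up to null sets and the selections on $A_{k-1}$ and on $A_{k-1}^c$ must be glued measurably. The second is propagating integrability along the recursion: one must show inductively that $\widehat V_k\in L^1(\mathbb Q)$, that $I(\lambda_{k-1}Z_k/Z_{k-1})$ is $\mathbb Q$‑integrable given $\mathcal F_{k-1}$, and that every conditional expectation occurring in the recursion is a.s. finite. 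This is precisely where hypothesis \eqref{Equ:dominante}, together with $l$ being bounded below, of class $C^1$ and satisfying the Inada conditions, is used: it furnishes at each step a feasible competitor of finite cost, which both prevents a degenerate essential infimum and transmits the integrability needed at the next backward step.
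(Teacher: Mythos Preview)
Your proposal is correct and follows essentially the same route as the paper: backward induction via Proposition~\ref{Pro:TCprogdy}, then solving the one-step constrained minimization by Lagrangian duality, constructing the same candidate $M^{*}$ and verifying optimality on $A_{k-1}$ and $A_{k-1}^c$ separately. The only executional differences are that the paper phrases the lower bound through the Legendre transform $l^{*}$ explicitly (which amounts to your pointwise inequality $W^{*}z+\mu\,l(W^{*})\leq Wz+\mu\,l(W)$), and obtains existence of $\lambda_{k-1}$ by taking the essential infimum of the family $\Lambda_{k-1}$ of admissible multipliers rather than via your monotonicity-plus-measurable-selection argument.
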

\begin{proof}${}$\\
\noindent\textit{Step 1. Existence of $\lambda$.}\ For any $k\in\{1,\ldots,n\}$, let $\Lambda_{k-1}$ be the set of strictly negative $\mathcal F_{k-1}$-measurable random variables $Y$ such that
\begin{align*}
\mathbb E^\mathbb P\left[l\left(I(Y Z_k/Z_{k-1})\vee (\widehat V_k -
    S_k)\right)\Big| \mathcal F_{k-1}\right] \leq \alpha_k.
\end{align*} By using the dominated convergence theorem, the condition $\alpha_k>\lim_{x\rightarrow+\infty}l(x)$ and assumption \eqref{Equ:dominante} imply that the family $\Lambda_{k-1}$ is not empty.
Let $\lambda_{k-1}$ be the essential infimum of this family. Note that $\Lambda_{k-1}$ is stable by taking the infimum of finitely many random variables. Therefore $\lambda_{k-1}$ can be written as the limit of a decreasing sequence in $\Lambda_{k-1}$. The continuity of the functions $I$ and $l$, together with the monotone convergence theorem, show that $\lambda_{k-1}$ lies in the family $\Lambda_{k-1}$. It remains to show the equality \eqref{Equ:lambdak-1}. If the equality does not hold, then for sufficiently small $\varepsilon>0$, the set $A_\varepsilon$ of $\omega\in\Omega$ such that 
\[\mathbb E^\mathbb P\left[l\left(I((\lambda_{k-1}-\varepsilon) Z_k/Z_{k-1})\vee (\widehat V_k -
    S_k)\right)\Big| \mathcal F_{k-1}\right](\omega) \leq \alpha_k\]
has a strictly postive measure. This implies that $\lambda_{k-1}-\varepsilon\indic_{A_{\varepsilon}}$ also lies in $\Lambda_{k-1}$, which leads to a contradiction.

\noindent\textit{Step 2. Representation for $V_{n-1}$.}\ By Proposition \ref{Pro:TCprogdy},
$$V_{n-1} = \mathbb E^{\mathbb Q}[S_n|\mathcal
F_{n-1}]+\essinf_{M\in
\mathcal F_n}\{\mathbb E^{\mathbb Q}[ M|\mathcal F_{n-1}]: \mathbb E^{\mathbb P}[l(M)|\mathcal F_{n-1}]\leq
\alpha_n\}$$
Taking $M = I(\lambda_{n-1} Z_n/Z_{n-1})$, we see that
$V_{n-1}\leq \widehat V_{n-1}$. On the other hand, for every strictly negative random variable
$\lambda\in \mathcal F_{n-1}$, $l(M)\geq l^*(\lambda Z_n/Z_{n-1}) + \lambda M
Z_n/Z_{n-1}$ where $l^*(u) = \inf_{v}\{ l(v) - uv\}$ is the Legendre transformation of $l$. Note that one has $(l^*)'=I$. So 
\begin{align*}
&V_{n-1} -  \mathbb E^{\mathbb Q}[S_n|\mathcal
F_{n-1}]\\
&\geq \essinf_{M\in
\mathcal F_n}\{\mathbb E^{\mathbb Q}[ M|\mathcal F_{n-1}]:  \mathbb E^{\mathbb P}[l^*(\lambda Z_n/Z_{n-1}) + \lambda M Z_n/Z_{n-1}|\mathcal F_{n-1}]\leq
\alpha_n\}\\
& = \essinf_{M\in
\mathcal F_n}\{\mathbb E^{\mathbb Q}[ M|\mathcal F_{n-1}]:  \mathbb E^{\mathbb P}[l^*(\lambda Z_n/Z_{n-1})|\mathcal F_{n-1} ]+\lambda
\mathbb
E^{\mathbb Q}[M |\mathcal F_{n-1}]\leq
\alpha_n\}\\
& = \frac{\alpha_n -\mathbb E^{\mathbb P}[l^*(\lambda Z_n/Z_{n-1})
  |\mathcal F_{n-1}]
}{\lambda}\\
&=\frac{\alpha_n -\mathbb E^{\mathbb P}[l(I(\lambda
  Z_n/Z_{n-1})) |\mathcal F_{n-1}]
}{\lambda} + \mathbb E^{\mathbb Q}[
I(\lambda
  Z_n/Z_{n-1}) |\mathcal F_{n-1}],
\end{align*}
where the last equality comes from the relation $l^*(y)=l(I(y))-yI(y)$. 
Taking $\lambda$ to be the solution of \eqref{lambdastep1}, we find
that $V_{n-1}\geq \widehat V_{n-1}$, which proves the desired
representation of $V_{n-1}$. 

\noindent\textit{Step 3. General case.}\ We now proceed to the proof of the general case by induction on $k$. Assume that we have already established the equality $V_{k}=\widehat{V}_k$. By Proposition \ref{Pro:TCprogdy} and this induction hypothesis, 
\[V_{k-1}=\essinf_{M\in\mathcal F_k}\{\mathbb E^{\mathbb Q}[M|\mathcal F_{k-1}]\,:\,M\geq\widehat V_k,\;
\mathbb E^{\mathbb P}[l(M-S_k)|\mathcal F_{k-1}]\leq\alpha_k\}.\]
We choose
\[\widetilde M:=\widehat V_k+\big(S_k - \widehat V_k+
    I(\lambda_{k-1} Z_k/Z_{k-1})\big)^+\mathbf
1_{\mathbb E^\mathbb P[l(\widehat V_k-S_k)|\mathcal F_{k-1}]> \alpha_k},\]
which is bounded from below by $\widehat{V}_k$ and satisfies $\mathbb E^{\mathbb P}[l(\widetilde M-S_k)|\mathcal F_{k-1}]\leq\alpha_k$. In fact, it is clear that this inequality holds on the set $\{\mathbb E^\mathbb P[l(\widehat V_k-S_k)|\mathcal F_{k-1}]\leq \alpha_k\}$. Moreover, on the set $\{\mathbb E^\mathbb P[l(\widehat V_k-S_k)|\mathcal F_{k-1}]> \alpha_k\}$, one has $\widetilde M-S_k=\widehat{V}_k-S_k+(S_k-\widehat V_k+I(\lambda_{k-1}Z_k/Z_{k-1}))^+$, so 
\[\mathbb E^{\mathbb P}[l(\widetilde M-S_k)|\mathcal F_{k-1}]=\mathbb E^{\mathbb P}[l(I(\lambda_{k-1}Z_k/Z_{k-1})\vee (\widehat{V}_k-S_k))|\mathcal F_{k-1}]\]which implies 
$\mathbb E^{\mathbb P}[l(\widetilde M-S_k)|\mathcal F_{k-1}]\leq\alpha_k$ on this set by the definition of $\lambda_{k-1}$. Hence 
\[V_{k-1}\leq\mathbb E^{\mathbb Q}[\widetilde M|\mathcal F_{k-1}]=\widehat{V}_{k-1}.\]

The opposite inequality $\widehat{V}_{k-1}\leq V_{k-1}$ is straightforward on the set 
$\{\mathbb E^\mathbb P[l(\widehat V_k-S_k)|\mathcal F_{k-1}]\leq \alpha_k\}$.
It remains to establish the inequality  on the set
$\{\mathbb E^\mathbb P[l(\widehat V_k-S_k)|\mathcal F_{k-1}]> \alpha_k\}$.
We have by a variable change
\begin{equation*}\begin{split}&V_{k-1}-\mathbb E^{\mathbb Q}[\widehat{V}_k|\mathcal F_{k-1}]\\
&=\essinf_{}\{\mathbb E^{\mathbb Q}[M|\mathcal F_{k-1}]:M\geq 0,\;
\mathbb E^{\mathbb P}[l(M+\widehat{V}_k-S_k)|\mathcal F_{k-1}]\leq\alpha_k\},\\
&\geq\essinf_{M\in\mathcal F_k,M\geq 0}\{\mathbb E^{\mathbb
  Q}[M|\mathcal F_{k-1}]:\mathbb E^{\mathbb P}[l^*\Big(\lambda
{Z_{k}\over Z_{k-1}}\Big)+\lambda(M+\widehat{V}_k-S_k) {Z_{k}\over Z_{k-1}}\Big|\mathcal F_{k-1}]\leq\alpha_k\}\\
&=\essinf_{M\in\mathcal F_k,M\geq 0}\{\mathbb E^{\mathbb Q}[M|\mathcal F_{k-1}]:\mathbb E^{\mathbb P}[l^*\Big(\lambda {Z_{k}\over Z_{k-1}}\Big)|\mathcal F_{k-1}]+\lambda\mathbb E^{\mathbb Q}[M+\widehat{V}_k-S_k|\mathcal F_{k-1}]\leq\alpha_k\}\\
&=\essinf_{M\in\mathcal F_k,M\geq 0}\Big\{\mathbb E^{\mathbb Q}[M|\mathcal F_{k-1}]:\\&\qquad
\mathbb E^{\mathbb Q}[M|\mathcal F_{k-1}]\geq\frac{\alpha_k-\mathbb
  E^{\mathbb P}[l(I(\lambda {Z_k\over Z_{k-1}}))|\mathcal F_{k-1}]}{\lambda}
+\mathbb E^{\mathbb Q}[S_k-\widehat{V}_k+ I\Big(\lambda {Z_k\over Z_{k-1}}\Big)\Big|\mathcal F_{k-1}]\Big\}
\end{split}
\end{equation*}
for any strictly negative $\lambda\in\mathcal F_{k-1}$. On the set 
$\{\mathbb E^\mathbb P[l(\widehat V_k-S_k)|\mathcal F_{k-1}]> \alpha_k\}$,
the last constraint above can be replaced by
\begin{multline*}\mathbb E^{\mathbb Q}[M|\mathcal
  F_{k-1}]\geq\frac{\alpha_k-\mathbb E^{\mathbb P}[l(I(\lambda
    Z_k/Z_{k-1})\vee(\widehat{V}_k-S_k))|\mathcal
    F_{k-1}]}{\lambda}\\+\mathbb E^{\mathbb Q}[\big(S_k-\widehat{V}_k+
  I(\lambda Z_k/Z_{k-1})\big)+|\mathcal F_{k-1}].
\end{multline*}
Now we choose $\lambda$ to be the solution of \eqref{Equ:lambdak-1} to obtain the following constraint
\[\mathbb E^{\mathbb Q}[M|\mathcal F_{k-1}]\geq \mathbb E^{\mathbb Q}[\big(S_k-\widehat{V}_k+I(\lambda_{k-1}Z_k/Z_{k-1})\big)+|\mathcal F_{k-1}].\]
Therefore on the set 
$\{\mathbb E^\mathbb P[l(\widehat V_k-S_k)|\mathcal F_{k-1}]> \alpha_k\}$, we have
\[V_{k-1}-\mathbb E^{\mathbb Q}[\widehat{V}_k|\mathcal F_{k-1}]\geq \mathbb E^{\mathbb Q}[\big(S_k-\widehat{V}_k+I(\lambda_{k-1}Z_k/Z_{k-1})\big)^+|\mathcal F_{k-1}],\]
which implies the inequality $\widehat{V}_{k-1}\leq V_{k-1}$.  The theorem is thus proved.
\end{proof}

In the risk-neutral case where $\mathbb P=\mathbb Q$, we can relax
Assumption \ref{assumption 2} and obtain a similar  result under
Assumption \ref{assumption 1} only.

\begin{proposition}\label{time_consistent_riskneutral:propo}
We assume that $\mathbb P = \mathbb Q$ and that for any $k\in\{1,\ldots,n\}$,
$\alpha_k > \lim_{x\to +\infty} l(x)$. 
\begin{enumerate}
\item The value function satisfies $V_t = \widehat V_t$ a.s. for $t=0,\dots,n-1$ where 
\begin{align*}
 \widehat V_{n-1} &= \mathbb E[S_n|\mathcal F_{n-1}] + l^{-1}(\alpha_n)\\
 \widehat V_{k-1} & = \mathbb E[ \widehat V_k|\mathcal F_{k-1}] \\ &+ \mathbb
E\left[\left\{S_k -  \widehat V_k+
    \lambda_k\right\}^+\Big| \mathcal F_{k-1}\right]\mathbf
1_{\mathbb E[l( \widehat V_k-S_k)|\mathcal F_{k-1}]> \alpha_k},\, k<n,
\end{align*}
where $\lambda_k\in \mathcal F_{k-1}$ is the solution of 
\begin{align}
\mathbb E\left[l\left(\lambda_k\vee ( \widehat V_k -
    S_k)\right)\Big| \mathcal F_{k-1}\right] = \alpha_k.\label{eqlambda}
\end{align}
\item If the loss function $l(x) = x^-$, with $\alpha_k>0$ for all
  $k$, then the value function satisfies $V_t = \widehat V_t$ a.s. for $t=0,\dots,n-1$ where 
\begin{align*}
\widehat V_{n-1} &= \mathbb E[S_n|\mathcal F_{n-1}] - \alpha_n\\
\widehat V_{k-1} &= \max\{\mathbb E[\widehat V_k|\mathcal F_{k-1}],
\mathbb E[\widehat V_k \vee
S_k - \alpha_k|\mathcal F_{k-1}]\}. 
\end{align*}
\end{enumerate}
\end{proposition}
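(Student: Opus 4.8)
The plan is to prove Part~1 by backward induction on $k$, using the dynamic-programming recursion \eqref{V k-1 americaine}, whose solution $(V_k)$ is by Proposition \ref{Pro:TCprogdy} the family $V_k=\essinf_{M_k\in\mathcal M_{TC,k}}M_k$. The argument runs parallel to the proof of Proposition \ref{prop:value function time consistent}, with two modifications: since $\mathbb P=\mathbb Q$ the density $Z$ drops out, and since $l$ is only assumed convex (Assumption \ref{assumption 1}) I use a subgradient of $l$ where $I=(l')^{-1}$ was used before. For the first step ($k=n$, with $V_n=-\infty$), the recursion reads $V_{n-1}=\essinf_{M\in\mathcal F_n}\{\mathbb E[M|\mathcal F_{n-1}]:\mathbb E[l(M-S_n)|\mathcal F_{n-1}]\le\alpha_n\}$; writing $M=S_n+X$, Jensen's inequality gives $l(\mathbb E[X|\mathcal F_{n-1}])\le\mathbb E[l(X)|\mathcal F_{n-1}]\le\alpha_n$, hence $\mathbb E[X|\mathcal F_{n-1}]\ge l^{-1}(\alpha_n)$, while the deterministic choice $X=l^{-1}(\alpha_n)$ is admissible (continuity of the convex $l$ gives $l(l^{-1}(\alpha_n))=\alpha_n$; $l^{-1}(\alpha_n)$ is finite since a non-constant convex function bounded below is unbounded above). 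Thus $V_{n-1}=\widehat V_{n-1}$.

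For the inductive step, assume $V_k=\widehat V_k$; then $V_{k-1}$ is the essential infimum of $\mathbb E[M|\mathcal F_{k-1}]$ over $\mathcal F_k$-measurable $M\ge\widehat V_k$ with $\mathbb E[l(M-S_k)|\mathcal F_{k-1}]\le\alpha_k$. Put $\xi:=\widehat V_k-S_k$, write $M=\widehat V_k+Y$ with $Y\ge0$, and split $\Omega$ along the set $B:=\{\mathbb E[l(\xi)|\mathcal F_{k-1}]\le\alpha_k\}\in\mathcal F_{k-1}$, whose complement is the indicator set in the statement. On $B$ the choice $Y=0$ is admissible and $\mathbb E[Y|\mathcal F_{k-1}]\ge0$, so the value there is $\mathbb E[\widehat V_k|\mathcal F_{k-1}]=\widehat V_{k-1}$ (the extra term carries $\mathbf 1_{B^c}$). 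On $B^c$ I introduce $\lambda_k$ as in Step~1 of the proof of Proposition \ref{prop:value function time consistent}: the family $\Lambda_{k-1}$ of $\mathcal F_{k-1}$-measurable $\lambda$ with $\mathbb E[l(\lambda\vee\xi)|\mathcal F_{k-1}]\le\alpha_k$ contains the constant $l^{-1}(\alpha_k)$ and is stable under finite minima (the relevant comparison events lie in $\mathcal F_{k-1}$), so $\lambda_k:=\essinf_{}\Lambda_{k-1}$ is a decreasing limit of members of $\Lambda_{k-1}$ and, by continuity of $l$ and monotone convergence, lies in $\Lambda_{k-1}$; a perturbation argument gives $\mathbb E[l(\lambda_k\vee\xi)|\mathcal F_{k-1}]=\alpha_k$ on $B^c$, which is \eqref{eqlambda}, and $\lambda_k>-\infty$ on $B^c$ (otherwise $\lambda_k\vee\xi\downarrow\xi$ would force $\mathbb E[l(\xi)|\mathcal F_{k-1}]\le\alpha_k$). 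Setting $\widetilde M:=\widehat V_k+(\lambda_k-\xi)^+$ on $B^c$ and $\widetilde M:=\widehat V_k$ on $B$ yields a globally admissible $M$ (on $B^c$, $\widetilde M-S_k=\xi\vee\lambda_k$, so the shortfall constraint holds with equality) with $\mathbb E[\widetilde M|\mathcal F_{k-1}]=\widehat V_{k-1}$; hence $V_{k-1}\le\widehat V_{k-1}$.

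The matching lower bound on $B^c$ is the crux. Fix the $\mathcal F_{k-1}$-measurable subgradient $q_k:=l'_+(\lambda_k)\in\partial l(\lambda_k)$. One must have $q_k<0$ on $B^c$: if $q_k=0$ on a positive-measure $\mathcal F_{k-1}$-subset, convexity and monotonicity force $l\equiv\lim_{x\to+\infty}l(x)$ on $[\lambda_k,+\infty)$, so $l(\lambda_k\vee\xi)=\lim_{x\to+\infty}l(x)$ there, contradicting $\mathbb E[l(\lambda_k\vee\xi)|\mathcal F_{k-1}]=\alpha_k>\lim_{x\to+\infty}l(x)$. Put $\mu_k:=-1/q_k>0$. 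A one-dimensional convex computation shows that $y\mapsto y+\mu_kl(\xi+y)$ is minimized over $y\ge0$ at $y^\ast=(\lambda_k-\xi)^+$ (interior stationarity $1+\mu_kq_k=0$ when $\lambda_k>\xi$, a boundary inequality when $\lambda_k\le\xi$), so $y+\mu_kl(\xi+y)\ge(\lambda_k-\xi)^++\mu_kl(\xi\vee\lambda_k)$ for every $y\ge0$; applying this with $y=Y$, taking $\mathbb E[\,\cdot\,|\mathcal F_{k-1}]$, and using $\mathbb E[l(\xi+Y)|\mathcal F_{k-1}]\le\alpha_k=\mathbb E[l(\xi\vee\lambda_k)|\mathcal F_{k-1}]$ with $\mu_k>0$ gives $\mathbb E[Y|\mathcal F_{k-1}]\ge\mathbb E[(\lambda_k-\xi)^+|\mathcal F_{k-1}]$ on $B^c$. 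Together with the bound on $B$ this yields $V_{k-1}\ge\widehat V_{k-1}$, closing the induction; the integrability bookkeeping ($\widehat V_k\in L^1$, via $\lambda_k\le l^{-1}(\alpha_k)$ and $\mathbb E^{\mathbb Q}[|P_i|]<\infty$) is routine. I expect the main difficulty to be exactly this step: establishing $q_k<0$ on $B^c$, the measurable choice of $\lambda_k$ and $\mu_k$, and the localization across $B$ and $B^c$.

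For Part~2 I would apply Part~1 to $l(x)=x^-$, which satisfies Assumption \ref{assumption 1} with $\lim_{x\to+\infty}l(x)=0<\alpha_k$ and $l^{-1}(\alpha)=-\alpha$; this already gives $\widehat V_{n-1}=\mathbb E[S_n|\mathcal F_{n-1}]-\alpha_n$. For $k<n$, since $\lambda_k\le l^{-1}(\alpha_k)=-\alpha_k<0$ one checks the elementary identities $(\lambda_k\vee\xi)^-=\xi^-\wedge(-\lambda_k)$ and $\xi^-=\xi^-\wedge(-\lambda_k)+(\lambda_k-\xi)^+$; the binding identity $\mathbb E[(\lambda_k\vee\xi)^-|\mathcal F_{k-1}]=\alpha_k$ on $B^c$ then yields $\mathbb E[(\lambda_k-\xi)^+|\mathcal F_{k-1}]=\mathbb E[\xi^-|\mathcal F_{k-1}]-\alpha_k$ there, and $\widehat V_k+\xi^-=\widehat V_k\vee S_k$ turns the Part~1 formula into $\widehat V_{k-1}=\mathbb E[\widehat V_k|\mathcal F_{k-1}]$ on $B$ and $\widehat V_{k-1}=\mathbb E[\widehat V_k\vee S_k-\alpha_k|\mathcal F_{k-1}]$ on $B^c$. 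Since $\mathbb E[\xi^-|\mathcal F_{k-1}]\le\alpha_k$ exactly on $B$, these two expressions equal $\max\{\mathbb E[\widehat V_k|\mathcal F_{k-1}],\mathbb E[\widehat V_k\vee S_k-\alpha_k|\mathcal F_{k-1}]\}$, which is the asserted recursion.
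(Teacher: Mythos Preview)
Your proof is correct and follows essentially the same route as the paper: Jensen for the base case, Step~1 of Proposition~\ref{prop:value function time consistent} for the existence and binding property of $\lambda_k$, the same candidate $\widetilde M$ for the upper bound, and a subgradient argument for the lower bound on $B^c$, followed by the same algebraic reduction for $l(x)=x^-$ in Part~2. The only noticeable difference is cosmetic: for the lower bound on $B^c$ the paper applies the subgradient inequality $l(\lambda_k\vee\xi)-l(Y+\xi)\le l'(\lambda_k\vee\xi)\bigl((\lambda_k\vee\xi)-Y-\xi\bigr)$ and then splits on $\{\lambda_k\ge\xi\}$, whereas you package the same convexity fact as the pointwise Lagrangian inequality $y+\mu_k\,l(\xi+y)\ge(\lambda_k-\xi)^++\mu_k\,l(\xi\vee\lambda_k)$ with $\mu_k=-1/q_k$; both routes hinge on exactly the same point, namely $q_k=l'_+(\lambda_k)<0$ on $B^c$, which you justify in the same way as the paper.
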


\begin{proof}
\begin{enumerate}
\item The formula for $V_{n-1}$ follows directly from Jensen's
  inequality. We prove the general case by induction. Assume that we
  have already established the equality $V_k=\widehat V_k$. To prove
  the formula for $V_{k-1}$, we first follow Step 1 in the proof of
  Proposition \ref{prop:value function time consistent} to establish
  the existence of $\lambda_k$. Next, let
$$
M = \widehat V_k + \left\{S_k - \widehat V_k + \lambda_k\right\}^+
\mathbf 1_{\mathbb E[l(\widehat V_k -S_k )|\mathcal
  F_{k-1}]>\alpha_k}. 
$$ 
Clearly, $M\geq V_k$. In addition,
\begin{align*}
\mathbb E[l(M-S_k)|\mathcal F_{k-1}] &= \mathbb E[l(\widehat V_k -S_k
+\left\{S_k - \widehat V_k + \lambda_k\right\}^+\mathbf 1_A
)|\mathcal F_{k-1}]\\
& = \mathbf 1_A \mathbb E[l(\lambda\vee(\widehat V_k -S_k)
)|\mathcal F_{k-1}] + \mathbf 1_{A^c} \mathbb E[l(\widehat V_k -S_k
)|\mathcal F_{k-1}]\\
&\leq \alpha_k,
\end{align*}
where we denote $A := \{\mathbb E[l(\widehat V_k -S_k )|\mathcal
  F_{k-1}]>\alpha_k\}\in \mathcal F_{k-1}$. This proves that $V_{k-1}
  \leq \widehat V_{k-1}$. 

The opposite inequality only needs to be shown on the set $A$. As in the proof of
  Proposition \ref{prop:value function time consistent}, we have,
\begin{equation*}V_{k-1}-\mathbb E[\widehat{V}_k|\mathcal F_{k-1}]
=\essinf_{M\in \mathcal F_k, M\geq 0}\{\mathbb E[M|\mathcal F_{k-1}]:
\mathbb E[l(M+\widehat{V}_k-S_k)|\mathcal
F_{k-1}]\leq\alpha_k\}.
\end{equation*}
Let $M$ be any random variable satisfying the constraints. Then, by
convexity,
\begin{align*}
0&\leq \mathbb E[l(\lambda_k \vee (\widehat V_k - S_k)) - l(M+\widehat
V_k - S_k)|\mathcal F_{k-1}] \\
&\leq   \mathbb E[l'(\lambda_k \vee (\widehat V_k - S_k)) (\lambda_k \vee (\widehat V_k - S_k)- M-\widehat
V_k + S_k)|\mathcal F_{k-1}] \\
& = \mathbb E[\mathbf 1_{\lambda_k \geq \widehat V_k -
  S_k}l'(\lambda_k) (\lambda_k -M - \widehat V_k + S_k)-\mathbf 1_{\lambda_k < \widehat V_k -
  S_k}l'(\widehat V_k - S_k) M|\mathcal F_{k-1}]\\
& = l'(\lambda_k)\mathbb E[ (\lambda_k - \widehat V_k +
S_k)^+-M|\mathcal F_{k-1}]
-\mathbb E[(l'(\widehat V_k - S_k)-l'(\lambda_k))^+ M|\mathcal F_{k-1}],
\end{align*}
where $l'(x_0)$ denotes any number belonging to the subdifferential of
the convex function $l$ at the point $x_0$, defined by $$\partial
l(x_0):= \{y:l(x)-l(x_0)\geq y(x-x_0)\ \forall x\}.$$ Since $M\geq 0$, this
implies 
$$
l'(\lambda_k)\mathbb E[ (\lambda_k - \widehat V_k +
S_k)^+-M|\mathcal F_{k-1}] \geq 0.
$$
By the assumption of the proposition, on the set $\{0\in \partial
l(\lambda_k)\}$, $l(\lambda_k)< \alpha_k$ and $\lambda_k$ may not be
the solution of Equation \eqref{eqlambda}. Therefore,
$l'(\lambda_k)<0$ almost surely, and 
$$
\mathbb E[ (\lambda_k - \widehat V_k +
S_k)^+-M|\mathcal F_{k-1}] \leq 0
$$
on $A$, which finishes the proof.
\item The formula for $V_{n-1}$ follows from part 1. When $\mathbb E[(V_k -
S_k)^-|\mathcal F_{k-1}]\leq \alpha_k$, $\mathbb E[V_k|\mathcal
F_{k-1}]\geq \mathbb E[V_k\vee S_k - \alpha_k|\mathcal F_{k-1}]$, and
the formulas hold true. 
Assume then that $\mathbb E[(V_k - S_k)^-|\mathcal F_{k-1}]>\alpha_k$.
First we observe that $\lambda_k<0$ a.s., otherwise on the set where $\lambda_k\geq 0$,
$l(\lambda_k \vee (V_k-S_k)) = 0$ and the conditional expectation
cannot be equal to $\alpha_k$. Therefore, 
\begin{align*}
&\mathbb E\left[\left\{S_k +
    \lambda_k\right\}\vee V_k\Big| \mathcal F_{k-1}\right] = \mathbb
E[S_k|\mathcal F_{k-1}] + \mathbb E[\lambda_k \vee (V_k-S_k)|\mathcal F_{k-1}] \\
&  = E[S_k|\mathcal F_{k-1}] + \mathbb E[(\lambda_k \vee
(V_k-S_k))^+|\mathcal F_{k-1}] - \mathbb E[(\lambda_k \vee
(V_k-S_k))^-|\mathcal F_{k-1}]\\
&  = E[S_k|\mathcal F_{k-1}] + \mathbb E[
(V_k-S_k)^+|\mathcal F_{k-1}] - \mathbb E[l(\lambda_k \vee
(V_k-S_k))|\mathcal F_{k-1}]\\
&= \mathbb E[
V_k \vee S_k|\mathcal F_{k-1}] - \alpha_k.
\end{align*}
\end{enumerate}
\end{proof}

\subsection{Lookback-style constraint}

The lookback constraint involves the maximum value of the loss  function during the whole period. So the dynamic programming structure  takes into account this variable.

Recall that $\mathcal M_{LB}$ denotes the set of all $\mathbb Q$-supermartingales $(M_k)_{k=0}^n$ with 
\[\mathbb E^\mathbb P[\max_{k=1,\dots,n}\{l(M_k - S_k)-\alpha_k\} ] \leq 0.
\]
Define $V_k(N_k,Z_k)$, where $(N_k,Z_k)$ is  a pair of $\mathcal F_k$-measurable random variables,  to be  the essential infimum of all $M_k\in \mathcal F_k$ such that there exist a $\mathbb Q$-supermartingale $(M_t)_{t=k}^n$ and a $\mathbb P$-supermartingale $(N_t)_{t=k}^{n}$ verifying 
\begin{equation}\label{Equ:conditionconditonnelle}\max\big\{Z_k,\max_{t\in\{k+1,\ldots,n\}}\{l(M_t-S_t)-\alpha_t\}\big\}\leq N_n.\end{equation}
We denote the set of all such $\mathbb Q$-supermartingales $(M_t)_{t=k}^n$ by $\mathcal M_{LB,k}(N_k,Z_k)$.
By convention, let
\[V_n(N_n,Z_n)=(+\infty)\indic_{\{Z_n>N_n\}}+(-\infty)\indic_{\{Z_n\leq N_n\}}.\]

\begin{proposition} 
\[V_0(0,-\infty)=\inf_{(M_k)_{k=0}^n\in\mathcal M_{LB}}M_0.\]
\end{proposition}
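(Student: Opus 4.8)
The plan is to show directly that $\mathcal M_{LB,0}(0,-\infty)=\mathcal M_{LB}$; the claimed identity for $V_0(0,-\infty)$ then follows immediately, since $\mathcal F_0$ is trivial and hence the essential infimum defining $V_0(0,-\infty)$ is simply the infimum of the (constant) initial values $M_0$ over all $(M_t)_{t=0}^n\in\mathcal M_{LB,0}(0,-\infty)$. The key simplification is that with $Z_0=-\infty$ the term $\max\{Z_0,\cdot\}$ in \eqref{Equ:conditionconditonnelle} collapses to the running maximum of $l(M_t-S_t)-\alpha_t$ alone, which is exactly the quantity appearing in the definition of $\mathcal M_{LB}$. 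This mirrors the reformulations already carried out for the European- and time-consistent constraints.

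First I would prove $\mathcal M_{LB,0}(0,-\infty)\subset\mathcal M_{LB}$. Take $(M_t)_{t=0}^n\in\mathcal M_{LB,0}(0,-\infty)$ together with a witnessing $\mathbb P$-supermartingale $(N_t)_{t=0}^n$ with $N_0=0$. Writing $L:=\max_{t=1,\dots,n}\{l(M_t-S_t)-\alpha_t\}$, condition \eqref{Equ:conditionconditonnelle} for $k=0$ with $Z_0=-\infty$ reads $L\le N_n$, while the supermartingale property gives $\mathbb E^{\mathbb P}[N_n]\le N_0=0$. Hence $\mathbb E^{\mathbb P}[L]\le 0$, that is, $(M_t)_{t=0}^n\in\mathcal M_{LB}$.

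Conversely, for $(M_t)_{t=0}^n\in\mathcal M_{LB}$ I would construct an admissible $N$. Since $l$ is bounded below by Assumption \ref{assumption 1}, $L$ is bounded below, and the constraint $\mathbb E^{\mathbb P}[L]\le 0$ then forces $L\in L^1(\mathbb P)$; set $N_t:=\mathbb E^{\mathbb P}[L\,|\,\mathcal F_t]$ for $t=1,\dots,n$ and $N_0:=0$. Then $(N_t)_{t=1}^n$ is a $\mathbb P$-martingale, and $\mathbb E^{\mathbb P}[N_1\,|\,\mathcal F_0]=\mathbb E^{\mathbb P}[L]\le 0=N_0$, so $(N_t)_{t=0}^n$ is a $\mathbb P$-supermartingale starting at $0$. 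Since $N_n=L$, \eqref{Equ:conditionconditonnelle} holds (with equality) for $k=0$, so $(M_t)_{t=0}^n\in\mathcal M_{LB,0}(0,-\infty)$. The two inclusions give the set equality, hence the equality of the infima of initial values.

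No step is genuinely hard here; the only delicate points are the integrability of $L$ (which is where the lower bound of Assumption \ref{assumption 1} enters) and the observation that the $\mathbb P$-supermartingale can be made to take the boundary value $N_0=0$ — rather than only $N_0=\mathbb E^{\mathbb P}[L]$ — precisely because $\mathbb E^{\mathbb P}[L]\le 0$ leaves room for a downward first step.
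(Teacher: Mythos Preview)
Your proposal is correct and follows essentially the same approach as the paper: both directions proceed by showing that membership in $\mathcal M_{LB,0}(0,-\infty)$ is equivalent to membership in $\mathcal M_{LB}$, using the same witnessing $\mathbb P$-supermartingale $N_t=\mathbb E^{\mathbb P}[L\mid\mathcal F_t]$ for $t\ge 1$ with $N_0=0$. Your explicit remark on the integrability of $L$ (via the lower bound on $l$) is a small clarification that the paper leaves implicit.
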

\begin{proof}
Let $(M_k)_{k=0}^n$ and $(N_k)_{k=0}^n$ be respectively $\mathbb Q$ and $\mathbb P$ supermartingales which verify \eqref{Equ:conditionconditonnelle} with $(N_0,Z_0)=(0,-\infty)$. Then
\[\max_{k\in\{1,\ldots,n\}}\{l(M_k-S_k)-\alpha_k\}\leq N_n.\]
Therefore
\[0=N_0\geq\mathbb E^{\mathbb P}[N_n]\geq\mathbb E^{\mathbb P}[\max_{k\in\{1,\ldots,n\}}\{l(M_k-S_k)-\alpha_k\}].\]
Hence $(M_k)_{k=0}^n\in\mathcal M_{LB}$. Therefore  \[V_0(0,-\infty)\geq\inf_{(M_k)_{k=0}^n\in\mathcal M_{LB}}M_0.\]

Conversely, given a $\mathbb Q$-supermartingale $(M_k)_{k=0}^n$ in $\mathcal M_{LB}$, we can choose
\[N_n=\max_{k\in\{1,\ldots,n\}}\{l(M_k-S_k)-\alpha_k\}\]
and $N_k=\mathbb E^{\mathbb P}[N_n|\mathcal F_k]$, $k\in\{1,\ldots,n-1\}$ and $N_0=0$ to construct a $\mathbb P$-supermartingale $(N_k)_{k=0}^n$ verifying the condition \eqref{Equ:conditionconditonnelle} with $(N_0,Z_0)=(0,-\infty)$. Thus  $M_0\geq V_0(0,-\infty)$. The result is proved.
\end{proof}

\begin{proposition}\label{lookback: recursive}
$V_k(N_k,Z_k)$ equals the essential infimum of all $\mathbb E^{\mathbb Q}[M|\mathcal F_k]$, $M\in\mathcal F_{k+1}$ such that there exists a random variable $N\in\mathcal F_{k+1}$ satisfying\begin{equation}\label{Equ:conditionsLB}
\begin{cases}
\mathbb E^{\mathbb P}[N|\mathcal F_k]=N_k,\\
M\geq V_{k+1}\big(N,\max(Z_k,l(M-S_{k+1})-\alpha_{k+1})\big).
\end{cases}
\end{equation}
\end{proposition}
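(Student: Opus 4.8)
The plan is to treat this as a single step of the dynamic programming recursion, so that no induction on $k$ is needed: I would prove the two inequalities between $V_k(N_k,Z_k)$ and the right-hand side essential infimum, which I denote $\widetilde V_k(N_k,Z_k)$, and throughout I write $\bar Z:=\max(Z_k,l(M-S_{k+1})-\alpha_{k+1})$ for the updated running maximum attached to a candidate $M\in\mathcal F_{k+1}$. First I would record two preliminary facts. One is an $\mathcal M_{LB}$-analogue of Lemma~\ref{Lem:filtrant}: the set $\mathcal M_{LB,k+1}(N,Z)$ is stable under taking the minimum of the initial values of two of its elements, obtained by pasting the two $\mathbb Q$-supermartingales together with their $\mathbb P$-supermartingale witnesses along the $\mathcal F_{k+1}$-set where one initial value dominates the other, the constraint \eqref{Equ:conditionconditonnelle} being inherited on each piece. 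The other is that $V_{k+1}(\,\cdot\,,Z)$ is nonincreasing in its first argument: if $N'\le N$ and $(\nu_t)$ is a $\mathbb P$-supermartingale witness starting from $N'$, then $(\nu_t+(N-N'))_t$ is again a $\mathbb P$-supermartingale, now starting from $N$, whose terminal value still dominates the left side of \eqref{Equ:conditionconditonnelle}; consequently $\widetilde V_k$ is unchanged if ``$\mathbb E^{\mathbb P}[N|\mathcal F_k]=N_k$'' is relaxed to ``$\mathbb E^{\mathbb P}[N|\mathcal F_k]\le N_k$''.

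For the inequality $\widetilde V_k\le V_k$ I would start from an arbitrary $(M_t)_{t=k}^n\in\mathcal M_{LB,k}(N_k,Z_k)$, with $\mathbb P$-supermartingale witness $(N_t)_{t=k}^n$ as in \eqref{Equ:conditionconditonnelle}, reduce to $M_k=\mathbb E^{\mathbb Q}[M_{k+1}|\mathcal F_k]$ by lowering the initial value if necessary, and set $M:=M_{k+1}$, $N:=N_{k+1}$. Then $\mathbb E^{\mathbb P}[N|\mathcal F_k]\le N_k$, and since $\max\{Z_k,\max_{t=k+1}^n(l(M_t-S_t)-\alpha_t)\}=\max\{\bar Z,\max_{t=k+2}^n(l(M_t-S_t)-\alpha_t)\}\le N_n$, the truncated processes $(M_t)_{t=k+1}^n$ and $(N_t)_{t=k+1}^n$ witness $(M_t)_{t=k+1}^n\in\mathcal M_{LB,k+1}(N,\bar Z)$, hence $M\ge V_{k+1}(N,\bar Z)$. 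So $(M,N)$ is admissible for $\widetilde V_k$ and $\widetilde V_k(N_k,Z_k)\le\mathbb E^{\mathbb Q}[M|\mathcal F_k]=M_k$; taking the essential infimum over $M_k$ gives the claim.

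For the reverse inequality $V_k\le\widetilde V_k$ I would fix an admissible pair $(M,N)$. Since $M\ge V_{k+1}(N,\bar Z)$, which is the essential infimum of $M_{k+1}$ over $(M_t)_{t=k+1}^n\in\mathcal M_{LB,k+1}(N,\bar Z)$, the filtering fact above and \cite[V.18]{doob} produce a decreasing sequence $(M_{k+1}^{(m)})_m$ of such initial values with $\inf_m M_{k+1}^{(m)}\le M$; let $(M_t^{(m)})_{t=k+1}^n$ and $(N_t^{(m)})_{t=k+1}^n$ (with $N_{k+1}^{(m)}=N$) be corresponding witnesses. The crucial step is to replace the initial value $M_{k+1}^{(m)}$ by $M\vee M_{k+1}^{(m)}$: raising an initial value preserves the $\mathbb Q$-supermartingale property and leaves the constraint for $t\ge k+2$ untouched, while, because $l$ is decreasing, $l(M\vee M_{k+1}^{(m)}-S_{k+1})-\alpha_{k+1}\le l(M-S_{k+1})-\alpha_{k+1}\le\bar Z\le N_n^{(m)}$. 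Gluing, the $\mathbb Q$-supermartingale equal to $\mathbb E^{\mathbb Q}[M\vee M_{k+1}^{(m)}|\mathcal F_k]$ at $k$, to $M\vee M_{k+1}^{(m)}$ at $k+1$ and to $M_t^{(m)}$ for $t\ge k+2$, together with the $\mathbb P$-supermartingale equal to $N_k$ at $k$ and to $N_t^{(m)}$ afterwards (a supermartingale because $\mathbb E^{\mathbb P}[N|\mathcal F_k]=N_k$), exhibits $\mathbb E^{\mathbb Q}[M\vee M_{k+1}^{(m)}|\mathcal F_k]$ as the initial value of an element of $\mathcal M_{LB,k}(N_k,Z_k)$, so $V_k(N_k,Z_k)\le\mathbb E^{\mathbb Q}[M\vee M_{k+1}^{(m)}|\mathcal F_k]$. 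Since $M\vee M_{k+1}^{(m)}$ decreases to $M$ with integrable lower bound $M$, conditional dominated convergence gives $V_k(N_k,Z_k)\le\mathbb E^{\mathbb Q}[M|\mathcal F_k]$, and an essential infimum over $(M,N)$ finishes the proof. For $k=n-1$ the same argument applies verbatim, with $V_n$ read through its $\pm\infty$ convention, so no separate base case is needed.

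The step I expect to be the real obstacle is this last gluing: the sequence extracted from the essential infimum at level $k+1$ need not lie below $M$ for finite $m$, and one cannot simply push its initial value down to $M$ without possibly destroying the supermartingale inequality at the transition $k+1\to k+2$. Truncating from below by $M$, i.e. working with $M\vee M_{k+1}^{(m)}$, is what reconciles the two requirements, and it is compatible with the lookback constraint precisely because $l$ is monotone, so increasing the hedging value at time $k+1$ can only lower the penalty $l(\,\cdot\,-S_{k+1})$. The $\mathcal M_{LB}$-analogue of Lemma~\ref{Lem:filtrant} and the passage between ``$=$'' and ``$\le$'' in the constraint on $N$ are routine, but they should be stated explicitly since both are used above.
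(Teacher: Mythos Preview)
Your proof is correct and follows the same two-inequality structure as the paper's. The direction $\widetilde V_k\le V_k$ is handled identically; your monotonicity-in-$N$ remark is exactly the paper's ``without loss of generality $(N_t)$ is a $\mathbb P$-martingale''. In the direction $V_k\le\widetilde V_k$, the paper extends $(M_t^{(m)})_{t=k+1}^n$ to level $k$ by setting $M_k^{(m)}=\mathbb E^{\mathbb Q}[M_{k+1}^{(m)}|\mathcal F_k]$ and asserts that the resulting process lies in $\mathcal M_{LB,k}(N_k,Z_k)$; but this would need $l(M_{k+1}^{(m)}-S_{k+1})-\alpha_{k+1}\le N_n^{(m)}$, whereas \eqref{Equ:conditionconditonnelle2} only bounds $l(M-S_{k+1})-\alpha_{k+1}$. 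Your replacement of $M_{k+1}^{(m)}$ by $M\vee M_{k+1}^{(m)}$ --- the same device used in the proof of Proposition~\ref{Pro:TCprogdy} --- is precisely what closes this gap, so the step you singled out as the real obstacle is a genuine subtlety that the paper glosses over, and your treatment of it is the correct one.
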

\begin{proof} Denote by $\widetilde V_k(N_k,Z_k)$ the essential infimum defined in the theorem.
Let $M$ and $N$ be random variables which verify  \eqref{Equ:conditionsLB}. By an argument similar to Lemma \ref{Lem:filtrant}, 
there exists a decreasing sequence of random variables $(M_{k+1}^{(m)})_{m\in\mathbb N}$ in $\mathcal M_{LB,k+1}(N,\max(Z_k,l(M-S_{k+1})-\alpha_{k+1}))$ such that $M\geq\inf_{m\in\mathbb N}M_{k+1}^{(m)}$. Moreover, there exist $\mathbb Q$-supermartingales $(M_t^{(m)})_{t=k+1}^n$ and $\mathbb P$-supermartingales $(N_t^{(m)})_{t=k+1}^n$ verifying 
\begin{equation}\label{Equ:conditionconditonnelle2}\max\big\{Z_k,l(M-S_{k+1})-\alpha_{k+1},\max_{t\in\{k+2,\ldots,n\}}(l(M_t^{(m)}-S_t)-\alpha_t)\big\}\leq N_n^{(m)}\end{equation}
and $N_{k+1}^{(m)}=N$. Let $M_k^{(m)}=\mathbb E^{\mathbb Q}[M_{k+1}^{(m)}|\mathcal F_k]$ and $N_k^{(m)}=N_k$. The processes $(M_t^{(m)})_{t=k}^n$ and  $(N_t^{(m)})_{t=k}^{n}$ are respectively $\mathbb Q$ and $\mathbb P$ supermartingales which satisfy the condition \eqref{Equ:conditionconditonnelle}. Therefore,  $M_k^{(m)}\geq V_k(N_k,Z_k)$ for any $m\in\mathbb N$, which implies $\mathbb E^{\mathbb Q}[M|\mathcal F_k]\geq V_k(N_k,Z_k)$. So $\widetilde V_k(N_k,Z_k)\geq V_k(N_k,Z_k)$.

Conversely, let $(M_t)_{t=k}^n$ and $(N_t)_{t=k}^n$ be respectively $\mathbb Q$-supermartingale and $\mathbb P$-supermartingale which verify the condition \eqref{Equ:conditionconditonnelle}. Without loss of generality, we may assume  $M_k=\mathbb E^{\mathbb Q}[M_{k+1}|\mathcal F_k]$ and that $(N_t)_{t=k}^n$ is a $\mathbb P$-martingale
. Note that 
$\max\big\{Z_k,\max_{t\in\{k+1,\ldots,n\}}(l(M_t-S_t)-\alpha_t)\big\}$
can also be written as
\[\max\big\{\max(Z_k,l(M_{k+1}-S_{k+1})-\alpha_{k+1}),\max_{t\in\{k+2,\ldots,n\}}(l(M_t-S_t)-\alpha_t)\big\}.\]
By the definition of $V_{k+1}$, 
\[M_{k+1}\geq V_{k+1}(N_{k+1},\max(Z_k,l(M_{k+1}-S_{k+1})-\alpha_{k+1})),\]
which shows that $M_{k+1}$ and $N_{k+1}$ verify the conditions in \eqref{Equ:conditionsLB}. Therefore $M_k=\mathbb E^{\mathbb Q}[M_{k+1}|\mathcal F_k]\geq\widetilde V_k(N_k,Z_k)$. Since $(M_t)_{t=k}^n$ is arbitrary, we obtain $V_{k}(N_k,Z_k)\geq\widetilde V_k(N_k,Z_k)$.

\end{proof}

\section{Explicit examples for $n=2$}
\label{examples2d.sec}

In this section, we apply the dynamic programming results obtained in
the previous section under the three types of constraints  and we give
the explicit form of the smallest hedging portfolio value in the setting of two time steps.  

\subsection{European-style constraint} 

In the following, we assume that the function $l$ satisfies Assumption \ref{assumption 2}. We have by Corollary \ref{Cor:dynamic},
$$
V_1(\alpha_2)=\essinf_{M\in\mathcal F_2}\{\mathbb E^{\mathbb
  Q}[M|\mathcal F_1] \text{ s.t. } \mathbb E^{\mathbb
  P}[l(M-S_2)|\mathcal F_1]\leq \alpha_2\}
$$
By Proposition \ref{prop:value function time consistent}, 
\begin{equation}\label{V1,n=2}V_1(\alpha_2)=\mathbb E^{\mathbb
    Q}[S_2|\mathcal F_1]+\mathbb E^{\mathbb
    Q}[I(c(\alpha_2)Z_2/Z_1)|\mathcal F_1],\end{equation}
where 
 $c(\alpha_2)$ is the $\mathcal F_1$-measurable random variable such that  
\[\mathbb E^{\mathbb P}[l(I(c(\alpha_2)Z_2/Z_1))|\mathcal F_1]=\alpha_2.\]
and 
$$
Z_2 = \frac{d\mathbb Q}{d\mathbb P}; \quad Z_1 = \mathbb E^{\mathbb P}
[\frac{d\mathbb Q}{d\mathbb P}|\mathcal F_1].
$$

We then use  \eqref{Equ:europfinale} again to compute the value of $V_0(\alpha_1,\alpha_2)$, which identifies with the essential infimum of $\mathbb E^{\mathbb Q}[M]$, $M\in\mathcal F_1$ such that $\mathbb E[l(M-S_1)]\leq \alpha_1$ and that there exists $N\in\mathcal F_1$ verifying $\mathbb E[N]=\alpha_2$ and $M\geq V_1(N)$. Namely, 
\begin{equation}\label{V_0EU, n=2}V_0(\alpha_1,\alpha_2)=\inf_{M\in\mathcal F_1}\{\mathbb E^{\mathbb Q}[M]\,:\,
\mathbb E^{\mathbb P}[l(M-S_1)]\leq \alpha_1,\,\mathbb E^{\mathbb P}[V_1^{-1}(M)]\leq \alpha_2\}.
\end{equation}

\subsubsection{Exponential utility case}

Consider the loss function where $l(x)=e^{-px}-1$, with $p>0$. Direct computations yield $l'(x)=-pe^{-px}$, $I(x)=-\frac{1}{p}\log(-x/p)$ and $l(I(x))=-x/p-1$. Therefore, 
\[c(\alpha_2)=-{p(1+\alpha_2)},\]
from which we deduce
\begin{equation}\label{Equ:v1exp}V_1(\alpha_2)=\mathbb E^{\mathbb Q}[S_2|\mathcal F_1]-\frac{\log(\alpha_2+1)}{p}-\frac 1p{\mathbb
  E^{\mathbb P}[Z_2/Z_1\log(Z_2/Z_1)|\mathcal F_1]}.\end{equation}
Denote by 
\[D_1=-\frac 1p{\mathbb E^{\mathbb P}[Z_2/Z_1\log(Z_2/Z_1)|\mathcal F_1]}.\]
From \eqref{Equ:v1exp} we have that
\[V_1^{-1}(x)=l(x-\mathbb E^{\mathbb Q}[S_2|\mathcal F_1]-D_1).\]
Hence $V_0(\alpha_1,\alpha_2)$ is the infimum of the expectations
$\mathbb E^{\mathbb Q}[M]=\mathbb E^{\mathbb P}[Z_1M]$, where $M\in\mathcal F_1$ such that
\begin{equation}\label{Equ:constraint}\mathbb E^{\mathbb P}[l(M-S_1)]\leq \alpha_1\;\text{ and }\;
\mathbb E[l(M-\mathbb E^{\mathbb Q}[S_2|\mathcal F_1]-D_1)]\leq \alpha_2.\end{equation}
We will use Lagrange multiplier method (with Kuhn-Tucker conditions) to study this problem. If $M\in\mathcal F_1$ realizes the infimum of $\mathbb E^{\mathbb Q}[M]$ subjected to the constraints \eqref{Equ:constraint}, 
then $M$ verifies the first order condition
\[Z_1-p\lambda e^{-p(M-S_1)}-p\mu e^{-p(M-\mathbb E^{\mathbb Q}[S_2|\mathcal F_1]-D_1)}=0,\]
where $\lambda$ and $\mu$ are respectively positive solutions to the equations
\[\mathbb E^{\mathbb P}\bigg[\frac{Z_1}{
p(\lambda+\mu \exp(p(\mathbb E^{\mathbb Q}[S_2|\mathcal F_1]-S_1+D_1)))}\bigg]=1+\alpha_1\]
and
\[\mathbb E^{\mathbb P}\bigg[\frac{Z_1}{p(\lambda\exp(S_1-\mathbb E^{\mathbb Q}[S_2|\mathcal F_1]+D_1)+\mu)}\bigg]=1+\alpha_2.\] 
We introduce the notation $X=\exp(p(\mathbb E^{\mathbb Q}[S_2|\mathcal F_1]-S_1+D_1))$ and rewrite the above equations as
\[\mathbb E^{\mathbb Q}\Big[\frac{1}{\lambda+\mu X}\Big]=p(1+\alpha_1)\;\quad\text{and}\quad\;\mathbb E^{\mathbb Q}\Big[\frac{X}{\lambda+\mu X}\Big]=p(1+\alpha_2),\]
from which we obtain
\begin{equation}\label{Equ:equlinear}\lambda p(1+\alpha_1)+\mu p(1+\alpha_2)=1\end{equation}
by taking a linear combination. Moreover, the quotient of the two equations shows that $\lambda/\mu$ verifies the following equation (provided that $\mu\neq 0$)
\[\frac{f(t)}{1-tf(t)}=\frac{1+\alpha_1}{1+\alpha_2},\]
where $f(t)=\mathbb E^{\mathbb Q}[(t+X)^{-1}]$. Note that
\[\lim_{t\rightarrow 0_+}\frac{f(t)}{1-tf(t)}=\mathbb E^{\mathbb Q}[X^{-1}],\qquad
\lim_{t\rightarrow+\infty}\frac{f(t)}{1-tf(t)}=\mathbb E^{\mathbb Q}[X]^{-1}.\]

\begin{itemize}
\item Suppose that 
\[\mathbb E^{\mathbb Q}[X^{-1}]>\frac{1+\alpha_1}{1+\alpha_2}>\mathbb E^{\mathbb Q}[X]^{-1}.\]
The two constraints in \eqref{Equ:constraint} are saturated. We can use numerical methods to compute the value of $\lambda/\mu$ and then use the relation \eqref{Equ:equlinear} to find explict value of $\lambda$ and $\mu$.
\item If 
\[\frac{1+\alpha_1}{1+\alpha_2}\leq \mathbb E^{\mathbb Q}[X]^{-1},\]
then only the first constraint in \eqref{Equ:constraint} is saturated and one should take $\mu=0$. In this case the first order condition becomes
\[Z_1-p\lambda e^{-p(M-S_1)}=0,\]
with $\lambda=(p(1+\alpha_1))^{-1}$. Therefore, one has
\[M=S_1-\frac{1}{p}\log((1+\alpha_1)Z_1).\]
Thus we obtain
\[V_0(\alpha_1,\alpha_2)=\mathbb E^{\mathbb Q}[S_1]-\frac{1}{p}\log(1+\alpha_1)-\frac{1}{p}
\mathbb E^{\mathbb P}[Z_1\log Z_1].\]
\item If
\[\frac{1+\alpha_1}{1+\alpha_2}\geq\mathbb E^{\mathbb Q}[X^{-1}],\]
then only the second constraint in \eqref{Equ:constraint} is saturated, and one has $\lambda=0$. In this case the first order condition becomes
\[Z_1-p\mu e^{-p(M-\mathbb E^{\mathbb Q}[S_2|\mathcal F_1]-D_1)}=0\]
with $\mu=(p(1+\alpha_2))^{-1}$. Therefore, one obtains
\[\begin{split}M&=\mathbb E^{\mathbb Q}[S_2|\mathcal F_1]+D_1-\frac{1}{p}\log((1+\alpha_1)Z_1)\\
&=\mathbb E^{\mathbb Q}[S_2|\mathcal F_1]-\frac{1}{p}\mathbb
E^{\mathbb P}[Z_2/Z_1\log (Z_2/Z_1)|\mathcal F_1]-\frac{1}{p}\log((1+\alpha_1)Z_1).
\end{split}\]
Finally, we obtain
\[V_0(\alpha_1,\alpha_2)=\mathbb E^{\mathbb
  Q}[S_2]-\frac{1}{p}\log(1+\alpha_2)-\frac 1p\mathbb E^{\mathbb
  P}[Z_2\log Z_2].\]
\end{itemize}

\subsection{Time-consistent constraint}
The computation of $V_1$ is identical to \eqref{V1,n=2} in the European case: $V_1 =
V_1(\alpha_2)$. By Proposition \ref{Pro:TCprogdy},
{\begin{equation}\label{V_0TC,n=2}
V_0 = \inf_{M \in \mathcal F_1}\{\mathbb E^{\mathbb Q}[M] : \mathbb
E^{\mathbb P}[l(M-S_1)]\leq \alpha_1, V_1^{-1}(M) \leq \alpha_2\}.
\end{equation}}
\subsubsection{Exponential utility}
From \eqref{Equ:v1exp} as in the European case,
$$
V_1 = \mathbb E^{\mathbb Q}[S_2|\mathcal F_1]-\frac{\log(\alpha_2+1)}{p}
+D_1.
$$
By Proposition \ref{prop:value function time consistent}, $V_0$ may be found using the
following algorithm:
\begin{itemize}
\item Compute $\mathbb E^{\mathbb P}[l(V_1 - S_1)]$. 
\item If $\mathbb
  E^{\mathbb P}[l(V_1 - S_1)] \leq \alpha_1$ then the constraint at
  date $1$ is not binding, and the solution is given by $V_0=\mathbb
  E^{\mathbb Q}[V_1]$. 
\item If $\mathbb
  E^{\mathbb P}[l(V_1 - S_1)] > \alpha_1$ then the constraint at
  date $1$ is binding and we proceed as follows:
\begin{itemize}
\item Compute $\lambda \in \mathbf R$ by solving the equation
$$
\mathbb E^{\mathbb P} [l\left(I(\lambda Z_1)\vee (V_1 - S_1)\right)] =
\alpha_1.
$$
\item The solution is given by 
$$
V_0 = \mathbb
  E^{\mathbb Q}[V_1] + \mathbb E^{\mathbb Q}[\{S_1 - V_1 + I(\lambda
  Z_1)\}^+]. 
$$

\end{itemize}
\end{itemize}

\subsection{Lookback-style constraint}
By Proposition \ref{lookback: recursive},
$$
V_{1}(y,z) = \inf_{M\in \mathcal F_2}\left\{\mathbb E^{\mathbb Q}
  [M|\mathcal F_1] : \mathbb E^{\mathbb P}[\max(z, l(M-S_2) -
  \alpha_2)|\mathcal F_1]< y\right\}
$$

So the value function  $V_1(0,-\infty)$ coincides with \eqref{V1,n=2} as in the European and the American cases.  
We then have the minimal capital at the initial time as
$$
V_0(0,-\infty) = \inf_{M\in \mathcal
F_{1}}\{\mathbb E^{\mathbb Q}[M] :\exists N \in \mathcal F_1 : \mathbb
E^{\mathbb P}[N]=0, M\geq V_1(N,l(M-S_1)-\alpha_1)\}. 
$$
However, it is more difficult to obtain explicit results as \eqref{V_0EU, n=2} ou \eqref{V_0TC,n=2} for the lookback-style constraint. 
\subsection{Risk-neutral case}

As we have already mentioned, in the three cases, the value functions $V_1$ at time $t=1$ coincide. In the risk-neutral case where $\mathbb P=\mathbb Q$, we have $Z_1=Z_2=1$, so 
\[V_1(\alpha_2)=\mathbb E[S_2|\mathcal F_1]+l^{-1}(\alpha_2).\]
For the initial value at time $t=0$, we have the following results : 
\begin{itemize}
\item {\it European-style constraint : }
\[V_0^{EU}=\inf_{M\in\mathcal F_1}\{\mathbb E[M] : \mathbb E[l(M-S_1)]\leq \alpha_1, \mathbb E[l(M-\mathbb E[S_2|\mathcal F_1])]\leq \alpha_2\}\]

\item {\it American-style constraint : }
\[V_0^{TC}=\inf_{M\in\mathcal F_1}\{\mathbb E[M] : \mathbb E[l(M-S_1)]\leq \alpha_1, l(M-\mathbb E[S_2|\mathcal F_1])\leq \alpha_2\}\]

\item {\it Lookback-style constraint : }
\[V_0^{LB}= \inf_{M\in \mathcal
F_{1}}\{\mathbb E [M]: \mathbb E[\max\{ l(M-S_1)-\alpha_1, l(M-\mathbb E[S_2|\mathcal
F_1])-\alpha_2\}]\leq 0\}.\]
\end{itemize}

In the risk-neutral case, the results hold under Assumption \ref{assumption 1} instead of Assumption \ref{assumption 2}. We are particularly interested in the loss function 
$l(x)=(-x)_+$ where explicit results are obtained, thanks to Lemma
\ref{2dim} in the appendix (for the time-consistent constraint the
result follows by taking $\beta = 0$ and $Y = \mathbb E[S_2|\mathcal
F_1] - \alpha_2$) : 
\begin{align*}
V_0^{EU}&=\max\big(\mathbb E[S_1]-\alpha_1,\mathbb E[S_2]-\alpha_2,\mathbb E[(S_1\vee\mathbb E[S_2|\mathcal F_1])]-\alpha_1-\alpha_2\big),\\
V_0^{TC} &= \max\big(\mathbb E[S_2 ]- \alpha_2,
\mathbb E[S_1 \vee (\mathbb E[S_2|\mathcal F_1] -
\alpha_2)]-\alpha_1\big),\\
V_0^{LB} &= \mathbb E[\max\big(S_1 - \alpha_1,(\mathbb
E[S_2|\mathcal F_1] - \alpha_2)\big)]. 
\end{align*}

\subsection{Numerical illustration}

We compare the cost for hedging two
objectives under the three probabilistic constraints in a numerical example. 

In Figure \ref{exploss.fig}, we consider the risk-neutral case where $\mathbb P = \mathbb Q$  and the loss function  is given by $l(x) = (-x)_+$. The model is as follows : $S_1 =
S_0 e^{\sigma Z_1 - \frac{\sigma^2}{2}}$ and $S_2 =
S_0 e^{\sigma Z_2 - \frac{\sigma^2}{2}}$ where $S_0 = 100$, $\sigma =
0.2$ and $Z_1$ and $Z_2$ are standard normal random variables with
correlation $\rho = 50\%$. 
We fix the loss tolerance at the first date
to be $\alpha_1=  5$ and plot the 
hedging value $V_0$ for three different constraint styles as a function of the loss
tolerance at the second date $\alpha_2$. 

Not surprisingly, all the three curves are decreasing w.r.t. the constraint level $\alpha_2$.  Moreover, among the three constraints we consider, the European constraint corresponds to the lowest hedging cost and the lookback constraint to the highest one, which is coherent with Proposition \ref{pro:inclusion}. Finally, the cost of almost sure hedging is higher than the three probabilistic constraints. 


\begin{figure}
\centerline{\includegraphics[width=0.8\textwidth]{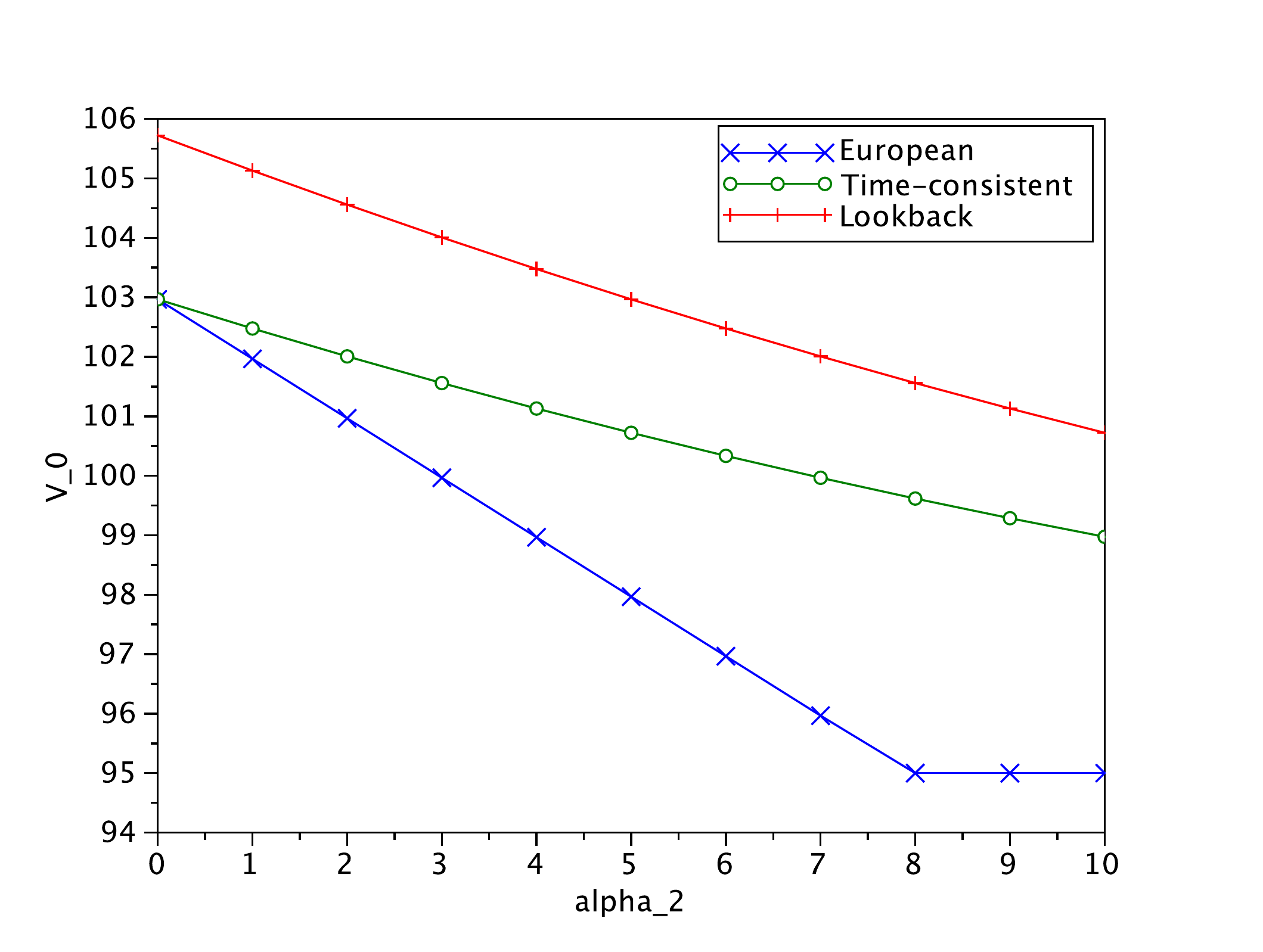}}
\caption{Cost of hedging two objectives with probabilistic constraints
of three different styles. The cost of hedging both constraints in an
almost sure way is equal to $107.966$ in this case.}
\label{exploss.fig}
\end{figure}

\section{Explicit examples for arbitrary $n$}
\label{examplesnd.sec}
For the multi-objective hedging problem, it is in general difficult to
get explicit solutions by using the recursive formulas. In this
section, we first present a situation where the explicit solution may
be obtained for the expected loss and then extend our framework to
discuss the conditional Value at Risk. 

Our first result deals with the expected loss constraint  given as a call function under a risk-neutral probability.

\begin{proposition}\label{incloss.cor}Let $l(x)=(-x)^+$.  
Assume that $\mathbb P=\mathbb Q$,   $\alpha_1,\dots,\alpha_n \geq 0$
and that the process $(S_k)_{k=0}^n$ is non-decreasing. Then,
\begin{align}
V_0^{EU} = \max_{k\in\{1,\dots,n\}} \{\mathbb E[S_k] - \alpha_k\}\label{exploss}
\end{align}

\end{proposition}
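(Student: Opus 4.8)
The plan is to prove the two inequalities separately. For the lower bound $V_0^{EU} \geq \max_{k}\{\mathbb E[S_k] - \alpha_k\}$, I would argue that for each fixed $k$, relaxing the problem by keeping only the constraint at date $k$ can only decrease the optimal value. By Proposition \ref{n1.prop} (part 2) applied with a single constraint — more precisely, by the reasoning behind the first inequality in the proof of the proposition following it — removing all constraints except the one at date $k$ gives $V_0^{EU} \geq \mathbb E[S_k] + l^{-1}(\alpha_k)$. Since $l(x) = (-x)^+$, we have $l^{-1}(\alpha_k) = \inf\{x : (-x)^+ \leq \alpha_k\} = -\alpha_k$ for $\alpha_k \geq 0$. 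Taking the maximum over $k$ yields the lower bound.

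For the upper bound, the plan is to exhibit a $\mathbb Q$-supermartingale $M \in \mathcal M_{EU}$ with $M_0 = \max_k\{\mathbb E[S_k] - \alpha_k\}$. Since $\mathbb P = \mathbb Q$, supermartingales are the same under both measures. The natural candidate is $M_k = \mathbb E[\,\max_{j \geq k}(S_j - \alpha_j)\,|\,\mathcal F_k] + \alpha_k$, or some closely related Snell-envelope-type object; one must check (i) that $M$ is a $\mathbb Q$-supermartingale, (ii) that $M_0$ equals the claimed value, and (iii) that the European constraint $\mathbb E[l(M_k - S_k)] = \mathbb E[(S_k - M_k)^+] \leq \alpha_k$ holds for each $k$. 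Here is where monotonicity of $S$ enters: because $S$ is non-decreasing, $\max_{j\geq k}(S_j - \alpha_j)$ may simplify, and the conditional expectation structure should give $M_k \geq S_k - \alpha_k$ pointwise would be too strong, so instead one estimates $\mathbb E[(S_k - M_k)^+]$. A cleaner route: define $M_k = \mathbb E[S_n | \mathcal F_k] - \alpha_n + c_k$ where the $c_k$ are deterministic corrections chosen so that $M$ stays a supermartingale (using that $\mathbb E[S_n|\mathcal F_k] \geq \mathbb E[S_k|\mathcal F_k] = S_k$ by monotonicity, hence $\mathbb E[S_j|\mathcal F_k]$ is non-decreasing in $j$), and verify the constraints directly.

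Concretely, I would try $M_k := \max_{j \in \{k,\dots,n\}} \big(\mathbb E[S_j \mid \mathcal F_k] - \alpha_j\big)$. The supermartingale property follows from Jensen/tower: $\mathbb E[M_{k+1}\mid \mathcal F_k] = \mathbb E[\max_j(\mathbb E[S_j|\mathcal F_{k+1}] - \alpha_j)\mid \mathcal F_k] \geq \max_j(\mathbb E[S_j|\mathcal F_k] - \alpha_j) = M_k$ — wait, this gives a \emph{submartingale} inequality, so this candidate is wrong in general. The monotonicity of $S$ is precisely what fixes this: since $S$ is non-decreasing, $\mathbb E[S_j \mid \mathcal F_k]$ is non-decreasing in $j$ for $j \geq k$ (as $\mathbb E[S_{j+1}|\mathcal F_k] = \mathbb E[\mathbb E[S_{j+1}|\mathcal F_j]|\mathcal F_k] \geq \mathbb E[S_j|\mathcal F_k]$), so the $\max$ over $j \in \{k,\dots,n\}$ is attained at $j = n$ only if $\alpha$ were constant; in general $M_k = \max_j(\mathbb E[S_n|\mathcal F_k] - \alpha_j)$ over a shrinking index set... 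I would instead set $M_k = \mathbb E[S_n|\mathcal F_k] - \min_{j \geq k}\alpha_j$, check it is a supermartingale (martingale part $\mathbb E[S_n|\mathcal F_k]$ plus non-increasing... no, $-\min_{j\geq k}\alpha_j$ is non-decreasing in $k$, breaking it again). The genuinely correct construction is the Snell envelope $M_k = \esssup_\tau \mathbb E[S_\tau - \alpha_\tau + (\text{tail terms})\mid \mathcal F_k]$ adapted to this problem, or simply to invoke Corollary \ref{Cor:dynamic} / Theorem \ref{EU main result} and compute the recursion explicitly for $l(x) = (-x)^+$ with $\mathbb P = \mathbb Q$ and $S$ non-decreasing, showing by backward induction that $V_k(\alpha_{k+1},\dots,\alpha_n) = \max_{j \in \{k+1,\dots,n\}}(\mathbb E[S_j \mid \mathcal F_k] - \alpha_j) \vee (\text{something})$ collapses to a deterministic expression at $k = 0$.

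The main obstacle, therefore, is identifying the right supermartingale (equivalently, carrying out the backward recursion of Theorem \ref{EU main result} explicitly): the European constraint couples all dates through the auxiliary $\mathbb P$-supermartingales $N^i$, and the interplay between the non-decreasing benchmark, the piecewise-linear loss $l(x) = (-x)^+$, and the requirement that $M$ be a genuine supermartingale (not submartingale) is delicate. I expect the cleanest write-up proves the upper bound by backward induction on $k$ using Corollary \ref{Cor:dynamic}, exploiting that for $l(x) = (-x)^+$ the constraint $\mathbb E[l(M - S_{k+1})\mid \mathcal F_k] \leq \alpha_{k+1}$ can be met by a suitable truncation, and that $V_{k+1}$ is, by the induction hypothesis, a conditional expectation of a maximum which combines nicely with the new date-$(k+1)$ term precisely because $S$ is non-decreasing so that $\mathbb E[S_{k+1}\mid\mathcal F_k] \leq \mathbb E[S_j \mid \mathcal F_k]$ for all $j > k+1$. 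With that monotonicity the pointwise maximum defining $V_k$ telescopes down to the deterministic $\max_{j > k}(\mathbb E[S_j] - \alpha_j)$ at $k = 0$, matching the lower bound and closing the proof.
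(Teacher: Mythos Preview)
Your lower bound argument is correct and identical to the paper's: relax to a single constraint at date $k$, use the one-period result to get $V_0^{EU}\geq \mathbb E[S_k]-\alpha_k$, then maximize over $k$.

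The upper bound is where your proposal has a genuine gap. You try several explicit supermartingale candidates, correctly diagnose why each one fails (the max-of-conditional-expectations is a \emph{sub}martingale, the $-\min_{j\geq k}\alpha_j$ correction moves the wrong way, etc.), and finally fall back on ``carry out the recursion of Corollary \ref{Cor:dynamic} by backward induction.'' That last plan may well work, but you have not executed it, and the European recursion is genuinely awkward here because the auxiliary $\mathbb P$-supermartingales $N^{k+2},\dots,N^n$ at each step are free variables to be optimized over---you would have to identify the optimal $N^i$'s explicitly at every stage.

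The paper sidesteps all of this with one observation you are missing: for any $\mathbb Q$-supermartingale $M$ and any $k$, Jensen's inequality (together with $M_k\geq \mathbb E[M_n\mid\mathcal F_k]$) gives
\[
\mathbb E[(S_k-M_k)^+]\;\leq\;\mathbb E[(S_k-M_n)^+].
\]
Hence any supermartingale whose \emph{terminal} value $M_n$ satisfies $\mathbb E[(S_k-M_n)^+]\leq\alpha_k$ for every $k$ automatically lies in $\mathcal M_{EU}$, and the optimal such supermartingale is the martingale closure of $M_n$. This collapses the multi-period problem to the static one
\[
V_0^{EU}\;\leq\;\inf_{M\in\mathcal F_n}\Big\{\mathbb E[M]\;:\;\mathbb E[(S_k-M)^+]\leq\alpha_k,\ k=1,\dots,n\Big\},
\]
and because $S_1\leq\dots\leq S_n$ a.s., the right-hand side is handled by the elementary Lemma \ref{exploss.lm} in the appendix, which gives exactly $\max_k\{\mathbb E[S_k]-\alpha_k\}$. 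The monotonicity of $S$ is used only at this final static step (to order the objectives $Z_k=S_k$ in the lemma), not to build a supermartingale directly as you were attempting.
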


\begin{proof}
For any $k\in\{1,\cdots, n\}$, we have
\begin{align*}
V_0^{EU}&=\inf_{(M_k)_{k=0}^n\in\mathcal M_{EU}}\{ M_0: \mathbb E[(S_k-M_k)^+]\leq\alpha_k,\,k=1,\dots,n\} \\&
\geq \inf_{M\in \mathcal F_k}\{ \mathbb E[M]: \mathbb E[(S_k-M)^+]\leq
\alpha_k\} = \mathbb E[S_k] - \alpha_k,
\end{align*}
and therefore, 
$$
V_0^{EU} \geq \max_{k\in\{1,\dots,n\}} \{\mathbb E[S_k] - \alpha_k\}
$$
On the other hand, Jensen's inequality yields
$\mathbb E[(S_k-M_k)^+] \leq \mathbb E[(S_k-M_n)^+]$, so 
\begin{align*}
V_0^{EU}&\leq \inf_{(M_k)_{k=0}^n\in\mathcal M_{EU}}\{ M_0: \mathbb E[(S_k-M_n)^+]\leq
\alpha_k,\,k=1,\dots,n\} \\
&=\inf_{M\in \mathcal F_n}\{ \mathbb E[M]: \mathbb E[(S_k-M)^+]\leq
\alpha_k,\,k=1,\dots,n\}.
\end{align*}
Since $(S_t)_{0\leq t\leq n}$ is non-decreasing, then  \eqref{exploss} follows by Lemma \ref{exploss.lm} in Appendix.
\end{proof}

We have up to now considered risk constraints given by (conditional)
expected loss. In the literature and in practice, the risk constraint
is often described by using risk measures. Let us now extend our
framework to cover loss constraints expressed in terms of the
conditional Value at Risk. We once again assume that $\mathbb P =
\mathbb Q$ and denote by $\mathcal M_{\mathrm{CVaR}}$ the set of all supermartingales $(M_k)_{k=0}^n$ such that $\mathrm{CVaR}_\lambda[(S_k-M_k)^+]\leq\alpha_k$ for any $k\in\{1,\ldots,n\}$, where $\text{CVaR}_{\lambda}$ is the risk measure defined by
\begin{align}
\text{CVaR}_{\lambda} (X)= \sup_{\tilde{\mathbb P}} \{\mathbb E^{\tilde{\mathbb P}} [X]: \frac{d \tilde{\mathbb P}}{d\mathbb P}
\leq \frac{1}{1-\lambda}\},\label{cvarproba}
\end{align}
where $X \in L^1$ represents the loss.
It is also well known to  satisfy
$$
\text{CVaR}_{\lambda} (X) = \text{VaR}_{\lambda} (X) +
\frac{1}{1-\lambda} \mathbb E[(X-\text{VaR}_{\lambda} (X) )^+]
$$
and
\begin{align*}
\text{CVaR}_{\lambda} (X) = \inf_{z\in \mathbf R}
\left\{\frac{1}{1-\lambda}\mathbb E[(X-z)^+] + z\right\}.
\end{align*}
The minimum above is attained, so that one can
write 
\begin{align}
\text{CVaR}_{\lambda} (X) = \min_{z\in \mathbf R}
\left\{\frac{1}{1-\lambda}\mathbb E[(X-z)^+] + z\right\}.\label{varexp}
\end{align}
Moreover, the simplified formula
$$
\text{CVaR}_{\lambda} (X) = \mathbb E[X | X > \text{VaR}_{\lambda} (X)]
$$
holds whenever $X$ has continuous distribution.
 
We define
\begin{align}
\label{cvar}
V_0^{\text{CVaR}}:=\inf_{(M_k)_{k=0}^n\in\mathcal M_{\mathrm{CVaR}}}M_0
\end{align}

Let us start with some simple observations:
\begin{itemize}
\item From \eqref{cvarproba} one deduces immediately that 
  $\text{CVaR}_{\lambda} (X) \geq \mathbb E[X]$, which means (see the
  proof of Proposition
  \ref{incloss.cor}) that
\begin{align}
V_0^{\text{CVaR}} \geq \max_{k\in\{1,\dots,n\}} \{\mathbb E[S_k] - \alpha_k\}.\label{simpleobs}
\end{align}
\item For $n=1$, the choice $M^*_1 = S_1 - \alpha_1$ attains the
  optimal value in \eqref{exploss}. But in this case
  $\text{CVaR}_\lambda [(S_1-M_1)^+] = \alpha_1$, which implies that
  the lower bound above is also attained:
\begin{align*}
V_0^{\text{CVaR}} =  \mathbb E[S_1] - \alpha_1.
\end{align*}
\item Since in our case the loss is always positive, the equality
  \eqref{varexp} may be simplified as
\begin{align*}
\text{CVaR}_{\lambda} ((S-M)^+) =
\min_{z\geq 0}
\left\{\frac{1}{1-\lambda}\mathbb E[(S-M-z)^+] + z\right\}.
\end{align*}
\end{itemize} 

The following representation reduces the problem \eqref{cvar} to that
of hedging under expected loss constraints. 
\begin{proposition}\label{Glemma}
\begin{align*}
V_0^{\mathrm{CVaR}} =
\inf_{0\leq z \leq \alpha} G(z_1,\dots,z_n),
\end{align*}
where $0\leq z \leq \alpha$ is a shorthand for $z_k\in [0,\alpha_k]$, $k\in\{1\ldots n\}$ and
\begin{multline*}
G(z_1,\dots,z_n) \\ = \inf_{(M_k)_{k=0}^n \, \text{supermartingale}}\{M_0 : \frac{1}{1-\lambda}\mathbb E[(S_k - M_k - z_k)^+] + z_k \leq \alpha_k, \,k=1\ldots n\}
\end{multline*}
 
\end{proposition}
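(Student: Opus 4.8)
The plan is to prove the set identity
$$
\mathcal M_{\mathrm{CVaR}}=\bigcup_{0\le z\le\alpha}\Big\{(M_k)_{k=0}^n\ \text{supermartingale}:\ \tfrac{1}{1-\lambda}\mathbb E[(S_k-M_k-z_k)^+]+z_k\le\alpha_k,\ k=1\ldots n\Big\},
$$
after which the proposition follows at once: taking the infimum of $M_0$ over both sides, and using that the infimum over a union of sets equals the infimum of the partial infima, gives $V_0^{\mathrm{CVaR}}=\inf_{0\le z\le\alpha}G(z_1,\dots,z_n)$.

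The only analytic input is the variational representation \eqref{varexp}, in the simplified form valid here because the loss is nonnegative, namely $\mathrm{CVaR}_\lambda((S_k-M_k)^+)=\min_{z\ge 0}\{\tfrac{1}{1-\lambda}\mathbb E[(S_k-M_k-z)^+]+z\}$, with the minimum attained. I would first note that all terms are well defined: since $\mathbb E^{\mathbb Q}[|P_i|]<\infty$ and $\mathbb P=\mathbb Q$, the random variable $S_k$ and any integrable $M_k$ make $(S_k-M_k)^+\in L^1$. For the inclusion ``$\subseteq$'', let $(M_k)_{k=0}^n\in\mathcal M_{\mathrm{CVaR}}$; for each $k$ choose $z_k\ge 0$ attaining the minimum above, so that $\tfrac{1}{1-\lambda}\mathbb E[(S_k-M_k-z_k)^+]+z_k=\mathrm{CVaR}_\lambda((S_k-M_k)^+)\le\alpha_k$. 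Because the expected-loss term is nonnegative, this forces $z_k\le\alpha_k$, so $z=(z_1,\dots,z_n)$ satisfies $0\le z\le\alpha$ and $(M_k)_{k=0}^n$ meets every constraint defining $G(z_1,\dots,z_n)$. For ``$\supseteq$'', if $0\le z\le\alpha$ and $(M_k)_{k=0}^n$ is a supermartingale with $\tfrac{1}{1-\lambda}\mathbb E[(S_k-M_k-z_k)^+]+z_k\le\alpha_k$ for all $k$, then the infimum side of the representation (attainment is not needed in this direction) gives $\mathrm{CVaR}_\lambda((S_k-M_k)^+)\le\tfrac{1}{1-\lambda}\mathbb E[(S_k-M_k-z_k)^+]+z_k\le\alpha_k$, hence $(M_k)_{k=0}^n\in\mathcal M_{\mathrm{CVaR}}$.

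There is no genuine obstacle here; the statement is bookkeeping built on the variational description of CVaR. The one point that should be made explicit rather than glossed over is that for a fixed supermartingale the auxiliary parameters $z_k$ may be picked \emph{independently across dates} — legitimate because the $k$-th CVaR constraint involves only $M_k$ and $S_k$ — and that nonnegativity of the expected-loss term confines each optimal $z_k$ to $[0,\alpha_k]$, which is precisely what lets the outer infimum range over the compact box $\prod_k[0,\alpha_k]$ rather than over all of $\mathbb R_+^n$.
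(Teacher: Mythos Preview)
Your proof is correct and follows essentially the same approach as the paper: both directions rest on the variational representation \eqref{varexp} with its attained minimum, and your set-identity formulation is simply a cleaner packaging of the paper's two-inequality argument (which uses an $\varepsilon$-optimal supermartingale for the harder direction). You also make explicit the bound $z_k\le\alpha_k$ coming from nonnegativity of the expected-loss term, a point the paper leaves implicit.
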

\begin{proof}
For every $z_1,\dots,z_n$ fixed, the set $\mathcal M_{\mathrm{CVaR}}$ contains the set of supermartingales $(M_k)_{k=0}^n$ satisfying 
$\frac{1}{1-\lambda}
\mathbb E[(S_k - M_k - z_k)^+] + z_k \leq \alpha_k,\, k\in\{1\ldots n\}$,
which shows that
\begin{align*}
V_0^{\mathrm{CVaR}}\leq
\inf_{0\leq \bar z \leq \bar\alpha} G(z_1,\dots,z_n). 
\end{align*}
On the other hand, for any
$\varepsilon >0$, there exists an $\mathbb F$-supermartingale $(M_k)_{k=0}^n$ satisfying 
$\mathrm{CVaR}_\lambda [(S_k-M_k)^+]\leq
\alpha_k,\, k\in \{1,\dots,n\}$ and such that 
$$
M_0 < \varepsilon + V_0^{\mathrm{CVaR}} 
$$
Since the infimum in \eqref{exploss} is attained, we can also find
$z_1,\dots,z_n$ such that 
$\frac{1}{1-\lambda}
\mathbb E[(S_k - M_k - z_k)^+] + z_k \leq \alpha_k,\, k\in\{1\ldots n\}$. But
this means that 
\begin{align*}
V_0^{\mathrm{CVaR}} + \varepsilon >
\inf_{0\leq \bar z \leq \bar\alpha} G(z_1,\dots,z_n),
\end{align*}
which completes the proof since $\varepsilon$ is arbitrary. 
\end{proof}

We close this section by presenting a particular case when the
solution of problem \eqref{cvar} is completely explicit. 
\begin{proposition}
If $(S_k-\alpha_k)_{k=1}^n$ is an increasing
process, then
$$
V_0^{\mathrm{CVaR}} = \mathbb E[S_n] - \alpha_n. 
$$
\end{proposition}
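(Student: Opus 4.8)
The plan is to establish the two bounds $V_0^{\mathrm{CVaR}}\ge \mathbb E[S_n]-\alpha_n$ and $V_0^{\mathrm{CVaR}}\le \mathbb E[S_n]-\alpha_n$ separately; the lower bound is a direct consequence of the general estimate \eqref{simpleobs}, and the upper bound follows by exhibiting one explicit feasible supermartingale. Throughout I use, as in the rest of the section, that $\mathbb P=\mathbb Q$, that $\alpha_k\ge 0$, and that $S_n=\sum_i P_i$ is integrable.

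For the lower bound I would invoke \eqref{simpleobs}, which gives $V_0^{\mathrm{CVaR}}\ge \max_{k\in\{1,\dots,n\}}\{\mathbb E[S_k]-\alpha_k\}$. Since by hypothesis the process $(S_k-\alpha_k)_{k=1}^n$ is almost surely non-decreasing, taking expectations shows that $k\mapsto \mathbb E[S_k]-\alpha_k$ is non-decreasing, so the maximum is attained at $k=n$ and $V_0^{\mathrm{CVaR}}\ge \mathbb E[S_n]-\alpha_n$.

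For the upper bound I would take the martingale $M_k:=\mathbb E[S_n-\alpha_n\,|\,\mathcal F_k]$, $k=0,\dots,n$, which is in particular a $\mathbb Q$-supermartingale with $M_0=\mathbb E[S_n]-\alpha_n$, and check that it lies in $\mathcal M_{\mathrm{CVaR}}$. Since $(S_j-\alpha_j)_j$ is non-decreasing we have $S_n-\alpha_n\ge S_k-\alpha_k$ a.s.; conditioning on $\mathcal F_k$ and using that $S_k-\alpha_k$ is $\mathcal F_k$-measurable gives $M_k\ge S_k-\alpha_k$, i.e. $(S_k-M_k)^+\le \alpha_k$ a.s. By monotonicity of $\mathrm{CVaR}_\lambda$ and the identity $\mathrm{CVaR}_\lambda[\alpha_k]=\alpha_k$ --- both immediate from the dual representation \eqref{cvarproba} --- it follows that $\mathrm{CVaR}_\lambda[(S_k-M_k)^+]\le \alpha_k$ for every $k$, so $(M_k)_{k=0}^n\in\mathcal M_{\mathrm{CVaR}}$ and $V_0^{\mathrm{CVaR}}\le M_0=\mathbb E[S_n]-\alpha_n$. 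Combining the two bounds yields the claim.

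I do not expect a genuine obstacle here; the only point deserving care is the passage from the pathwise inequality $(S_k-M_k)^+\le\alpha_k$ to the risk-measure inequality, which rests on monotonicity of the coherent risk measure $\mathrm{CVaR}_\lambda$ (and on the fact that it fixes constants), not on Jensen's inequality. Alternatively, one may run the upper-bound argument through Proposition \ref{Glemma}: choosing $z_k=\alpha_k$ turns the constraint defining $G(\alpha_1,\dots,\alpha_n)$ into $\mathbb E[(S_k-M_k-\alpha_k)^+]\le 0$, i.e. $M_k\ge S_k-\alpha_k$ a.s., which the same martingale satisfies, so that $\inf_{0\le z\le\alpha}G\le G(\alpha_1,\dots,\alpha_n)\le \mathbb E[S_n]-\alpha_n$.
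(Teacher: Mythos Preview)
Your proof is correct. The lower bound is handled exactly as in the paper, via \eqref{simpleobs} together with the monotonicity of $k\mapsto \mathbb E[S_k]-\alpha_k$ that follows from the hypothesis.

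For the upper bound you take a different, more direct route than the paper. The paper uses the reduction of Proposition~\ref{Glemma}, $V_0^{\mathrm{CVaR}}=\inf_{0\le z\le\alpha}G(z_1,\dots,z_n)$, and then invokes Proposition~\ref{incloss.cor} (the expected-loss result for a non-decreasing benchmark) to compute $G(\alpha_1,\dots,\alpha_n)=\mathbb E[S_n]-\alpha_n$, which gives the upper bound. You instead exhibit the explicit martingale $M_k=\mathbb E[S_n-\alpha_n\mid\mathcal F_k]$, show $M_k\ge S_k-\alpha_k$ a.s., and conclude directly from monotonicity of $\mathrm{CVaR}_\lambda$ that $(M_k)\in\mathcal M_{\mathrm{CVaR}}$. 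This is more elementary, since it bypasses both Proposition~\ref{Glemma} and Proposition~\ref{incloss.cor} (and therefore Lemma~\ref{exploss.lm}); the paper's route, on the other hand, has the advantage of illustrating how the CVaR problem fits into the expected-loss framework developed in the section. Your alternative sketch via Proposition~\ref{Glemma} with $z_k=\alpha_k$ is essentially the paper's argument in the degenerate case where the expected-loss bound is zero, reducing to the same almost-sure constraint $M_k\ge S_k-\alpha_k$.
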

\begin{proof}
If the process
$(S_k-z_k)_{k=1}^n$ is increasing, by Proposition \ref{incloss.cor}, 
$$
G(z_1,\dots,z_n) = \max_{k\in\{1,\cdots,n\}} \{\mathbb E[S_k] - (1-\lambda) \alpha_k -
\lambda z_k\} = \mathbb E[S_n] - (1-\lambda)\alpha_n - \lambda z_n.
$$
Observe now that 
$$
G(\alpha_1,\dots,\alpha_n) = \mathbb E[S_n] - \alpha_n.
$$
Combining this with \eqref{simpleobs}, the proof is complete.  
\end{proof}

\appendix

\section*{Appendix}

\begin{lemma}\label{2dim}
Let $(\Omega,\mathcal F, \mathbb P)$ be a probability space, $X,Y \in \mathcal F$ and $\alpha,\beta \geq
0$. Then,
\begin{multline*}
V_0 = \inf_{M\in \mathcal F}\{ \mathbb E[M]: \mathbb E[(X-M)^+]\leq
\alpha,\mathbb E[(Y-M)^+]\leq \beta \} \\= \max\{\mathbb E[X-\alpha],
\mathbb E[Y-\beta], \mathbb E[X\vee Y - \alpha - \beta]\}. 
\end{multline*}
and
\begin{multline*}
V'_0 = \inf_{M\in \mathcal F}\{ \mathbb E[M]: \mathbb E[\max((X-M)^+-
\alpha,(Y-M)^+- \beta)]\leq 0 \} \\= \mathbb E[(X-\alpha)\vee (Y - \beta)]. 
\end{multline*}
\end{lemma}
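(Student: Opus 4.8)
The plan is to prove both identities by the same two‑sided scheme: a soft ``pointwise domination'' argument for the lower bounds, and an explicit construction of an optimal $M$ for the upper bounds. I assume tacitly $X,Y\in L^1$, so that $\mathbb E[M]$ makes sense for every admissible $M$ (from $M\ge X-(X-M)^+$ one gets $\mathbb E[M^-]<\infty$ whenever $\mathbb E[(X-M)^+]<\infty$).

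For the lower bound of the first formula I would use the three pointwise inequalities $M\ge X-(X-M)^+$, $M\ge Y-(Y-M)^+$ and $M\ge X\vee Y-(X-M)^+-(Y-M)^+$ (the last one because $X\vee Y-M\le(X-M)^+$ or $\le(Y-M)^+$ according to the sign of $X-Y$), take expectations and insert the two constraints; this gives $V_0\ge\max\{\mathbb E[X-\alpha],\mathbb E[Y-\beta],\mathbb E[X\vee Y-\alpha-\beta]\}$. The second statement is the easy one: for its lower bound use $M\ge(X-\alpha)\vee(Y-\beta)-\{(X-M)^+-\alpha\}\vee\{(Y-M)^+-\beta\}$ together with the constraint $\mathbb E[\{(X-M)^+-\alpha\}\vee\{(Y-M)^+-\beta\}]\le0$; for its upper bound take $M=(X-\alpha)\vee(Y-\beta)$ and note $(X-M)^+=\alpha\wedge(X-Y+\beta)^+\le\alpha$ and symmetrically, so that $M$ is admissible with $\mathbb E[M]=\mathbb E[(X-\alpha)\vee(Y-\beta)]$. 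This already gives $V'_0$.

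The heart is the upper bound for $V_0$, which I would get by an explicit construction split according to the ``capacities'' $\mathbb E[(X-Y)^+]$ and $\mathbb E[(Y-X)^+]$. If $\alpha\le\mathbb E[(X-Y)^+]$, set $\gamma_1=\alpha/\mathbb E[(X-Y)^+]\in[0,1]$, $\gamma_2=\min(\beta,\mathbb E[(Y-X)^+])/\mathbb E[(Y-X)^+]\in[0,1]$ (obvious convention when a denominator vanishes) and take $M=X\vee Y-\gamma_1(X-Y)^+-\gamma_2(Y-X)^+$, i.e. a convex combination of $X$ and $Y$ on each of $\{X\ge Y\}$ and $\{X<Y\}$, so $M\in[X\wedge Y,X\vee Y]$; a direct check gives $(X-M)^+=\gamma_1(X-Y)^+$, $(Y-M)^+=\gamma_2(Y-X)^+$, hence $M$ is admissible and $\mathbb E[M]=\mathbb E[X\vee Y]-\alpha-\min(\beta,\mathbb E[(Y-X)^+])$, which equals $\mathbb E[X\vee Y-\alpha-\beta]$ if $\beta\le\mathbb E[(Y-X)^+]$ and $\mathbb E[X-\alpha]$ otherwise. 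The case $\beta\le\mathbb E[(Y-X)^+]$ is symmetric. In the remaining ``overflow'' case $\alpha>\mathbb E[(X-Y)^+]$ and $\beta>\mathbb E[(Y-X)^+]$ I would take $M=X\wedge Y-c$ with $c=\min\bigl(\alpha-\mathbb E[(X-Y)^+],\,\beta-\mathbb E[(Y-X)^+]\bigr)>0$; then $(X-M)^+=(X-Y)^+ +c$ and $(Y-M)^+=(Y-X)^+ +c$, both constraints hold by the choice of $c$, and $\mathbb E[M]=\mathbb E[X\wedge Y]-c=\max\{\mathbb E[X-\alpha],\mathbb E[Y-\beta]\}$. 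In all cases the admissible $M$ produced has $\mathbb E[M]$ equal to one of the three quantities in the formula, hence $\le\max\{\ldots\}$; together with the lower bound of the previous paragraph this forces $V_0$ to be exactly that maximum.

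The main obstacle is the bookkeeping of this final case split and checking, in each case, that both inequalities $\mathbb E[(X-M)^+]\le\alpha$ and $\mathbb E[(Y-M)^+]\le\beta$ indeed hold — in particular in the overflow regime, where $M$ must go strictly below $X\wedge Y$ and it is precisely the choice $c=\min(\alpha-\mathbb E[(X-Y)^+],\beta-\mathbb E[(Y-X)^+])$ that makes the two constraints fall on the right side. A more conceptual but less elementary alternative would be convex duality: write $\mathbb E[(X-M)^+]=\sup_{0\le Z\le1}\mathbb E[Z(X-M)]$, exchange $\inf_M$ and $\sup$, and check that the dual problem (a linear program over $\{Z_1,Z_2\in[0,1]:pZ_1+qZ_2\equiv1,\ p,q\ge0\}$) has value $\max\{\mathbb E[X-\alpha],\mathbb E[Y-\beta],\mathbb E[X\vee Y-\alpha-\beta]\}$; there the difficulty moves to justifying the absence of a duality gap.
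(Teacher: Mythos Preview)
Your proof is correct and follows the same two–sided scheme as the paper: the same three pointwise inequalities for the lower bound of $V_0$, the same choice $M=(X-\alpha)\vee(Y-\beta)$ for $V'_0$, and an explicit optimizer for the upper bound of $V_0$.

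The only substantive difference is the organisation of the upper–bound construction for $V_0$. The paper splits cases according to which of the three candidate values is the maximum and, when $\mathbb E[X-\alpha]$ is the maximum, proposes $M^*=X\wedge Y-\alpha+\mathbb E[(X-Y)^+]$. You instead split according to the ``capacities'' $\mathbb E[(X-Y)^+]$ and $\mathbb E[(Y-X)^+]$, keeping the convex–combination form $M=X\vee Y-\gamma_1(X-Y)^+-\gamma_2(Y-X)^+$ whenever at least one capacity is not exceeded, and passing to the shift $M=X\wedge Y-c$ only in the genuine overflow regime. Your case split is in fact the more robust one: in the mixed regime $\alpha\le\mathbb E[(X-Y)^+]$, $\beta>\mathbb E[(Y-X)^+]$ (which does fall under the paper's second case) the paper's constant shift can fail the first constraint, since $\mathbb E[((X-Y)^+-(\mathbb E[(X-Y)^+]-\alpha))^+]$ may exceed $\alpha$ by Jensen, whereas your convex–combination construction gives exactly $\mathbb E[(X-M)^+]=\gamma_1\mathbb E[(X-Y)^+]=\alpha$. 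So your bookkeeping, which you flag as the main obstacle, is precisely what makes the argument watertight.
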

\begin{proof}
\emph{First part.}\quad 
By considering only one of the two constraints, we have clearly
$$
V_0 \geq \max\{\mathbb E[X-\alpha],
\mathbb E[Y-\beta]\}.
$$
In addition,
$$
(X-M)^+ + (Y-M)^+ \geq X\vee Y - M
$$
and so
$$
V_0 \geq \mathbb E[X\vee Y - \alpha - \beta].
$$
To finish the proof, we need to find $M^*$ which satisfies the
constraint and realizes the equality. 

Assume first that 
$$\mathbb E[X\vee Y - \alpha -
\beta] \geq \mathbb E[X-\alpha]\quad \text{et}\quad \mathbb E[X\vee Y - \alpha -
\beta] \geq \mathbb E[Y-\beta].$$ 
This is equivalent to  
$$\mathbb
E[(X-Y)^+]\geq \alpha\quad \text{and}\quad \mathbb E[(Y-X)^+]\geq
\beta.$$ 
In this case we can take
$$
M^* = X \vee Y - \alpha\frac{(X-Y)^+}{\mathbb E[(X-Y)^+]} - \beta
\frac{(Y-X)^+}{\mathbb E[(Y-X)^+]}
$$

If 
$$\mathbb E[X-\alpha]\geq \mathbb E[X\vee Y - \alpha -
\beta]  \quad \text{and}\quad \mathbb \mathbb E[X-\alpha] \geq \mathbb E[Y-\beta],
$$
we take 
$$
M^* = X\wedge Y-\alpha+ \mathbb E[(X-Y)^+].
$$ 
If finally
$$\mathbb E[Y-\beta]\geq \mathbb E[X\vee Y - \alpha -
\beta]  \quad \text{and}\quad \mathbb \mathbb E[Y-\beta] \geq \mathbb E[X-\alpha],
$$
then we can take 
$$
M^* = X\wedge Y-\beta+ \mathbb E[(Y-X)^+].
$$ 

\noindent \emph{Second part.}\quad Since $M = (X-\alpha)\vee
(Y-\beta)$ satisfies the constraint, $V'_0 \leq \mathbb E[(X-\alpha)\vee
(Y-\beta)]$. On the other hand,
\begin{align*}
V'_0 &= \inf_{M\in \mathcal F}\{ \mathbb E[M]: \mathbb E[\max((X-M)^+-
\alpha,(Y-M)^+- \beta)]\leq 0 \} \\ &\geq \inf_{M\in \mathcal F}\{ \mathbb E[M]: \mathbb E[\max(X-M-
\alpha,Y-M- \beta)]\leq 0 \} \\
&= \inf_{M\in \mathcal F}\{ \mathbb E[M]: \mathbb E[\max(X-
\alpha,Y- \beta)]\leq M \}\\
& = \mathbb E[(X-\alpha)\vee
(Y-\beta)].
\end{align*}
\end{proof}

We now consider the case when $n$ is arbitrary but the objectives are ordered.

\begin{lemma}\label{exploss.lm}
Let $(\Omega,\mathcal F, \mathbb P)$ be a probability space, $Z_1,\dots,Z_n \in \mathcal F$ with $Z_1\leq Z_2\leq\dots\leq Z_n
$ a.s. and $\alpha_1,\dots,\alpha_n \geq
0$. Then,
$$
\inf_{M\in \mathcal F}\{ \mathbb E[M]: \mathbb E[(Z_k-M)^+]\leq
\alpha_k,k=1,\dots,n\} = \max_{k=1,\dots,n} \{\mathbb E[Z_k] - \alpha_k\}
$$
\end{lemma}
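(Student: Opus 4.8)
The statement is an equality of two numbers; I would establish $\ge$ and $\le$ separately. For $\ge$: fix an index $k$ and drop every constraint except the $k$-th; any admissible $M$ then satisfies $M = Z_k-(Z_k-M) \ge Z_k-(Z_k-M)^+$, hence $\mathbb E[M]\ge \mathbb E[Z_k]-\mathbb E[(Z_k-M)^+]\ge \mathbb E[Z_k]-\alpha_k$. Taking the infimum over admissible $M$ and then the maximum over $k$ gives $\ge \max_k\{\mathbb E[Z_k]-\alpha_k\}=:c^\ast$. The substantive part is the reverse inequality, which I would prove by exhibiting an admissible $M^\ast$ with $\mathbb E[M^\ast]=c^\ast$, arguing by induction on $n$.

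The base case $n=1$ is $M^\ast=Z_1-\alpha_1$, and $n=2$ is contained in Lemma \ref{2dim}: for ordered objectives $Z_1\le Z_2$ the mixed term there equals $\mathbb E[Z_1\vee Z_2-\alpha_1-\alpha_2]=\mathbb E[Z_2]-\alpha_1-\alpha_2$, which is dominated by $\mathbb E[Z_2]-\alpha_2$ since $\alpha_1\ge 0$, so Lemma \ref{2dim} collapses to exactly the asserted formula. For the inductive step the crucial move is \emph{not} to try $M^\ast=Z_n-\alpha_n$ (which is infeasible in general), but to pick an index $m$ attaining $c^\ast=\mathbb E[Z_m]-\alpha_m$ and to look for $M^\ast$ of the form $M^\ast=Z_m-D$ with $D\ge 0$ and $\mathbb E[D]=\alpha_m$, so that $\mathbb E[M^\ast]=c^\ast$ automatically. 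Monotonicity of $(Z_k)$ makes the constraints with $k\ge m$ free of charge: there $(Z_k-M^\ast)^+=Z_k-Z_m+D\ge 0$ and $\mathbb E[Z_k-Z_m+D]=\mathbb E[Z_k]-c^\ast\le\alpha_k$. Writing $u_k:=Z_m-Z_k\ge 0$ for $k<m$ — a family nonincreasing in $k$ with $\mathbb E[u_k]\ge\alpha_m-\alpha_k$, which is precisely the statement that $m$ is a maximiser — the remaining requirements read $\mathbb E[(D-u_k)^+]\le\alpha_k$, $k=1,\dots,m-1$.

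It therefore remains to produce such a $D$. Substituting $D=u_1-M$ (with $u_1=Z_m-Z_1$), the constraint $D\ge 0$ becomes $M\le u_1$ and the conditions become $\mathbb E[(w_k-M)^+]\le\alpha_k$, $k=1,\dots,m-1$, where $w_k:=u_1-u_k$ satisfies $0=w_1\le w_2\le\dots\le w_{m-1}$; maximising $\mathbb E[D]$ amounts to minimising $\mathbb E[M]$. Ignoring the auxiliary bound $M\le u_1$, this is an instance of the present lemma with the $m-1\le n-1$ ordered objectives $w_1,\dots,w_{m-1}$, so the induction hypothesis supplies a minimiser $M$ with $\mathbb E[M]=\max_{k<m}\{\mathbb E[w_k]-\alpha_k\}=\mathbb E[u_1]-\min_{k<m}\{\mathbb E[u_k]+\alpha_k\}$; replacing $M$ by $M\wedge w_{m-1}$ does not change any $(w_k-M)^+$ (at any $\omega$ where $M>w_{m-1}\ge w_k$ both $w_k-M$ and $w_k-w_{m-1}$ are $\le 0$) nor increase $\mathbb E[M]$, so we may assume $M\le w_{m-1}\le u_1$, i.e. $D^\flat:=u_1-M\ge 0$. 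Then $\mathbb E[D^\flat]=\min_{k<m}\{\mathbb E[u_k]+\alpha_k\}\ge\alpha_m$, and scaling $D:=tD^\flat$ with $t=\alpha_m/\mathbb E[D^\flat]\in(0,1]$ forces $\mathbb E[D]=\alpha_m$ while only shrinking $D^\flat$, hence only relaxing the constraints (the degenerate case $\mathbb E[D^\flat]=0$ forces $\alpha_m=0$, and $D=0$ works). Unwinding the substitutions gives the desired $M^\ast=Z_m-D$. The main obstacle, to my mind, is precisely this reduction step: realising that after peeling off $Z_m$ and changing variables one lands back on the same optimisation problem with fewer objectives, getting the signs right so that the induction hypothesis applies, and the small bookkeeping (the truncation at $w_{m-1}$ and the scaling) that turns the abstract minimiser into an honest $D\ge 0$ with the exact mean $\alpha_m$.
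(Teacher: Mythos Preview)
Your argument is correct. The lower bound is the same one-line observation the paper makes, and your inductive construction for the upper bound does produce an admissible $M^\ast$ with $\mathbb E[M^\ast]=c^\ast$. Two small points you should make explicit: (i) the case $m=1$ (the maximiser is the first index) is not covered by your reduction to $m-1$ objectives, but it is trivial since $M^\ast=Z_1-\alpha_1$ then works for all $k$; (ii) your induction hypothesis must include attainment of the infimum, which is fine because you construct a minimiser at every step.

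The route, however, is genuinely different from the paper's. The paper first prunes: whenever $\mathbb E[Z_k]-\alpha_k\ge\mathbb E[Z_{k+1}]-\alpha_{k+1}$ the $(k{+}1)$-th constraint is implied by the $k$-th (because $(Z_{k+1}-M)^+\le(Z_k-M)^++Z_{k+1}-Z_k$), so one may assume the sequence $\mathbb E[Z_k]-\alpha_k$ is strictly increasing and the maximum sits at $k=n$. Then the paper writes down an explicit minimiser in closed form: either $M=Z_1-C$ if $\mathbb E[Z_n-Z_1]<\alpha_n$, or $M=wZ_{k^\ast}+(1-w)Z_{k^\ast+1}$ for the largest $k^\ast$ with $\mathbb E[Z_n-Z_{k^\ast}]\ge\alpha_n$ and the unique $w\in[0,1]$ making $\mathbb E[M]=\mathbb E[Z_n]-\alpha_n$; monotonicity of the $Z_i$ then makes every constraint either vacuous ($k\le k^\ast$) or an equality/strict slack ($k>k^\ast$). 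So the paper trades your recursion for a single preprocessing step plus a one-shot convex combination; it is shorter and yields a concrete $M^\ast$ without nesting, while your change-of-variables reduction is more structural and would adapt more readily if one wanted to argue abstractly in related settings.
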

\begin{proof}${}$
\begin{enumerate}
\item Assume that there exists $k\in\{1,\cdots, n\}$ such that  
\begin{align}
\mathbb E[Z_k] - \alpha_k \geq  \mathbb E[Z_{k+1}] - \alpha_{k+1}\label{wrong}
\end{align}
Then the constraint $\mathbb E[(Z_{k} - M)^+] \leq \alpha_{k}$ implies
$\mathbb E[(Z_{k+1} - M)^+] \leq \alpha_{k+1}$ since
$$
\mathbb E[(Z_{k+1} - M)^+] \leq \mathbb E[(Z_{k} - M)^+] + \mathbb
E[Z_{k+1}-Z_k] \leq \alpha_{k} + \alpha_{k+1} - \alpha_{k} = \alpha_{k+1}.
$$
We can then remove the constraint $\mathbb E[(Z_{k+1} - M)^+]
\leq \alpha_{k+1}$ without modifying the value function. Repeating the
same argument for all other indices $k$ satisfying \eqref{wrong},
we can assume with no loss of generality that,
\begin{align}
\mathbb E[Z_1] - \alpha_1 < \dots< \mathbb E[Z_n] - \alpha_n.\label{right}
\end{align}
We then need to prove that
$$
\inf_{M\in \mathcal F}\{ \mathbb E[M]: \mathbb E[(Z_k-M)^+]\leq
\alpha_k,\,k=1,\dots,n\} = \mathbb E[Z_n] - \alpha_n.
$$
\item Removing all the constraints except the last one, it is easy to
  see that 
$$
\inf_{M\in \mathcal F}\{ \mathbb E[M]: \mathbb E[(Z_k-M)^+]\leq
\alpha_k,\,k=1,\dots,n\} \geq \mathbb E[Z_n] - \alpha_n.
$$
To finish the proof, it is then enough to find $M$ which satisfies the
constraints and is such that $\mathbb E[M] = \mathbb E[Z_n] - \alpha_n$.
\item If $\mathbb E[Z_n - Z_1]\geq \alpha_n$, we let 
$$
k^* = \max\{i: \mathbb E[Z_n - Z_i] \geq \alpha_n\}
$$ 
and 
$$
M:= w Z_{k^*} + (1-w) Z_{k^*+1},\quad w = \frac{\mathbb E[Z_{k^*+1}] -
  \mathbb E[Z_n] +
  \alpha_n}{\mathbb E[Z_{k^*+1}] - \mathbb E[Z_{k^*}]} \in [0,1]. 
$$
Otherwise, let $k^* = 0$ and
$$
M := Z_1 - C,\quad C = \alpha_n - \mathbb E[Z_n-Z_1]. 
$$
By construction,
$
\mathbb E[M] = \mathbb E[Z_n] - \alpha_n
$
and since $M \leq Z_n$, also $\mathbb E[(Z_n - M)^+] =
\alpha_n$. To check the other constraints, observe that if
$k\leq k^*$ then $M\geq Z_k$ a.s. and so $\mathbb E[(Z_k - M)^+] =
0$. On the other hand, if $k>k^*$ then $M\leq Z_k$ a.s. and so  by \eqref{right}
$$\mathbb E[(Z_k-M)^+] = \mathbb E[Z_k] - \mathbb E[M] = \mathbb E[Z_k]
- \mathbb E[Z_n] + \alpha_n < \alpha_k.$$
\end{enumerate}
\end{proof}

\end{document}